\definecolor{darkgreen}{rgb}{0,0.4,0}
\def\d{\delta}
\def\d{\delta}
\newtheorem{theorem}{Theorem}
\newtheorem{proposition}[theorem]{Proposition}
\newtheorem{corollary}[theorem]{Corollary}
\newtheorem{lemma}[theorem]{Lemma}
\title{Piecewise linear and Boolean models of chemical reaction networks}
\author[a,b]{Alan Veliz-Cuba\footnote{Correspondence Author\\
Alan Veliz-Cuba (\texttt{alanavc@math.uh.edu}),
 Ajit Kumar (\texttt{ajit.kumar@snu.edu.in}), Kre\v{s}imir Josi\'{c} (\texttt{josic@math.uh.edu}) 
 }
 }
\author[a,c]{Ajit Kumar}
\author[a,d]{Kre\v{s}imir Josi\'{c}}
\affil[a]{Department of Mathematics, University of Houston, Houston, Texas 77204-3008, USA}
\affil[b]{Department of Biochemistry and Cell Biology, Rice University, Houston, Texas 77251-1892, USA}
\affil[c]{Department of Mathematics, Shiv Nadar Univerisity, Greater Noida, Uttar Pradesh 203207, India}
\affil[d]{Department of Biology and Biochemistry, University of Houston, Houston, Texas 77204-5001, USA}
\begin{document}
\maketitle
\abstract{Models of biochemical networks are frequently high-dimensional and complex. Reduction 
methods that preserve important dynamical properties are therefore essential in their study. 
Interactions between the nodes in such networks are frequently modeled using a Hill function, $x^n/(J^n+x^n)$.  Reduced  ODEs and Boolean networks have been studied extensively when  
the exponent $n$ is large. 
However, the case of small constant $J$ appears in practice, but is not well understood. In this paper we provide a mathematical analysis 
of this limit, and show that a reduction to a set of piecewise linear ODEs and Boolean networks can be mathematically justified. The piecewise linear systems have closed form solutions that closely track those of the fully nonlinear model.  On the other hand, the simpler, Boolean network can be used to study the qualitative behavior of the original system. We justify the reduction using geometric singular perturbation theory and compact convergence, and illustrate the results in networks modeling a genetic switch and a genetic oscillator.
}

\section{Introduction}

Accurately describing the behavior of interacting enzymes, proteins, and genes
requires spatially extended stochastic models.  However, such models are
difficult to implement and fit to data. Hence tractable reduced models are frequently used instead. 
In many popular models of biological networks, a single ODE is used 
to describe each node, and sigmoidal functions to describe interactions between them.  Even such simplified ODEs are typically intractable, as the number of parameters and
 the potential dynamical complexity make it difficult to analyze the behavior of the system using purely  numerical methods.
   Reduced models that capture the overall dynamics, or allow approximate solutions
can be of great help in this situation [\cite{Verhulst_GSPT,Hek2010}].

Analytical treatments are possible in certain limits.  The
approaches that have been developed  to analyze models of gene interaction
networks can be broadly classified into three
categories~[\cite{PolynikisHoganBernardo2009}]:  \emph{Quasi Steady State Approximations} (QSSA), \emph{Piecewise Linear Approximations} (PLA), and
 \emph{discretization of continuous time
ODEs}. In particular, in certain limits interactions between network elements become
switch--like~[\cite{kauffman69,Snoussi1989,Mochizuki2005, Alon2006,mendoza2006, DavitichBornholdt2008, Wittmann2009, Franke2010, Veliz-CubaArthurHochstetlerKlompsKorpi2012}].   For instance,
the Hill function,   $f(x) = x^n / (x^n + J^n )$,  approaches the Heaviside function, $H(x - J)$,
in the limit of large $n$.
In this limit the domain on which the network is modeled
is naturally split into subdomains: The threshold,
corresponding to the parameter $J$ in the Hill function, divides the domain into two subdomains within which
the Heaviside function is constant.  Within each subdomain a node is either
fully expressed, or
not expressed at all.  When $n$ is large, the Hill function, $f(x)$, is approximately constant in each
of the subdomains, and boundary layers occur when $x$ is
close to the threshold, $x \approx J$~[\cite{IroniPanzeriPlahteSimoncini2011}]. To simplify the system further, we can map values of $x$ below the threshold to 0, and the values above the threshold to 1 to obtain a Boolean network (BN); that is, a map
\[h=(h_1,\ldots,h_N):\{0,1\}^N\rightarrow \{0,1\}^N,\]
where each function $h_i$ describes how variable $i$ qualitatively depends on the other variables
 [\cite{GlassKauffman1973,Snoussi1989,thomasbook,edwards2000,edwards2001}].  Such reduced systems are simpler to analyze, and share the  dynamical properties of the original system, if the reduction is done properly.

The reduced models obtained in the limit of a large Hill coefficient, $n,$ have a long and rich history. Piecewise linear
functions of the form proposed in~[\cite{GlassKauffman1973}] have been shown to
be well suited for the modeling of genetic regulatory networks, and can sometimes be justified rigorously [\cite{de04}].  In particular,  singular perturbation
theory  can
be used to obtain reduced equations within each subdomain and the boundary
layers,
and global approximations   within the entire
domain~[\cite{IroniPanzeriPlahteSimoncini2011}]. On the other hand, although BNs have been used to model the dynamics of different biological systems, their relation to more complete models
was mostly demonstrated with case studies, heuristically or only for steady states [\cite{GlassKauffman1973,Snoussi1989, thomasbook, albert2003, mendoza2006, DavitichBornholdt2008, p53, p53ode, Wittmann2009, Franke2010, Veliz-CubaArthurHochstetlerKlompsKorpi2012}].

Here we again start
with the Hill function, $ x^n / (x^n + J^n )$, but instead of assuming that $n$ is large, we assume that $J$ is small. This case
 has a simple physical
interpretation: Consider the Hill function that occurs in the Michaelis-Menten
scheme, which models the catalysis of the inactive form of some
protein to its active form in the presence of an enzyme. When $J$ is small
the total enzyme concentration is much smaller than the total protein
concentration.
Although the subsequent results hold for any fixed $n$, for simplicity we
assume $n = 1$.

More precisely, we consider a model biological 
network where the activity at each of $N$ nodes is described by $u_i \in [0,1]$, and evolves according to
\begin{align}\label{eqn:ProblemEquation}
	\frac{du_i}{dt} = A_i\frac{1-u_i}{J_{i}^A+1-u_i}-I_i\frac{u_i}{J_{i}^I+u_i},
\end{align}
where $J_i^A, J_i^I>0$, and the functions $A_i=A_i(u)$, $I_i=I_i(u)$ are affine functions.

This type of equations have been used successfully in many models [\cite{GoldbeterKoshland1981, Goldbeter1991, NovakPatakiCilibertoTyson2001, de02, NovakPatakiCilibertoTyson2003, IshiiSugaHagiyaWatanabeMoriYoshinoTomita2007, CilibertoFabrizioTyson2007, DavitichBornholdt2008, vanZwietenRoodaArmbrusterNagy2011}]. Here  $A_i$ and $I_i$  describe how the other variables affect $u_i$  and can represent activation/phosphorylation/ production and inhibition/dephosphorylation/decay, respectively. The variables $u_i$ can represent species such as protein concentrations, the active form of enzymes, or activation level of genes.  A simple example is provided by a  protein that can exist in an unmodified form, $W,$ and a modified form, $W^*,$ (e.g. proteases, and Cdc2, Cdc25, Wee1, and Mik1 kinases [\cite{Goldbeter1991, Novak1998, NovakPatakiCilibertoTyson2001}]) where the conversion between the two forms is catalyzed by two enzymes, $E_1$ and $E_2$ [\cite{GoldbeterKoshland1981, Goldbeter1991,Novak1998, NovakPatakiCilibertoTyson2001}] (See Appendix for details). 
However, note that the models of chemical reactions we consider can be rigorously derived from the Chemical Master Equation only in the case of a single reaction~[\cite{KumarJosic2011}]. The models of networks of chemical reactions that we take as the starting point of 
our reduction should therefore be regarded as phenomenological. 

It is easy to show that the region $0\leq u_i\leq 1, 1 \leq i \leq N$ is invariant so that Eq.~\eqref{eqn:ProblemEquation} is a system of equations on $[0,1]^N$. Equations involving this special class of Hill functions are generally referred to as
Michaelis-Menten type equations, and $J$  the Michaelis-Menten 
constant~[\cite{MichaelisMenten1913, GoldbeterKoshland1981, Goldbeter1991, NovakTyson1993, NovakPatakiCilibertoTyson2001, NovakPatakiCilibertoTyson2003, CilibertoFabrizioTyson2007, DavitichBornholdt2008, ChaoTang2009}]. 

The  constants $J$ are frequently very small in practice [\cite{NovakPatakiCilibertoTyson2001,DavitichBornholdt2008}], which motivates examining Eq.~\eqref{eqn:ProblemEquation} 
when  $0 < J \ll 0$. In this case, we discuss a two step reduction of the model 
$$
\text{full, nonlinear model $\longrightarrow$ piecewise linear model  (PL) $\longrightarrow$ Boolean Network (BN)}.
$$ 
We first illustrate this reduction using two standard examples, and then provide a general mathematical justification. We note  that the reduction obtained in the first step (see Eq.~\eqref{eqn:mainReduced_RSTa}) is actually (algebraic) piecewise affine.
However, it is customary  to refer to the equation and the associated model as \emph{piecewise linear} [\cite{GlassKauffman1973,Snoussi1989, thomasbook,edwards2000,de02}], and we follow this convention.

The main idea behind the piecewise linear (PL) reduction is simple:
If $J \ll x$ then the Hill functions, $f(x) = x / (x + J ) \approx 1$. However, when $x$ and $J$ are comparable, $x \sim J,$ this is no longer true.  In this boundary layer, we  rescale variables by introducing $\tilde{x} := x/J$. A similar argument works for the function $(1-x) / (J+1-x )$ (see Appendix). We show that using this observation, the domain $[0,1]^N$ naturally decomposes  into a nested sequence of hypercubes.  The dynamics on each hypercube in the sequence is described by a solvable differential-algebraic system of equations.  The PL reduction therefore gives an \emph{analytically tractable} approximate
solution to the original system.

In the next step of the reduction we obtain a Boolean Network (BN): The PL  approximation is used to divide $[0,1]^N$ into chambers.   Within nearly all of a chamber the rate of change of each element of the network is constant when $J \ll 1$.   We use these chambers to define a BN. A similar approach was recently used to motivate a Boolean reduction of a model protein interaction
network~[\cite{DavitichBornholdt2008}].

The mathematical justification also follows two steps.  We use Geometric Singular Perturbation Theory (GSPT) in Section \ref{sec:math_PL} to justify the PL approximation. 
The justification of the BN reduction is given in Section \ref{sec:math_BN}.
We show that  there is a one-to-one correspondence between steady states (equilibrium solutions) of the BN and the full and PL system near the vertices of $[0,1]^N$. Futhermore, we show that this  one-to-one correspondence between steady states is actually global (up to a set of small measure in $[0,1]^N$). BNs have been used to study oscillatory behavior [\cite{Li_cc_2004,p53}], and we prove in Section \ref{sec:math_BN_trajectories} that under some conditions oscillations in a BN correspond to oscillations in the full system.

\section{Example problems}\label{ExampleProblems}

We start by demonstrating the main idea of our approach using networks of
two and three mutually repressing nodes. These nodes
can represent genes that mutually inhibit each other's
production~[\cite{GardnerCantorCollins2000,ElowitzLeibler2000}].  
However, 
the theory we develop applies whenever the heuristic model given in Eq.~\eqref{eqn:ProblemEquation} is applicable.
We accompany these examples with a 
heuristic explanation of the different steps in the
reduction. 

\begin{figure}[t]
\begin{center}
\includegraphics[scale = .35]{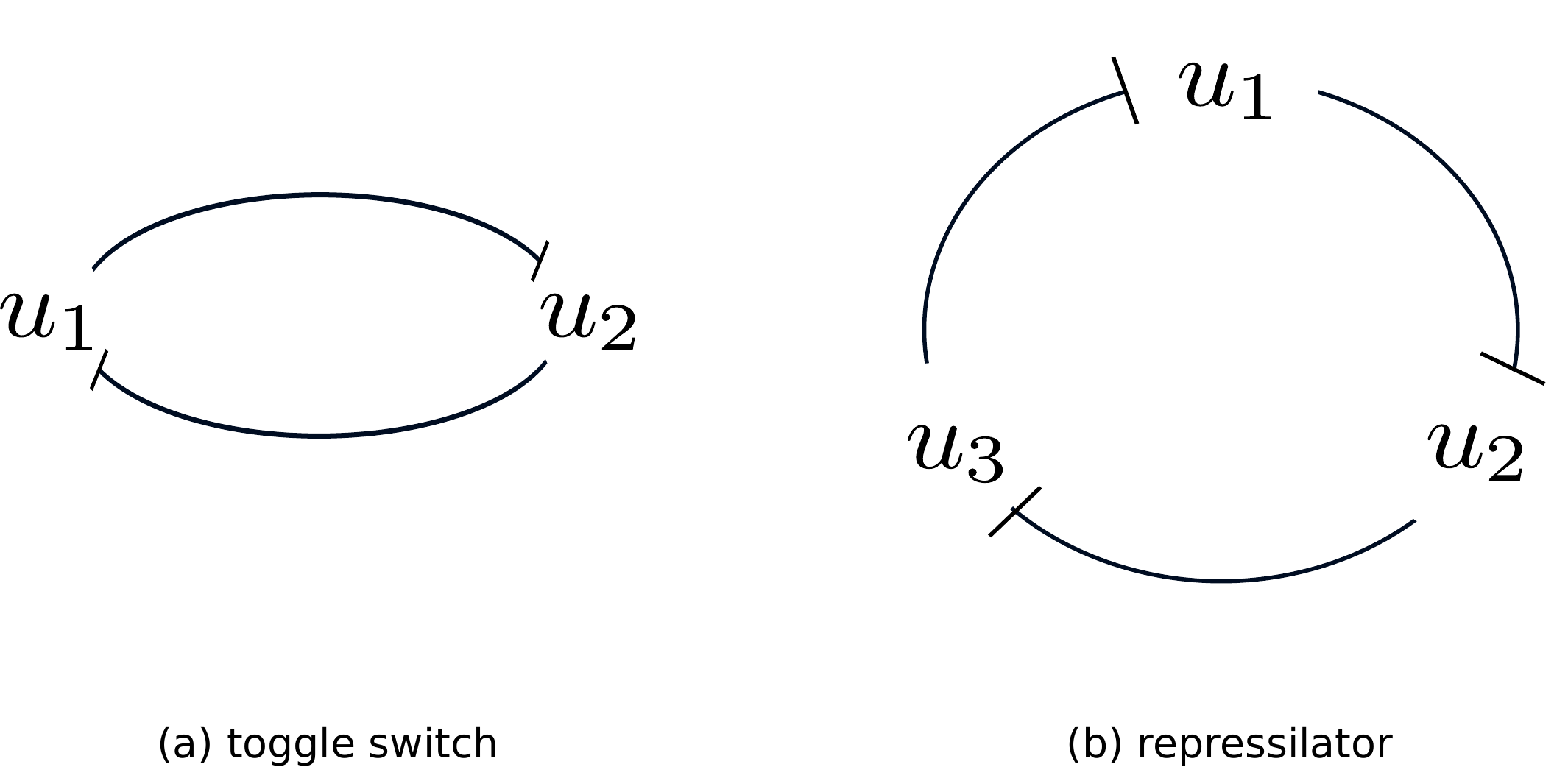}
\parbox{.9\textwidth}{
\caption{\footnotesize {(a) Nodes $u_1$, $u_2$ inhibiting each others activity resulting in a switch. The node which starts out stronger suppresses the activity of the other. (b) Nodes $u_1$, $u_2$, and $u_3$ suppress each other in a cyclic fashion. Under certain conditions, this can lead to oscillations.}%
}\label{fig:toggle_repressillator}}
\end{center}
\end{figure}

\subsection{A network of two mutually inhibiting elements}

We start with the common \emph{toggle switch} motif, \emph{i.e} a network of
two mutually repressing elements (see Fig.~\ref{fig:toggle_repressillator}a) ~[\cite{NovakPatakiCilibertoTyson2003,GardnerCantorCollins2000}]. Let
$(u_1,u_2) \in [0,1]^2$ represent the normalized levels of activity at the two nodes. 
Therefore, when $u_i = 1$ the $i^{\text{th}}$
network element is maximally active (expressed). The activity of the two nodes in the system 
can be modeled by
\begin{align}\label{EquationForToggleSwitch}
\begin{split}
\frac{du_1}{dt} &= 0.5\frac{1-u_1}{J+1-u_1}-u_2\frac{u_1}{J+u_1},  \\
\frac{du_2}{dt} &= 0.5\frac{1-u_2}{J+1-u_2}-u_1\frac{u_2}{J +u_2},
\end{split}
\end{align}
where $J$ is some positive constant. The structure of
Eq.~\eqref{EquationForToggleSwitch} implies that the cube $[0,1]^2 = \{(u_1, u_2)
\,|\, 0 \le u_1, u_2 \le 1 \}$ is invariant (see
Proposition~\ref{prop:invariantCube}).

\subsubsection{Piecewise linear approximation}

In the limit of small $J$, Eq.~\eqref{EquationForToggleSwitch}  can be
approximated by a piecewise linear differential equation:
If $u_i$ is not too close to zero the expression $u_i/(J+u_i)$ is approximately unity.
  More precisely, we fix a small $\d > 0 $, which will be chosen to depend on
$J$.  When $u_i > \d$ and $J$ is small then $u_i/(J+u_i) \approx1$. Similarly, when 
$u_i > 1-\d$ then  $(1-u_i)/(J+1-u_i) \approx 1$.

With this convention in mind we  break the cube $[0,1]^2$ into several
subdomains, and define
a different reduction of Eq.~\eqref{EquationForToggleSwitch} within each. Let $\mathcal{R}_S^T$ to denote the region where $S$ is the set of variables that
are close to 0, and $T$ is the set of variables close to 1 (See Table~\ref{table:2d} and Eq.~\eqref{Rdef}). Also, we omit the curly brackets and commas in $\mathcal{R}_S^T$ (e.g.  $\mathcal{R}^{\{\}}_{\{\}}=\mathcal{R}$ and $\mathcal{R}_{\{1\}}^{\{\}}=\mathcal{R}_1$) (see Fig.~\ref{fig:subdomains}). 

\begin{figure}[h]
\begin{center}
\includegraphics[scale = .65]{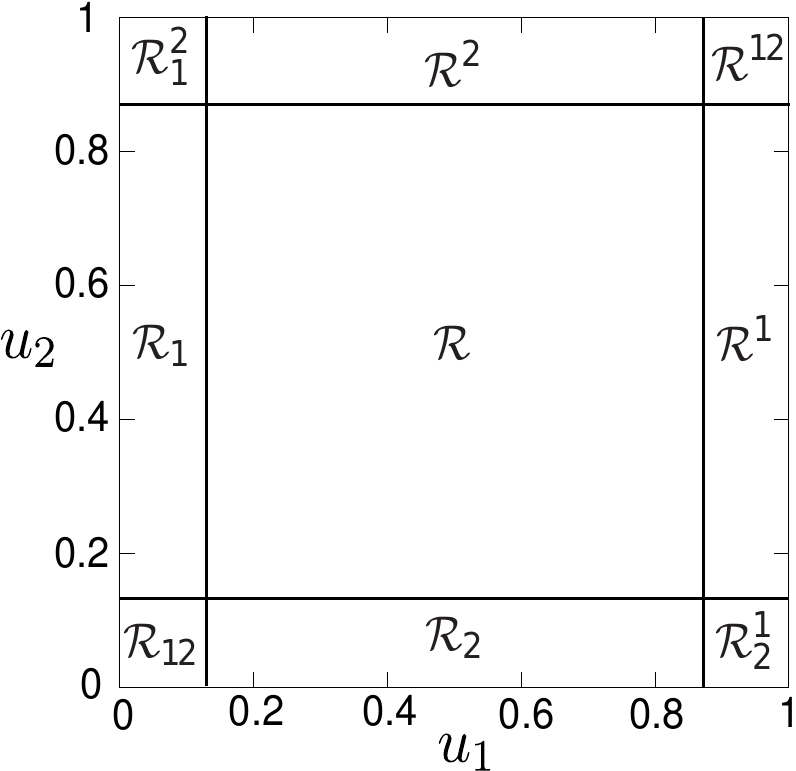}
\parbox{.9\textwidth}{
\caption{\footnotesize
Subdomains $\mathcal{R}_S^T$ for the unit square $[0,1]^2$. 
\label{fig:subdomains}}
}
\end{center}
\end{figure}

We first reduce Eq.~\eqref{EquationForToggleSwitch} on each of the subdomains.
The interior of the domain $[0,1]^2$ consist of points where neither coordinate is close to 0 nor 1, and
 is defined by
\begin{align}\label{R00}
\mathcal{R}:= \{(u_1,u_2) \in [0,1]^2 \,|\, \d \le u_1 \le 1-\d \text{ and }
\d \le u_2 \le 1-\d\}.
\end{align}
 Eq.~\eqref{EquationForToggleSwitch}, restricted to $\mathcal{R}$ is
approximated by the linear differential equation
\begin{align}\label{interiorLinear2d}
\frac{du_1}{dt} = 0.5-u_2, \quad
\frac{du_2}{dt} = 0.5-u_1.
\end{align}
 
On the other hand, if one of the coordinates is near the boundary, while the
other is in the interior, the approximation is different. For instance, the
region
\begin{align}\label{R20}
\mathcal{R}_2:= \{(u_1,u_2) \in [0,1]^2 \,|\,  u_2 < \d \text{ and } \d \le
u_1 \le 1-\d\},
\end{align}
forms a boundary layer where $u_2$ is of the same order as $J$. 
 
The term $u_2/(J + u_2)$ cannot be approximated by unity.  Instead the approximation takes the form
\begin{subequations}\label{edgeLinear2d}
\begin{align}
\frac{du_1}{dt} &= 0.5-u_2, \label{edgeLinear2da} \\
\frac{du_2}{dt} &= 0.5-u_1 \frac{u_2}{J + u_2}. \label{edgeLinear2db}
\end{align}
\end{subequations}
This equation can be simplified further.  Since $\mathcal{R}_2$ is invariant (for $u_1>.5$), $\frac{d u_2}{dt}$ must be small inside the boundary layer
$\mathcal{R}_2$ (see Fig.~\ref{fig:twoDimension_nullclines}).  We therefore use the approximations $u_2 \approx 0$ in 
Eq.~\eqref{edgeLinear2da} and $\frac{d u_2}{dt}
\approx 0$ in Eq.~\eqref{edgeLinear2db} to obtain
\begin{subequations}\label{smallu}
\begin{align}
\frac{du_1}{dt} &= 0.5, \label{smallua}\\
0 &= 0.5-u_1 \frac{u_2}{J + u_2}.\label{smallub}
\end{align}
\end{subequations}
Note that Eq.~\eqref{smallua} is linear  and decoupled from Eq.~\eqref{smallub},
while Eq.~\eqref{smallub} is an algebraic system which can be solved to obtain
$u_2 \approx J/(2u_1 - 1)$.
Within $\mathcal{R}_2$ we thus obtain the approximation 
\begin{subequations}\label{smallu2}
\begin{align}
u_1(t) & = 0.5 t  + u_1(0)  \label{E:smalleqa}\\
u_2(t) & = \frac{J}{ t + 2 u_1(0) - 1}  \label{E:smalleqb}
\end{align}
\end{subequations}

We only have the freedom of specifying
the initial condition $u_1(0)$, since  $u_2(0)$ is determined by the solution
of the algebraic equation~\eqref{smallub}.  As we explain below, this algebraic
equation defines a slow manifold within the subdomain $\mathcal{R}_2$.  The
reduction assumes that solutions are instantaneously attracted to this manifold. 

Table~\ref{table:2d} shows how this approach can be extended to all of $[0,1]^2$. 
There are  9  subdomains of the cube, one corresponding to the interior and four each to the 
edges and vertices.  On the latter eight subdomains, one or both variables are close to either 0 or 1.
Following the preceding arguments, variable(s)
close to 0 or 1 can be described by an algebraic equation. The resulting algebraic-differential systems are given in the last column of
Table~\ref{table:2d}. 
Furthermore, by using the approximations $u_i(t)\approx 0$ for $i\in S$ and $u_i(t)\approx 1$ for $i\in T$, we obtain a simple approximation of the dynamics in each subdomain which is $0$-th order in $J$. For example, in $\mathcal{R}_2$, we obtain the approximation $u_1(t) \approx 0.5t + u_1(0), u_2(t) \approx 0$.

Each approximate solution can potentially exit the subdomain within
which it is defined if at some time $u_i\approx 0$ or $u_i\approx 1$ and the $i$-th coordinate of the vector field is positive or negative, respectively. This can happen when the sign of some entry of the vector field changes; that is, solutions can exit subdomains when they reach a nullcline. Also, solutions can leave the subdomain if they started on the other side of the nullcline to begin with.
The global approximate solution of Eq.~\eqref{EquationForToggleSwitch} is 
obtained by using the exit point from one subdomain as the initial condition for
the approximation in the next.  In
subdomains other than $\mathcal{R}$ some of the initial conditions will be
prescribed by the algebraic part of the reduced system.  The global
approximation may therefore be discontinuous, as solutions entering a new
subdomain are assumed to instantaneously jump to the slow manifold defined by
the algebraic part of the reduced system.
 Fig.~\ref{fig:twoDimension} shows that when $J$ is small, this approach
provides a good approximation.

\begin{table}[t] 
\[
\begin{array}{c|c|c|rcl}
\hline
\text{Name of the subdomain}		&  u_1 	& 	u_2	&
\multicolumn{3}{c}{\text{Approximating system}}  \\
\hline
\multirow{2}{*}{$\mathcal{R}$}	&  \multirow{2}{*}{$ \d \le u_1 \le 1-\d
$}	&  \multirow{2}{*}{ $ \d \le u_2 \le 1-\d $}	&   	 u_1'	&=&
0.5-u_2,			\\
								&		
					&					
  		 &    u_2'		&=& 0.5-u_1			\\
\hline
\multirow{2}{*}{$\mathcal{R}^1$}	& \multirow{2}{*}{$ u_1 > 1-\d $} 	
& \multirow{2}{*}{$ \d \le u_2 \le 1-\d $}		&  	           	
  0	&=&  \displaystyle 0.5\frac{1-u_1}{J+1-u_1}-u_2,	\\
								&		
					&					
		 &  	 u_2'  &=& -0.5				       \\
\hline
\multirow{2}{*}{$ \mathcal{R}^2$}     & \multirow{2}{*}{$ \d \le u_1 \le 1-\d
$}		& \multirow{2}{*}{$ u_2 > 1-\d $}	&  u_1'	&=&	 -0.5,	
				\\
								&		
					&					
	  	 &                         0	&=&	\displaystyle
0.5\frac{1-u_2}{J+1-u_2}-u_1	\\
\hline
\multirow{2}{*}{$ \mathcal{R}_1 $}	& \multirow{2}{*}{$ u_1 < \d $} 
& \multirow{2}{*}{$ \d \le u_2 \le 1-\d $}    	& 			
0 	 &=& \displaystyle 0.5-u_2\frac{u_1}{J+u_1},  \\
								&		
					&					
	   	 & 		u_2'&=& 0.5  \\
\hline
\multirow{2}{*}{$ \mathcal{R}_2	$}	& \multirow{2}{*}{$ \d \le u_1
\le 1-\d $}		&   \multirow{2}{*}{$u_2 <  \d $}	& 	u_1'	
&=& 0.5 ,  \\
								&		
					&					
		 &			0		&=& \displaystyle 0.5
-u_1\frac{u_2}{J +u_2} \\
\hline
\multirow{2}{*}{$ \mathcal{R}^{12}$}& \multirow{2}{*}{$ u_1 > 1-\d$} 	& 
\multirow{2}{*}{$u_2 > 1-\d$} 	&	0	&=& \displaystyle
0.5\frac{1-u_1}{J+1-u_1}-1,	\\
								&		
					&					
		 &	0	&=& \displaystyle 0.5\frac{1-u_2}{J+1-u_2}-1
\\
\hline
\multirow{2}{*}{$ \mathcal{R}_{12}$}&  \multirow{2}{*}{$u_1 < \d$}	 & 
\multirow{2}{*}{$u_2 < \d$}	&	0	&=& \displaystyle 0.5- J
\frac{u_1}{J+u_1},		\\
								&		
					&					
		 &	0	&=& \displaystyle 0.5- J\frac{u_2}{J +u_2}	
\\
\hline
\multirow{2}{*}{$ \mathcal{R}_2^1	$}	& \multirow{2}{*}{$ u_1 > 1-\d$}
	& \multirow{2}{*}{$u_2 < \d $}		&	0	&=&
\displaystyle  0.5\frac{1-u_1}{J+1-u_1},	\\
								&		
					&					
		 &	0	&=& \displaystyle 0.5- \frac{u_2}{J +u_2}	
\\
\hline
\multirow{2}{*}{$ \mathcal{R}_1^2	$}	& \multirow{2}{*}{$ u_1 < \d $}
&\multirow{2}{*}{$  u_2 > 1-\d$}		&	0	&=&
\displaystyle 0.5-\frac{u_1}{J+u_1},		\\
								&		
					&					
		 &	0	&=& \displaystyle 0.5\frac{1-u_2}{J+1-u_2}
\\
\hline
\end{array}
\]
\begin{center}
\parbox{.8\textwidth}{
\caption{\footnotesize  List of differential--algebraic systems that approximate
Eq.~\eqref{EquationForToggleSwitch} in different parts of the domain. The
subdomains are named so that the superscript (subscript)  lists the coordinates
that are close to $1$ (close to 0), with 0 denoting the empty set.  For example,
$\mathcal{R}_1^2$ denotes that subdomain with $u_1 \approx 1$ and $u_2 \approx
0$, and $\mathcal{R}^2$ the subdomain where $u_2$ is near $1$, but $u_1$ is
away from the boundary. The middle column define the subdomain explicitly.   
The right column gives the differential-algebraic system that approximates
Eq.~\eqref{EquationForToggleSwitch}  within the given subdomain. }
\label{table:2d}
}
\end{center}
\end{table}
\begin{figure}[t]
\begin{center}
\includegraphics[scale = .45]{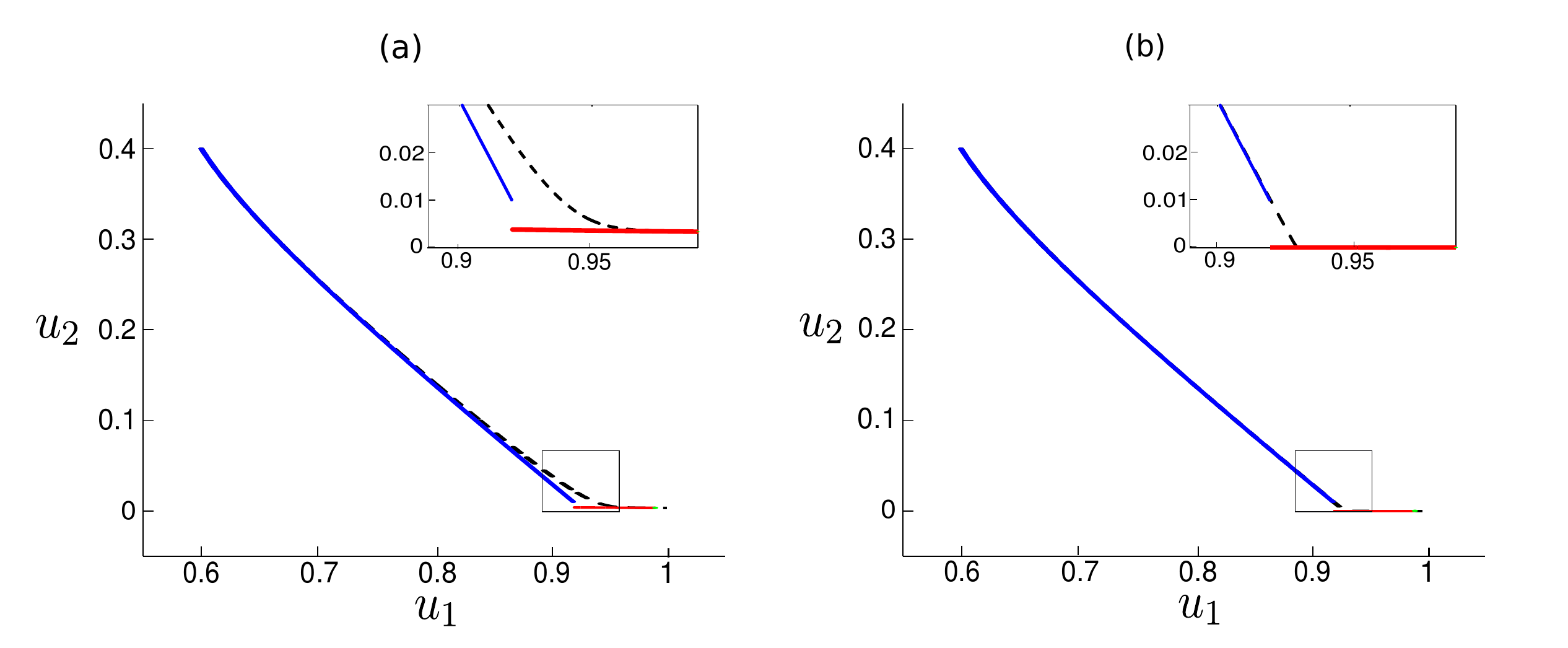}
\parbox{.9\textwidth}{
\caption{\footnotesize
Comparison of the numerical solution of Eq.~\eqref{EquationForToggleSwitch}
(dashed black) and the solution of the approximate system as listed in
Table~\ref{table:2d} (colors) for two different values of $J$. We used 
$J = 10^{-2}$ in (a); and $J =10^{-4} $ in (b).  Different colors are used for 
the solution of the reduced system in different subdomains.  Solution of the linear approximation started in the subdomain
$\mathcal{R}$ (Initial value: $u_1 = 0.6, u_2 = 0.4$), and as soon as $u_2$
decreased below $\d = 0.01$, we assumed that the solution entered subdomain  $\mathcal{R}_2$. 
The approximate solution is discontinuous since when $u_2 = \d$, the solution jumped
(see inset) to the manifold, described by the algebraic part of the linear
differential algebraic system prevalent in the subdomain $\mathcal{R}_2$, Eq.~\eqref{smallub}.
The solution finally stopped in the  subdomain $\mathcal{R}_2^1$.  }
\label{fig:twoDimension}
}
\end{center}
\end{figure}

\subsubsection{Boolean approximation}

We now derive a Boolean approximation, $h=(h_1,h_2):\{0,1\}^2\rightarrow\{0,1\}^2$, that captures certain qualitative features of  Eq.~\eqref{EquationForToggleSwitch}. The idea is to project small values of $u_i$ to 0 and  large values of $u_i$ to 1, and map  the value of the $i$-th variable into 0 and 1 depending on whether $u_i$ is decreasing or increasing, respectively. We will show that the resulting BN can be used directly to detect steady states in the corner subdomains. 

Note that for a BN time is discrete; a time step in the Boolean approximation can be interpreted as the time it takes the original system to transition between chambers. Different transitions in the Boolean network may have different duration in the original system; so the time steps in the BN are only used to keep track of the sequence of events, but not their duration.

The reduction described in the previous section gives a linear ODE  in the interior region $\mathcal{R}$ (Eq.~\eqref{interiorLinear2d}), where $\mathcal{R}$ approaches $[0,1]^2$ as $J\rightarrow 0$. The approximating  linear system therefore  provides  significant information about the behavior of the full, nonlinear system for $J$ small.

We first examine the nullclines. In Fig.~\ref{fig:twoDimension_nullclines} we can see that as $J$ decreases, in the interior of $[0,1]^2$ the nullclines of Eq.~\eqref{EquationForToggleSwitch} approach the nullclines of Eq.~\eqref{interiorLinear2d} given by $u_2=.5$ and $u_1=.5$ restricted to $[0,1]^2$.  These lines divide the domain into four chambers, which we denote
\begin{equation*}
\mathcal{C}_{12}:=[0,0.5)\times[0,0.5),\quad
\mathcal{C}_{1}^2:=[0,0.5)\times(0.5,1],\quad
\mathcal{C}_{2}^1 :=(0.5,1]\times[0,0.5),\quad
\mathcal{C}^{12}:=(0.5,1]\times(0.5,1].
\end{equation*}
\begin{figure}[h]
\begin{center}
\includegraphics[scale = .55]{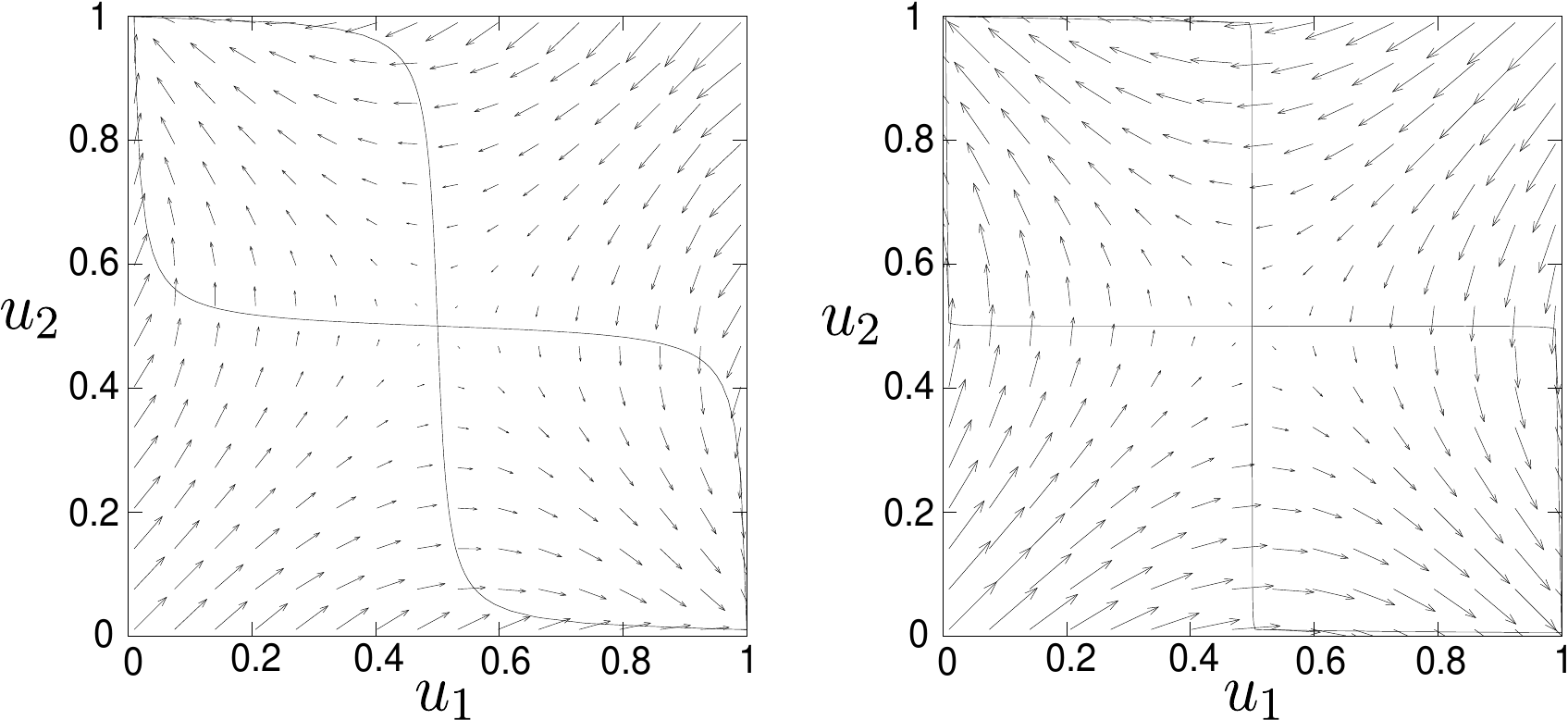}\\
\includegraphics[scale = .55]{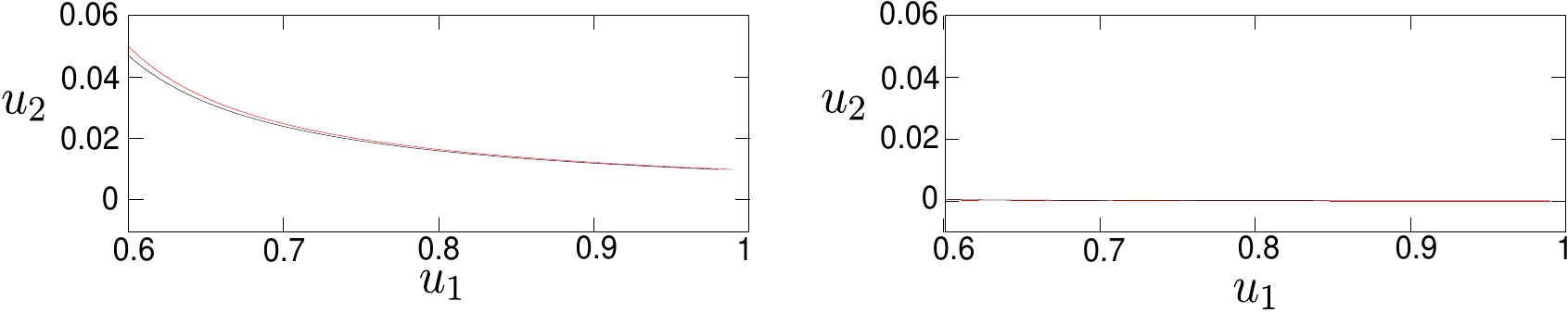}
\parbox{.9\textwidth}{
\caption{\footnotesize
Behavior of nullclines as $J$ decreases. Top: Nullclines of Eq.~\eqref{EquationForToggleSwitch} for $J=10^{-2}$ (left) and $J=10^{-4}$ (right). Bottom: Nullcline $\frac{du_2}{dt}=0$ of Eq.~\eqref{EquationForToggleSwitch} (black curve) and the manifold defined by Eq.~\eqref{smallub} (red) for $J=10^{-2}$ (left), and $J=10^{-4}$ (right). 
\label{fig:twoDimension_nullclines}}
}
\end{center}
\end{figure}

On the other hand, the part of the nullclines inside the boundary subdomains are approximately the slow manifolds defined by equivalents of Eq.~\eqref{E:smalleqb}. Here the slow manifolds converge to the nullclines  as $J \rightarrow 0$  (See Fig.~\ref{fig:twoDimension_nullclines}).

As a shorthand, we define the ``sign'' of a vector $v=(v_1\ldots,v_N)$ as the vector composed by the signs of its components, $sign(v):=(sign(v_i),\ldots,sign(v_N))$.
Note that although the sign of the vector $(.5-u_2,.5-u_1)$ is constant in each chamber, the sign of the vector field of Eq.~\eqref{EquationForToggleSwitch} may differ. For example, in chamber $\mathcal{C}_2^1$, the sign of the vector field can take all possible values. However, this difference is small when  $J$ is small, because the regions between the nullclines approach the actual chambers (Fig.~\ref{fig:twoDimension_nullclines}).  

We consider Eq.~\eqref{EquationForToggleSwitch} in each chamber, starting with the first coordinate, $u_1(t)$. 
For any solution with initial condition in $\mathcal{C}_{12}$, the sign of $u'_1(0)$ is positive and $u_1(t)$ increases within the chamber.  
We use this observation to define $h_1(\mathcal{C}_{12})=1$.  The formal definition of this function will be given below -- 
intuitively $h_i(\cdot)$ maps a chamber to 1 if $u_i$ is increasing within the chamber, and to 0 otherwise. 
Similarly, since $u_1(t)$ initially increases within $\mathcal{C}_2^1$, we let $h_1(\mathcal{C}_2^1)=1$. Similarly we  set $h_1(\mathcal{C}^{12})=0,$ $h_1(\mathcal{C}_1^2)=0,$ $h_2(\mathcal{C}_{12})=1$, $h_2(\mathcal{C}_1^2)=1$, $h_2(\mathcal{C}_2^1)=0$,  and $h_2(\mathcal{C}^{12})=0$.  The $i$-th variable is  ``discretized,'' \emph{i.e.} mapped to 0 and 1 depending on whether $u_i$ is decreasing or increasing, respectively. 

More formally, consider the set $\{0,1\}^2$, with each element identified with a chamber (e.g., the element $(0,1)$ represents the chamber $\mathcal{C}_1^2$). Then $h_1$ and $h_2$ are defined as Boolean functions from $\{0,1\}^2$ to $\{0,1\}$  by setting $h_1(0,0)=0, h_1(0,1)=0$, $h_1(1,0)=1$, $h_1(1,1)=0$, and $h_2(0,0)=0, h_2(0,1)=1$, $h_2(1,0)=0$, $h_2(1,1)=0$. These two Boolean functions define a BN, $h=(h_1,h_2):\{0,1\}^2\rightarrow\{0,1\}^2$.  The functions also define a dynamical system, $x(t+1)=h(x(t)), x \in \{0,1\}^2$. However, other update schedules   can be used [\cite{AracenaGolesMoreiraSalinas2009}].

The BN reduction can be obtained easily  from the sign of $(.5-u_2,.5-u_1)$ at the vertices of $[0,1]^2$,  since the sign of the vector field is constant within a chamber. To do so we use the the Heaviside  function, $H$, defined by $H(y)=0$ if $y<0$, $H(y)=1$ if $y>0$, and $H(0)=\frac{1}{2}$. For example, in $\mathcal{C}_{12}$, both entries increase. We  can see this by evaluating  $H(.5-u_2)=H(.5-u_1)=1$ for $u=(0,0)$.  Using the same argument in each chamber, we obtain the BN 
\begin{equation}\label{EquationForToggleSwitchBN}
h(x)=H(.5-x_2,.5-x_1),
\end{equation}
where we used the convention that $H$ acts entrywise on each component in the argument.

\subsubsection{Steady states of the BN and the PL approximation}

While the BN gives information about which variables increase and decrease within a chamber, it is not yet
clear how or if the dynamics of the BN in Eq.~\eqref{EquationForToggleSwitchBN}, and the PL approximation in Table \ref{table:2d} are related. 

We next show that the steady states of the PL approximation near the vertices can be determined by the steady states of the BN.  The reduced equations in the corner subdomains  $\mathcal{R}_{12},\mathcal{R}^{12}, \mathcal{R}_1^2,$ and $\mathcal{R}_2^1$ are purely algebraic. When $J$ is small, some of these equations have a solution in $[0,1]^2$, indicating a stable fixed point near the corresponding corner (in this case $\mathcal{R}_1^2$ and $\mathcal{R}_2^1$). Others will not have a solution in $[0,1]^2$, indicating that approximate solutions do not enter the corresponding subdomain (here $\mathcal{R}_{12}$ and $\mathcal{R}^{12}$). To make the relationship between steady states less dependent on the actual parameters, consider the system
\begin{align}
\begin{split}\notag
\frac{du_1}{dt} &= b_1^+\frac{1-u_1}{J+1-u_1}-(u_2+b_1^-)\frac{u_1}{J+u_1},  \\
\frac{du_2}{dt} &= b_2^+\frac{1-u_2}{J+1-u_2}-(u_1+b_2^-)\frac{u_2}{J +u_2},
\end{split}
\end{align}
where $x^+=\max{\{x,0\}}$ and $x^-=\max\{-x,0\}$. In the previous example $b_1=b_2=.5$, $b_1^+=b_2^+=.5$ and $b_1^-=b_2^-=0$.  

Now, at the corner subdomain $\mathcal{R}_{12}$ we have the approximate equations,
\[
0= b_1^+ -b_1^-\frac{u_1}{J+u_1},  0= b_2^+ -b_2^-\frac{u_2}{J +u_2},\]
or equivalently,
\[u_1 =\frac{ -b_1^+ J}{b_1}, u_2 =\frac{ -b_2^+ J}{b_2}.\]

These equations have a solution in $\mathcal{R}_{12}$ if and only if $b_1<0$ and $b_2<0$, or equivalently, if and only if $H(b_1,b_2)=(0,0)$. A similar analysis  leads to the following conditions for the existence of fixed points in each corner subdomain 
\begin{align}
\begin{split}\notag
&\textrm{On $\mathcal{R}_{12}$ :}\ \ H(b_1,b_2)=(0,0),\qquad
\textrm{On $\mathcal{R}^{12}$ :}\ \ H(b_1-1,b_2-1)=(1,1),\\
&\textrm{On $\mathcal{R}^2_1$ :}\ \  H(b_1-1,b_2)=(0,1),\qquad
\textrm{On $\mathcal{R}^1_2$ :}\ \  H(b_1,b_2-1)=(1,0).
\end{split}
\end{align}
More compactly, the condition is $H(b_1-x_2,b_2-x_1)=(x_1,x_2)$, where $x=(x_1,x_2)$ is the corner of interest. 
Hence, the BN can also be used directly to detect which corner subdomains contain steady states. 

The relationship between steady states in the full system at the corner subdomains and the steady states of the BN is straightforward. However, since there are many update schemes for BNs, the relationship between trajectories is more subtle. For example, using  synchronous update we obtain the transition $(0,0)\rightarrow (1,1)$ which is not compatible with the solutions of Eq.~\eqref{EquationForToggleSwitch} (See Fig.~\ref{fig:twoDimension_ODEandBN}). On the other hand, using  asynchronous update we obtain the transitions $(0,0)\rightarrow (1,0)$ and $(0,0)\rightarrow (0,1),$ which are more representative of the solutions of Eq.~\eqref{EquationForToggleSwitch}. Thus, we will focus on transitions that are independent of the update scheme, that is, transitions where only one entry changes.

\begin{figure}[h]
\begin{center}
\includegraphics[scale = .6]{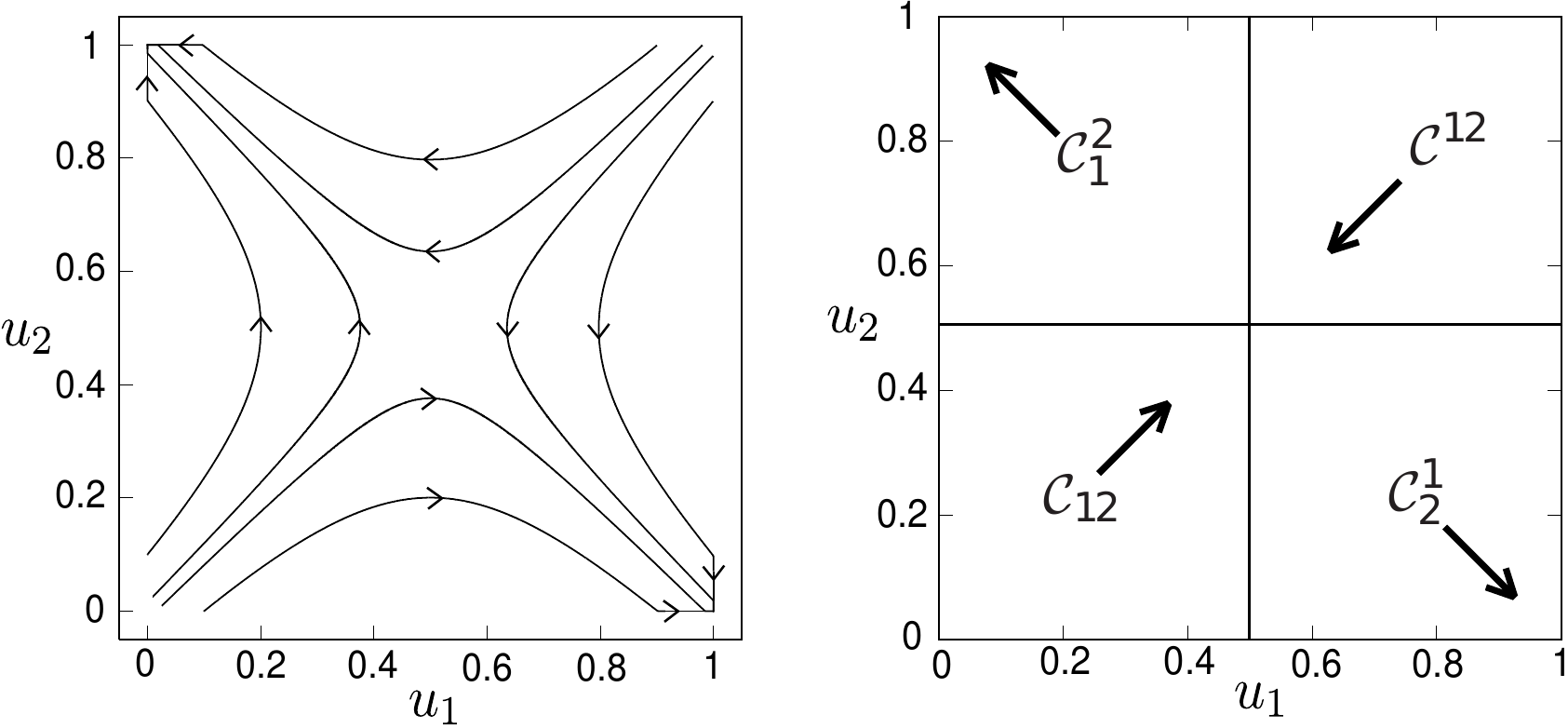}
\parbox{.9\textwidth}{
\caption{\footnotesize
Left: Solutions of Eq.~\eqref{EquationForToggleSwitch} for $J=10^{-4}$. When a solution is close to the boundary regions of $\mathcal{C}_1^2$ and $\mathcal{C}_2^1$, they enter the invariant region as shown in Fig.~\ref{fig:twoDimension}b. Right: Graphical representation of the Boolean transitions ($00\rightarrow 11$, $11\rightarrow 00$, $01\rightarrow 01$, $10\rightarrow 10$).  
\label{fig:twoDimension_ODEandBN}
}} 
\end{center}
\end{figure}

\subsection{A network of three mutually inhibiting elements}\label{sec:3nodes}

The same reduction can be applied to systems of arbitrary dimension.  As an
example consider
the
\emph{repressilator}~[\cite{NovakPatakiCilibertoTyson2003,ElowitzLeibler2000}] (see Fig.~\ref{fig:toggle_repressillator}a)
described by
\begin{eqnarray}\label{threeNodeMM}
 \frac{du_1}{dt} &=& 0.6\frac{1- u_1}{J+1-u_1} -  u_3\frac{ u_1}{J+u_1}, \notag
\\
 \frac{du_2}{dt} &=& 0.4\frac{1- u_2}{J+1-u_2} -  u_1\frac{ u_2}{J+u_2}, \\
 \frac{du_3}{dt} &=& 0.3\frac{1- u_3}{J+1-u_3} -  u_2\frac{ u_3}{J+u_3}. \notag
\end{eqnarray}
The cyclic repression of the three elements in this network leads to oscillatory
solutions over a large range of values of $J$.  The domain of this system,
$[0,1]^3$, can
be divided into 27 subdomains corresponding to 1 interior,
6 faces, 12 edges, and 8 vertices.  

We can again  approximate Eq.~\eqref{threeNodeMM} with solvable
differential--algebraic equation within each  subdomain, to obtain a global
approximate solution (See  Fig.~\ref{fig:threeDimension}). Note that both the numerically obtained solution to  Eq.~\eqref{threeNodeMM} and the solution to the piecewise linear equation exhibit oscillations, and that the approximation is discontinuous. Again, in the limit $J \rightarrow 0$ we obtain a continuous 0-th order approximation.

In this singular limit, solutions can exit a subdomain when they reach a nullcline of the linear system. For example, when $u_2$ is close to 0 and a solution transitions from $u_1>.4$ to $u_1<.4$, the sign of the second entry of $(0.6-u_3,0.4-u_1,0.3-u_2)$ changes from negative to negative; so the second entry of the solution starts increasing (see Fig.~\ref{fig:threeDimension}, panel (e)).  Solutions therefore leave the subdomain on which $u_2\sim  J$ is small and enter the subdomain where $u_2\gg J$. Similarly when $u_1$ is close to 1 solutions transitions from $u_3<.6$ to $u_3>.6$, and the sign of the first entry of $(0.6-u_3,0.4-u_1,0.3-u_2)$ changes from positive to negative.  Hence the first entry of the solution starts decreasing (see Fig.~\ref{fig:threeDimension}, panel (f)), and solutions leaves the subdomain where $1-u_1\sim  J$  and enter another where $1-u_1\gg J$.

The BN corresponding to Eq.~\eqref{threeNodeMM},  $h=(h_1,h_2,h_3):\{0,1\}^3\rightarrow\{0,1\}^3,$ is given by $h(x)=H(0.6-x_3,0.4-x_1,0.3-x_2)$, where $H$ is the Heaviside function acting entry wise on the arguments.
Eq.~\eqref{threeNodeMM} does not have steady states at the corner subdomains, and neither does the corresponding  BN. A subset of states belong to a periodic orbit of the BN: 
$$ (0,1,1) \rightarrow (0,1,0)\rightarrow(1,1,0)\rightarrow(1,0,0)\rightarrow(1,0,1)\rightarrow(0,0,1).
$$  
Note that subsequent states in this orbit differ in a single entry. Thus, the transitions between the states have an unambiguous interpretation in the original system: The BN predicts that if the initial condition is in chamber $\mathcal{C}_{12}^3$, then solutions of  Eq.~\eqref{threeNodeMM}  will go to chamber $\mathcal{C}_{1}^{23}$, then to $\mathcal{C}_{13}^2$, and so on. Indeed, solutions of  Eq.~\eqref{threeNodeMM}, are attracted to a periodic orbit that transitions between the chambers in this order.  The remaining two states form a period two orbit under synchronous update, $(1,1,1) \leftrightarrow (0,0,0)$.   Here the BN does not give precise information about the dynamics of the original system. We will show that under certain conditions, orbits of the BN where only entry changes 
at each timestep, correspond to oscillations in the original system.

\begin{figure}[h]
\centering
 \includegraphics[width = .95\textwidth]{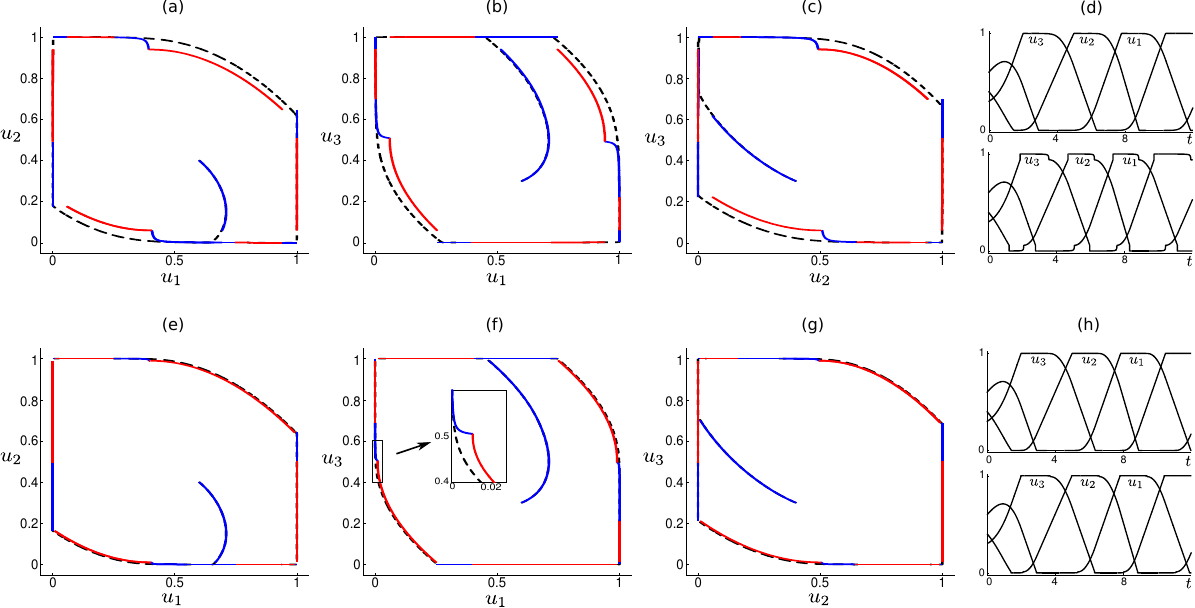}
\parbox{.9\textwidth}{
\caption{\footnotesize
Comparison of the numerical solution of Eq.~\eqref{threeNodeMM} (dashed black)
and the solution of the approximate piecewise linear system (colors) for
two different sets of $J$ and $\d$. For (a)-(c) $J = 10^{-2}, \d = 0.06$; for
(e)-(g)  $J = 10^{-4}, \d = 0.01$.  The approximate solution changes color when
switching between different subdomains.  
Panel (d) shows the time series for the solutions of Eq.~\eqref{threeNodeMM} (top) and the PL system (bottom), corresponding to (a)-(c). Panel (h) shows the time series for the solutions of Eq.~\eqref{threeNodeMM} (top) and the PL system (bottom), corresponding to (e)-(g). \label{fig:threeDimension}
}}
\end{figure}

\section{General reduction  of the model system}\label{GeneralTheory}

The approximations described in the previous section can be extended to the more general model given in Eq.~\eqref{eqn:ProblemEquation}: 
\begin{align}\notag
	\frac{du_i}{dt} = A_i
\frac{1-u_i}{J_{i}^A+1-u_i}-I_i\frac{u_i}{J_{i}^I+u_i},
\end{align}
where $J_i^A,J_i^I$ are some positive constants.  Here $A_i$ and $I_i$ are
activation/inhibition functions that capture the impact of other variables on
the evolution of $u_i$.   The initial conditions are assumed to satisfy $u_i(0)
\in [0,1]$ for all $i$.

 We assume that the activation and inhibition functions are both
affine~[\cite{NovakPatakiCilibertoTyson2001,de02}],
 \begin{equation} \label{activation/Inhibition}
  A_i := \sum_{j=1}^N{w_{ij}^+u_j} +b_i^+,
\quad
   I_i := \sum_{j=1}^N{w_{ij}^-u_j} +b_i^-,
 \end{equation}	
 where  we use the convention $x^+ = \text{max}\{x,0\}$ and $x^- =
\text{max}\{-x,0\}$.
The  $N\times N$ matrix, $W = [w_{ij}]$ and the  $N \times 1$ vector  $b = [\
b_1 \  b_2 \  ...\  b_N\ ]^t$ capture the connectivity and external input to the
network, respectively. In particular,  $w_{ij}$
gives  the contribution  of the $j^{\text{th}}$ variable to the growth rate of
$i^{\text{th}}$ variable. If $w_{ij} > 0 $, then $w_{ij}$  appears in the
activation function for $u_i$; and if $w_{ij}<0$ then $-w_{ij}$  appears in the
inhibition function for $u_i$.
The intensity of the external input to the $i^{\text{th}}$ element is $|b_i|$,
and it contributes to the activation or the inhibition function, depending on
whether $b_i > 0$ or $b_i < 0$,  respectively. 

\begin{proposition}\label{prop:invariantCube}
The cube $[0,1]^N$ is invariant for Eq.~\eqref{eqn:ProblemEquation}.
\end{proposition}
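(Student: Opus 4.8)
The plan is to show that the vector field points inward (or is tangent) at every point of the boundary $\partial[0,1]^N$, which by a standard invariance criterion (e.g. Nagumo's theorem, or a direct argument on the sign of $du_i/dt$) guarantees that trajectories starting in $[0,1]^N$ remain there. Since the boundary of the cube is a union of faces of the form $\{u_i = 0\}$ and $\{u_i = 1\}$ (intersected with the constraints $0 \le u_j \le 1$ for $j \ne i$), it suffices to check the sign of $du_i/dt$ on each such face.

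First I would examine the face $u_i = 0$. There the activation term becomes $A_i\frac{1}{J_i^A + 1} = \frac{A_i}{J_i^A+1}$ and the inhibition term becomes $I_i \cdot \frac{0}{J_i^I + 0} = 0$, so $du_i/dt = \frac{A_i}{J_i^A+1}$. Now I need $A_i \ge 0$ whenever $u \in [0,1]^N$ with $u_i = 0$. This follows from the structure of $A_i$ in Eq.~\eqref{activation/Inhibition}: the coefficients $w_{ij}^+ = \max\{w_{ij},0\}$ and $b_i^+ = \max\{b_i,0\}$ are all nonnegative, and each $u_j \ge 0$ on the cube, so $A_i = \sum_j w_{ij}^+ u_j + b_i^+ \ge 0$. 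Hence $du_i/dt \ge 0$ on $\{u_i=0\}$, so the field does not push $u_i$ below $0$. Symmetrically, on the face $u_i = 1$ the activation term vanishes ($\frac{1-u_i}{J_i^A + 1 - u_i} = 0$) and $du_i/dt = -I_i \frac{1}{J_i^I + 1} = -\frac{I_i}{J_i^I+1}$; since $I_i = \sum_j w_{ij}^- u_j + b_i^- \ge 0$ on the cube by the same nonnegativity of $w_{ij}^-$, $b_i^-$ and $u_j$, we get $du_i/dt \le 0$ on $\{u_i = 1\}$, so $u_i$ cannot exceed $1$.

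Having established the sign conditions coordinatewise, I would conclude by invoking the subtangentiality/flow-invariance criterion: a closed set defined by the inequalities $0 \le u_i \le 1$ is positively invariant for a (locally Lipschitz) ODE provided that on each bounding hyperplane the corresponding component of the vector field points into the set or is tangent to it, which is exactly what the two computations above show. One should note that $A_i$ and $I_i$ depend on $u$, but the argument only uses their evaluation at boundary points of the cube, where the needed nonnegativity holds uniformly; also the denominators $J_i^A + 1 - u_i$ and $J_i^I + u_i$ are bounded away from $0$ on $[0,1]^N$ because $J_i^A, J_i^I > 0$, so the right-hand side is smooth on a neighborhood of the cube and existence/uniqueness is not an issue. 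The main (and only mild) obstacle is stating the invariance criterion in a form that handles the lower-dimensional faces (edges, vertices) where several constraints are active simultaneously; this is handled automatically since the sign conditions hold independently for each coordinate $i$, so no corner requires separate treatment.
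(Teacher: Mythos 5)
Your proposal is correct and follows essentially the same argument as the paper: evaluate $du_i/dt$ on the faces $u_i=0$ and $u_i=1$, use the nonnegativity of $A_i$ and $I_i$ on the cube, and conclude that the vector field never points outward. The paper's proof is just a terser version of yours, omitting the explicit appeal to a subtangentiality criterion.
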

\begin{proof}
It will be enough to show that the vector field at any point on the boundary is not
directed outward. Since, $A_i \geq 0$ and $I_i \geq 0$, for any $i$,
\begin{align*}
	\frac{du_i}{dt}\bigg|_{u_i = 0} = A_i\frac{1}{J_{i}^A+1} \ge 0,
\quad
\text{and}
\quad
	\frac{du_i}{dt}\bigg|_{u_i = 1} = -I_i\frac{1}{J_{i}^I+1} \le 0.
\end{align*}
\end{proof}

\subsection{The PL approximation}
To obtain a solvable reduction of Eq.~\eqref{eqn:ProblemEquation} we follow the 
procedure outlined in  Section~\ref{ExampleProblems}.  We first present the results, and provide the mathematical justification in the next section.  For
notational convenience we let $J_i^A = J_i^I = J$, with  $0 < J \ll 1$.   
The general case is equivalent.  We will use  $\d = \d(J) > 0 $  to define the thickness of the boundary layers.  We start with the subdivision of the
$N$-dimensional cube, $[0,1]^N$.

Let $T$ and $S$ be two disjoint subsets of  $\{1,2,...,N\}$, and let
\begin{align} \label{Rdef} 
\mathcal{R}^{T}_{S}:=
\Big\{
(u_1,u_2,...,u_N) \in [0,1]^N  \,\Big|\,
 u_{s} 		 <  \d \text{ for all }   s \in S; \quad
    &u_{t}	 	> 1-\d\text{ for all }   t \in T; \notag \\
\text{and }
 \d \le  &u_k\le 1-\d
  \text{ for all }  \quad k  \notin S \cup T
\Big \}.
\end{align}
We extend the convention used in Table~\ref{table:2d}, and in Eqs.~(\ref{R00})
and (\ref{R20}) so that  $\mathcal{R}^{T} :=\mathcal{R}^{T}_{S}$ when $S$ is
empty;  $\mathcal{R}_S :=\mathcal{R}^{T}_{S}$ when $T$ is empty; and
 $\mathcal{R} :=\mathcal{R}^{T}_{S}$  when $T$, $S$ are both  empty.

Within each subdomain $\mathcal{R}_S^T$,  Eq.~\eqref{eqn:ProblemEquation} can be
approximated by a
 different linear differential--algebraic system. Following the reduction from
Eq.~\eqref{EquationForToggleSwitch} to Eq.~\eqref{edgeLinear2d}, for $i \notin S
\cup T$ we obtain the linear system
\begin{subequations}\label{equationInRST}
\begin{equation} 
 \frac{du_i}{dt} = \sum_{j = 1}^N w_{ij}u_j +  b_i. \label{four}
\end{equation}
For $s \in S$ one of the nonlinear terms remains and we  obtain
\begin{equation} 
\frac{du_s}{dt} = \left(\sum_{j = 1}^N 
w_{sj}^+u_j +  b_s^+\right)-\left(\sum_{j= 1}^N 
w_{sj}^-u_j +  b_s^-\right)\frac{u_s}{J+u_s}, \label{five}
\end{equation}
while for $t \in T$ we will have
\begin{equation} 
\frac{d u_t}{dt} = \left(\sum_{j = 1}^N
w_{tj}^+u_j + b_t^+\right)\frac{1-u_t}{J+1-u_t}
-\left(\sum_{j = 1}^N w_{tj}^-u_j +  b_t^-\right). \label{six}
\end{equation}
\end{subequations}
Eq.~\eqref{equationInRST} is simpler than Eq.~\eqref{eqn:ProblemEquation}, but it is
not solvable yet.  Following the reduction from Eq.~\eqref{edgeLinear2d} to
Eq.~\eqref{smallu},  we now further reduce Eqs.(\ref{five}--\ref{six}).
First we use the approximations $u_s \approx 0$ and $u_t \approx 1$ in the
activation and inhibition functions appearing in Eq.~\eqref{equationInRST}.
Second, we assume
that $u_s$ for $s \in S$ and $u_t$ for $t \in T$ are in steady state.

Under these assumptions we obtain the reduction of Eq.~\eqref{eqn:ProblemEquation}
within any subdomain $\mathcal{R}_S^T$
\begin{subequations}\label{eqn:mainReduced_RST}
\begin{align}
\frac{du_i}{dt} &=  \sum_{j  \notin S \cup T}w_{ij}u_j + \sum_{j \in T}w_{ij} 
+b_i
& i \notin S \cup T;  \label{eqn:mainReduced_RSTa} \\
0 &=  \sum_{j \notin S \cup T}w_{sj}^+u_j + \sum_{t \in T}w_{st}^+
  +b_s^+ -\left( \sum_{j \notin S \cup T }w_{sj}^-u_j + \sum_{t \in T}w_{st}^- +
b_s^-\right)\frac{u_s}{J+u_s},
& s \in S; \label{eqn:mainReduced_RSTb} \\
0 &= -\left( \sum_{j \notin S \cup T }w_{tj}^+u_j + \sum_{j \in
T}w_{tj}^++b_t^+\right)\frac{1-u_t}{J+1-u_t}
+\sum_{j \notin S \cup T}w_{tj}^-u_j + \sum_{j \in T}w_{tj}^- + b_t^-,
& t \in T. \label{eqn:mainReduced_RSTc}
\end{align}
\end{subequations}

 Eq.~(\ref{eqn:mainReduced_RST}) is  solvable since Eq.~(\ref{eqn:mainReduced_RSTa}) is
decoupled, and   Eqs.(\ref{eqn:mainReduced_RSTb}) and
(\ref{eqn:mainReduced_RSTc}) are solvable for ${u}_s$ and ${u}_t$, respectively, as
functions of the solution of Eq.~(\ref{eqn:mainReduced_RSTa}).

Note that in the singular limit $J=0$ we obtain the $0$-th order approximations:
\begin{subequations}\notag
\begin{align}
\frac{du_i}{dt} &= \sum_{j\notin S \cup T}w_{ij}u_j + \sum_{j \in T}w_{ij}+b_i
& i \notin S \cup T;  \notag \\
u_s &=  0, & s \in S; \notag \\
u_t &= 1, & t \in T. \notag
\end{align}
\end{subequations}

\subsection{Boolean approximation}

To obtain the Boolean approximation we follow the process described in Section \ref{ExampleProblems}.  We consider the chambers determined by the complement of the union of the $N$ hyperplanes $\sum_{j=1}^N w_{ij} u_j +b_i= 0$ (restricted to $[0,1]^N$)  where $i=1,\ldots, N$. We denote with $\Omega$ the set of all chambers $\Omega:=\{\mathcal{C}: \mathcal{C} \textrm{ is a chamber}\}$; alternatively, $\Omega$ is the set of connected components of $[0,1]^N\setminus \cup_{i=1}^N\{u:\sum_{j=1}^N w_{ij} u_j +b_i=0\}$. We assume that $\sum_{j=1}^N w_{ij} x_j +b_i\neq 0$ for all $i=1,\ldots, N$ and for all $x\in\{0,1\}^N$. This guarantees that each corner of $[0,1]^N$ belongs to a chamber. The set of parameters excluded by this assumption has measure zero.

Let $S$ and $T$ be two disjoint subsets of $\{1, 2, \ldots, N\}$ such that $S\cup T=\{1, 2, \ldots, N\}$ and let $x\in\{0,1\}^N$ be the corner that belongs to the corner subdomain $\mathcal{R}_S^T$. Note that $x_i=0$ for $i\in S$ and $x_i=1$ for $i\in T$. The chamber $\mathcal{C} \in \Omega$ that contains the corner in subdomain $\mathcal{R}_S^T$
will be denoted by  $\mathcal{C}_S^T$.  We do not name the remaining chambers.

In general, the chambers can be more complex than in the examples of  Section \ref{ExampleProblems}. Chambers do not have to be hypercubes,  different corners  may belong to the same chamber, and some chambers may not even contain a corner of $[0,1]^N$, as illustrated in Fig.~\ref{fig:extra_examples_hyperplanes}.  In the first example, $(0,1)$ and $(1,1)$ belong to the same chamber, that is, $\mathcal{C}_1^2=\mathcal{C}^{12}$, and neither  $\mathcal{C}_{12}$ containing $(0,0),$ nor $\mathcal{C}_2^1$ containing $(1,0)$ are rectangles. Also, $\Omega$ has three elements: $\mathcal{C}_{12}$, $\mathcal{C}_2^1$, and $\mathcal{C}_1^2=\mathcal{C}^{12}$. In the second example, two chambers do not contain any corner of $[0,1]^2$ and are not named. Hence, $\Omega$ has four elements: $\mathcal{C}_{12}=\mathcal{C}_1^2$, $\mathcal{C}_2^1=\mathcal{C}^{12}$, and two unnamed chambers that contain no corners.

\begin{figure}[h]
\begin{center}
\includegraphics[scale = .5]{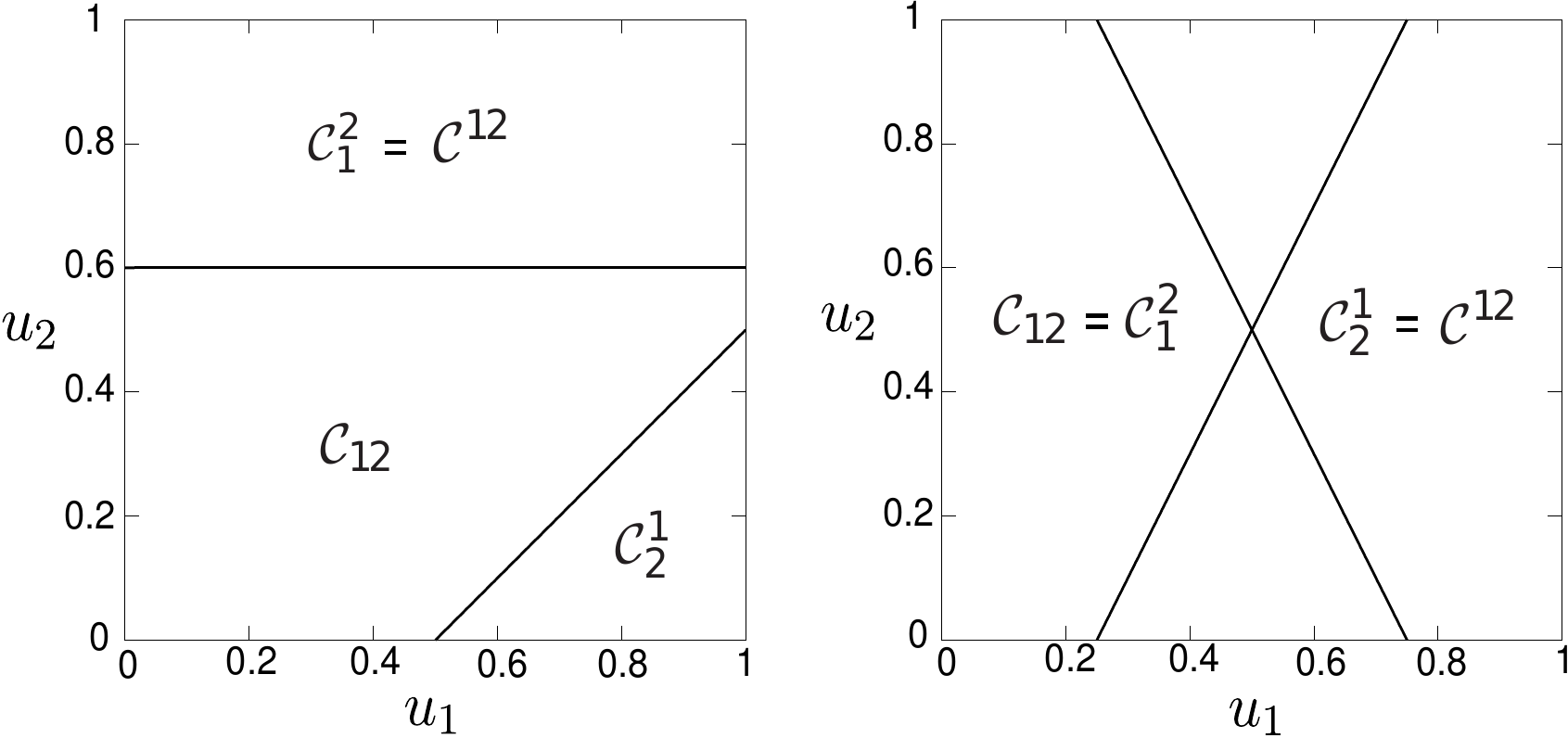}
\parbox{.9\textwidth}{
\caption[fig]{ \footnotesize
Chambers for $W=\left[\begin{array}{rr} -1 & -1 \\ 0 & -1 \end{array}\right]$ and 
$b=\left[\begin{array}{r} -0.5 \\ -0.6 \end{array}\right]$ (left);  
$W=\left[\begin{array}{rr} 2 & -1 \\ 2 & 1 \end{array}\right]$ and
$b=\left[\begin{array}{r} -0.5 \\ -1.5 \end{array}\right]$ (right).  
\label{fig:extra_examples_hyperplanes}
}} 
\end{center}
\end{figure}

To define the BN, $h=(h_1,\ldots,h_N):\{0,1\}^N\rightarrow \{0,1\}^N$ at $x\in\{0,1\}^N$, we need to find the signs of the components of the vector field $Wu+b$ on the chamber that contains $x$. Consider  $x \in \mathcal{R}_S^T$. Within $\mathcal{C}_S^T$ the signs of the components of $Wu+b$ do not change and are equal to the signs of the components of $Wx+b$. If the sign of the $i$-th component is negative, we let $h_i(x)=0,$ and if the sign is positive we let $h_i(x)=1$. In general, we can write 
\begin{equation}\label{eqn:mainReduced_BN}
h_i(x) =  H\left(\sum_{j=1}^N w_{ij}x_j+b_i\right),   \qquad \textrm{ or in vector form}  \qquad
h(x)=H(Wx+b).
\end{equation}
Hence the value of the BN at a corner $x\in\mathcal{R}_S^T$ is given by the  Heaviside function, applied entrywise  to $Wu+b$. Note that corners that are in the same chamber get mapped to the same point.

Importantly, using Eq.~\eqref{eqn:mainReduced_BN} we can compute the BN directly from Eq.~\eqref{eqn:ProblemEquation}. For example, for Eq.~\eqref{EquationForToggleSwitch} we have  $h(x_1,x_2)=H(0.5-x_2,0.5-x_1)$; and for  Eq.~\eqref{threeNodeMM} we have $h(x_1,x_2,x_3)=H(0.6-x_3,0.4-x_1,0.3-x_2)$.

Below we show that up to a set of small measure, the BN preserves information about the steady states of the original system.  We will also show that under some conditions, ``regular'' trajectories of a BN correspond to trajectories in the original system.

\section{Mathematical justification}\label{sec:math_justification}

 We next justify the different approximations made above: In Section \ref{sec:math_PL} we use Geometric Singular Perturbation Theory (GSPT) to justify the PL approximation. In Section \ref{sec:math_BN} we show that steady state information is preserved by the BN and that, under certain conditions, the BN also provides qualitative information about the global dynamics of the original system.

\subsection{Piecewise linear approximation}\label{sec:math_PL}

To obtain the reduced equations at the boundary of $[0,1]^N$, we define the following rescaled variables 
\begin{equation} \label{seven}
 \tilde{u}_s := \frac{u_s}{J}  \quad   \text{ for }  s \in S, \text{ and } 
\qquad
 \tilde{u}_t  := \frac{1-u_t}{J} \,\,\quad \text{ for }  t \in T.
\end{equation}
Using Eq.~ (\ref{seven}) in Eq.~(\ref{equationInRST}) we get for $i \notin S
\cup T$
\begin{subequations}\label{equationInRST_scaled}
\begin{equation} \label{eight}
\frac{du_i}{dt} =  \sum_{j  \notin S \cup T}w_{ij}u_j + \sum_{j \in T}w_{ij}
  + J\left(\sum_{s \in S}w_{is}\tilde{u}_s-\sum_{t \in
T}w_{it}\tilde{u}_t\right) +b_i,
\end{equation}
and for $s \in S$,
\begin{align} \label{nine}
J\frac{d\tilde{u}_s}{dt} = & \sum_{j \notin S \cup T}w_{sj}^+u_j + \sum_{t \in
T}w_{st}^+
  +J\left(\sum_{j \in S}w_{sj}^+\tilde{u}_j-\sum_{t \in
T}w_{st}^+\tilde{u}_t\right)+b_s^+ \nonumber\\
&-\left( \sum_{j \notin S \cup T }w_{sj}^-u_j + \sum_{t \in T}w_{st}^- +
b_s^-\right)\frac{\tilde{u}_s}{1+\tilde{u}_s}
  -J\left(\sum_{j \in S}w_{sj}^+\tilde{u}_j-\sum_{t \in
T}w_{st}^+\tilde{u}_t5\right)\frac{\tilde{u}_s}{1+\tilde{u}_s},
\end{align}
and similarly, for $t \in T$,
\begin{align} \label{ten}
J\frac{d\tilde{u}_t}{dt} = &-\left( \sum_{j \notin S \cup T }w_{tj}^+u_j +
\sum_{j \in T}w_{tj}^++b_t^+\right)\frac{\tilde{u}_t}{1+\tilde{u}_t}
 -J\left(\sum_{s \in S}w_{ts}^+\tilde{u}_s-\sum_{j \in
T}w_{tj}^+\tilde{u}_j\right)\frac{\tilde{u}_t}{1+\tilde{u}_t} \nonumber \\
&+\sum_{j \notin S \cup T}w_{tj}^-u_j + \sum_{j \in T}w_{tj}^- + b_t^-
+J\left(\sum_{s \in S}w_{ts}^+\tilde{u}_s-\sum_{j \in
T}w_{tj}^+\tilde{u}_j\right).
\end{align}
\end{subequations}

When $J$ is small, we can apply Geometric Singular Perturbation Theory (GSPT) to Eq. (\ref{equationInRST_scaled}) ~[\cite{Hek2010,Kaper1998}].  The GSPT posits that, under a normal hyperbolicity condition which we verify below, Eq.~(\ref{equationInRST_scaled}) can be further simplified by assuming that $J = 0$. 
This yields a 
differential-algebraic system
\begin{subequations}\label{mainReduced}
\begin{align}
\frac{du_i}{dt} &=  \sum_{j  \notin S \cup T}w_{ij}u_j + \sum_{j \in T}w_{ij} 
+b_i,
& i \notin S \cup T;  \label{eleven} \\
0 &=  \sum_{j \notin S \cup T}w_{sj}^+u_j + \sum_{t \in T}w_{st}^+
  +b_s^+ -\left( \sum_{j \notin S \cup T }w_{sj}^-u_j + \sum_{t \in T}w_{st}^- +
b_s^-\right)\frac{\tilde{u}_s}{1+\tilde{u}_s},
& s \in S; \label{twelve} \\
0 &= -\left( \sum_{j \notin S \cup T }w_{tj}^+u_j + \sum_{j \in
T}w_{tj}^++b_t^+\right)\frac{\tilde{u}_t}{1+\tilde{u}_t}
+\sum_{j \notin S \cup T}w_{tj}^-u_j + \sum_{j \in T}w_{tj}^- + b_t^-,
& t \in T. \label{thirteen}
\end{align}
\end{subequations}
which is equivalent to Eq.~\eqref{eqn:mainReduced_RST} after rescaling.  This
conclusion
will be justified if the manifold defined by Eqs.~(\ref{twelve}) and
(\ref{thirteen}) is normally hyperbolic and
stable~[\cite{Fenichel1979,Kaper1998,Hek2010}].  We verify this condition next.

Let $ \hat{u} =  \{ u_{i_1},...,u_{i_m} \}$ where $\{ i_1,...,i_m \} = \{
1,2,...,N \}\backslash (S \cup T) $, be the coordinates of $u$ which are away
from the boundary, and denote the right hand side of Eq.~(\ref{twelve}) by
$F_s(\hat{u}, \tilde{u}_{i_s})$, for all  $s \in S$, so that
\begin{align*}
 F_s(\hat{u}, \tilde{u}_{i_s}) :=  \sum_{j \notin S \cup T}w_{sj}^+u_j + \sum_{t
\in T}w_{st}^+
  +b_s^+ -\left( \sum_{j \notin S \cup T }w_{sj}^-u_j + \sum_{t \in T}w_{st}^- +
b_s^-\right)\frac{\tilde{u}_s}{1+\tilde{u}_s},
\end{align*}
and
\begin{align*}
 \frac{\partial F_s}{\partial \tilde{u}_{i_s}} =  -\left( \sum_{j \notin S \cup
T }w_{sj}^-u_j + \sum_{t \in T}w_{st}^- + b_s^-\right) \left(
\frac{1}{1+\tilde{u}_s}\right)^2 
< 0,
\end{align*}
for all $s \in S$.
Similarly, by denoting the right hand side of Eq.~(\ref{thirteen}) by
$G_t(\hat{u}, \tilde{u}_{i_t})$, for all  $t \in T$. \emph{i.e.}
\begin{align*}
 G_t(\hat{u}, \tilde{u}_{i_t}) := -\left( \sum_{j \notin S \cup T }w_{tj}^+u_j +
\sum_{j \in T}w_{tj}^++b_t^+\right)\frac{\tilde{u}_t}{1+\tilde{u}_t}
+\sum_{j \notin S \cup T}w_{tj}^-u_j + \sum_{j \in T}w_{tj}^- + b_t^-,
\end{align*}
we see that
\begin{align*}
 \frac{\partial G_t}{\partial \tilde{u}_{i_t}} = -\left( \sum_{j \notin S \cup T
}w_{tj}^+u_j + \sum_{j \in T}w_{tj}^++b_t^+\right) \left(
\frac{\tilde{u}_t}{1+\tilde{u}_t} \right)^2
< 0.
\end{align*}
Hence, the manifold defined by Eqs.~(\ref{twelve}) and (\ref{thirteen}) is
normally hyperbolic and stable.  This completes the proof that the reduction of
the non-linear system~(\ref{eqn:ProblemEquation}) to the solvable system given in Eq.~(\ref{eqn:mainReduced_RST}) is justified for small $J$.

\subsection{Boolean approximation}\label{sec:math_BN}

Here we formally show that the steady states of the BN given in Eq.~\eqref{eqn:mainReduced_BN} are in a one-to-one correspondence with the steady states of the system  given in Eq.~\eqref{eqn:ProblemEquation}. We also show that under some conditions trajectories in the BN correspond to trajectories of the  system  Eq.~\eqref{eqn:ProblemEquation}.

\subsubsection{Steady state equivalence at the corner subdomains}\label{sec:math_BN_ss_corner}

First we prove the one-to-one correspondence only at the corner subdomains using the PL approximation. We do this by showing that Eq.~\eqref{eqn:mainReduced_RST} has a steady state at a corner subdomain $\mathcal{R}_S^T$ if and only if the BN has a steady state at the corner $x\in\{0,1\}^N$ contained in $\mathcal{R}_S^T$.

We proceed from Eq.~(\ref{eqn:mainReduced_RST}) for a corner subdomain $\mathcal{R}_S^T$ so that $S\cup T=\{1,\ldots,N\}$. We obtain the equations

\[0=\sum_{t \in T}w_{st}^+  +b_s^+ -\left( \sum_{t \in T}w_{st}^- +b_s^-\right)\frac{u_s}{J+u_s},\qquad  s \in S \] and 
\[0= -\left( \sum_{j \in T}w_{tj}^++b_t^+\right)\frac{1-u_t}{J+1-u_t} +\sum_{j \in T}w_{tj}^- + b_t^-, \qquad  t \in T.\]

For the sets $S$ and $T$, consider $x\in\{0,1\}^N$ such that $x_s=0$ for all $s\in S$ and $x_t=1$ for all $t\in T$. Then, we can write the equations above in a more compact form

\[0=A_s(x) -I_s(x)\frac{u_s}{J+u_s}, \quad s \in S, \qquad \text{and,} \qquad
0=-A_t(x)\frac{1-u_t}{J+1-u_t} +I_t(x), \quad t \in T.\]

Solving these equations for $u_s$ and $u_t$, respectively, we obtain
\begin{equation} \label{E:fp_equation}
u_s=-\frac{A_s(x)J}{A_s(x)-I_s(x)}, \quad s \in S, \qquad  \text{and,}  
\qquad 
u_t=1-\frac{I_t(x)J}{A_t(x)-I_t(x)}, \quad t \in T.
\end{equation}

Now, let $\epsilon>0$ small such that $\left|\frac{I_t(x)J}{A_t(x)-I_t(x)}\right|, \left| \frac{I_t(x)J}{A_t(x)-I_t(x)} \right|\leq 1$. 
For all $J$ such that $0<J<\epsilon$, Eq.~\eqref{E:fp_equation} has a solution inside $[0,1]^N$ if and only if 
$A_s(x)-I_s(x)<0, s \in S,$ and $A_t(x)-I_t(x)>0, t \in T$,
or more compactly if and only if
$H(A_i(x)-I_i(x))=x_i \textrm{ for all } i=1,\ldots,N.$
Thus, a steady state appears in the corner subdomain corresponding to $x$ if and only if $x$ is a steady state of the BN $h=(h_1,\ldots,h_N):\{0,1\}^N\rightarrow\{0,1\}^N$ given by Eq.~\eqref{eqn:mainReduced_BN}. 
We have proved,

\begin{theorem}\label{thm:localss}
There is an $\epsilon>0$ such that for all $0<J<\epsilon$,  Eq.~\eqref{eqn:mainReduced_RST} has a steady state at a corner subdomain containing $x\in\{0,1\}^N$ if and only if the BN in Eq.~\eqref{eqn:mainReduced_BN} has a steady state at $x$.
\end{theorem}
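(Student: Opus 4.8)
The plan is to specialize the piecewise linear reduction to a corner subdomain and solve the resulting purely algebraic system in closed form. First I would fix disjoint sets $S,T$ with $S\cup T=\{1,\ldots,N\}$ and let $x\in\{0,1\}^N$ be the associated corner, so $x_s=0$ for $s\in S$ and $x_t=1$ for $t\in T$. Because $\{1,\ldots,N\}\setminus(S\cup T)$ is empty, the differential part \eqref{eqn:mainReduced_RSTa} of Eq.~\eqref{eqn:mainReduced_RST} is vacuous, and a steady state of the differential--algebraic system on $\mathcal{R}_S^T$ is simply a solution of the algebraic equations \eqref{eqn:mainReduced_RSTb}--\eqref{eqn:mainReduced_RSTc} lying in that subdomain. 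The key bookkeeping observation is that, after the substitutions $u_j\approx 0$ ($j\in S$) and $u_j\approx 1$ ($j\in T$) already built into Eq.~\eqref{eqn:mainReduced_RST}, the coefficients multiplying the remaining Hill terms are exactly the affine functions $A_s,I_s$ (resp. $A_t,I_t$) evaluated at the corner $x$; so the system collapses to the scalar equations $0=A_s(x)-I_s(x)\,u_s/(J+u_s)$ for $s\in S$ and $0=-A_t(x)(1-u_t)/(J+1-u_t)+I_t(x)$ for $t\in T$.

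Next I would solve each scalar equation for $u_s$ and $u_t$ respectively. Since $A_i(x)-I_i(x)=\sum_j w_{ij}x_j+b_i\neq 0$ at every corner (this is precisely the generic nondegeneracy assumption made for the BN), the solution is unique and given by Eq.~\eqref{E:fp_equation}, namely $u_s=-A_s(x)J/(A_s(x)-I_s(x))$ and $u_t=1-I_t(x)J/(A_t(x)-I_t(x))$. These are $O(J)$ perturbations of the corner coordinates $0$ and $1$.

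Finally I would carry out the sign analysis. There are only finitely many corners, and for each the quantities $A_s(x)/(A_s(x)-I_s(x))$ and $I_t(x)/(A_t(x)-I_t(x))$ are fixed finite numbers; choosing $\epsilon>0$ below the reciprocal of the largest of their absolute values guarantees that for all $0<J<\epsilon$ each $u_s$ and $u_t$ is within distance $1$ of its corner value, so membership in $[0,1]^N$ is governed purely by the sign of the denominator. Using $A_s(x)\ge 0$ and $I_t(x)\ge 0$ together with $A_i(x)\neq I_i(x)$, one checks that $u_s\in[0,1]$ iff $A_s(x)-I_s(x)<0$ and $u_t\in[0,1]$ iff $A_t(x)-I_t(x)>0$; combining over all $i$ this is exactly $H(A_i(x)-I_i(x))=x_i$ for all $i$, i.e. $H(Wx+b)=x$, i.e. $x$ is a fixed point of the BN $h$ of Eq.~\eqref{eqn:mainReduced_BN}. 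This establishes both implications at once.

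The step I expect to require the most care is the last one: making the choice of $\epsilon$ uniform over all $2^N$ corner subdomains simultaneously, and checking the degenerate boundary cases (for instance $A_s(x)=0$, or $u_s$ landing exactly at $0$ or $1$) so that ``solution in $[0,1]^N$'' and ``steady state in the corner subdomain $\mathcal{R}_S^T$'' genuinely coincide for $J$ small. The generic assumption $\sum_j w_{ij}x_j+b_i\neq 0$ is what eliminates the problematic cases and yields the clean characterization in terms of $H$.
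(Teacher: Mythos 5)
Your proposal follows essentially the same route as the paper's proof: reduce the corner subdomain to the purely algebraic equations $0=A_s(x)-I_s(x)u_s/(J+u_s)$ and $0=-A_t(x)(1-u_t)/(J+1-u_t)+I_t(x)$, solve them explicitly to get the $O(J)$ perturbations of the corner coordinates, and read off membership in $[0,1]^N$ from the sign of $A_i(x)-I_i(x)$, which is exactly the condition $H(Wx+b)=x$. The argument is correct, and your extra attention to the uniform choice of $\epsilon$ over the finitely many corners and to the degenerate cases excluded by $\sum_j w_{ij}x_j+b_i\neq 0$ only sharpens what the paper states more briefly.
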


Also, since we showed that the PL system given by  Eq.~\eqref{eqn:mainReduced_RST} is a valid approximation of the full system given in Eq.~\eqref{eqn:ProblemEquation}, we have the following corollary.

\begin{corollary}\label{cor:localss}
There is an $\epsilon>0$ such that for all $0<J<\epsilon$, the system in Eq.~\eqref{eqn:ProblemEquation} has a steady state at a corner subdomain containing $x\in\{0,1\}^N$ if and only if the BN in Eq.~\eqref{eqn:mainReduced_BN} has a steady state at $x$.
\end{corollary}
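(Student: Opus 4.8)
The plan is to run the argument behind Theorem~\ref{thm:localss} directly on the full nonlinear system~\eqref{eqn:ProblemEquation}, rather than trying to transfer steady states through the finite‑time GSPT estimate of Section~\ref{sec:math_PL}. Fix a corner $x\in\{0,1\}^N$ and set $S=\{i:x_i=0\}$, $T=\{i:x_i=1\}$, so $S\cup T=\{1,\dots,N\}$. A steady state of Eq.~\eqref{eqn:ProblemEquation} lying in the corner subdomain $\mathcal{R}_S^T$ is a zero of the vector field with $u_s$ near $0$ for $s\in S$ and $u_t$ near $1$ for $t\in T$; applying the rescaling~\eqref{seven} to the equations $\tfrac{du_i}{dt}=0$ turns the problem into a system $\Phi(\tilde u,J)=0$ in the variables $\tilde u=(\tilde u_i)_{i\in S\cup T}$, namely the right‑hand sides of Eqs.~\eqref{nine}--\eqref{ten} set to zero (here the interior sums are empty). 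The only real difference from the PL case is that $A_i$ and $I_i$ depend on $u$, not just on the corner $x$, so the steady‑state equations are genuinely coupled and an implicit‑function argument is needed in place of the exact formula~\eqref{E:fp_equation}.

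First I would treat $J=0$. Then $\Phi(\cdot,0)=0$ decouples: the $s$‑th equation becomes $A_s(x)=I_s(x)\,\tilde u_s/(1+\tilde u_s)$ and the $t$‑th equation $I_t(x)=A_t(x)\,\tilde u_t/(1+\tilde u_t)$. Since $A_i(x),I_i(x)\ge 0$ and $A_i(x)\ne I_i(x)$ by the genericity hypothesis $\sum_j w_{ij}x_j+b_i\ne 0$, each such equation has a (unique) finite solution $\tilde u_i^*\ge 0$ if and only if $A_s(x)<I_s(x)$ for $s\in S$ and $A_t(x)>I_t(x)$ for $t\in T$; equivalently $H(A_i(x)-I_i(x))=x_i$ for all $i$, i.e.\ $h(x)=x$. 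Next I would promote this to $0<J<\epsilon$: the Jacobian $D_{\tilde u}\Phi(\tilde u^*,0)$ is diagonal with entries $\partial F_s/\partial\tilde u_s<0$ and $\partial G_t/\partial\tilde u_t<0$ — precisely the normal‑hyperbolicity computation already done in Section~\ref{sec:math_PL} — hence invertible, so the implicit function theorem yields a branch $\tilde u^*(J)\to\tilde u^*$ and therefore a genuine steady state of Eq.~\eqref{eqn:ProblemEquation} with $u_s=J\tilde u_s^*(J)\to 0$, $u_t=1-J\tilde u_t^*(J)\to 1$, which sits in $\mathcal{R}_S^T$ for $J$ small. For the converse, if $h(x)\ne x$, say $A_{s_0}(x)>I_{s_0}(x)$ for some $s_0\in S$ (the $T$‑case is symmetric), then at any steady state with $u_{s_0}$ small one has $A_{s_0}(u)\tfrac{1-u_{s_0}}{J+1-u_{s_0}}=I_{s_0}(u)\tfrac{u_{s_0}}{J+u_{s_0}}\le I_{s_0}(u)$, while by continuity of $A_{s_0},I_{s_0}$ the left side exceeds $I_{s_0}(u)$ once $u_{s_0}$ and $J$ are small — a contradiction — so no steady state lies near $x$. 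Letting $\epsilon$ be the minimum of the finitely many thresholds obtained as $x$ ranges over $\{0,1\}^N$ gives the corollary.

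The step I expect to be the main obstacle is the bookkeeping in these two perturbative arguments: verifying that the implicit‑function branch stays inside $[0,1]^N$ and, for the fixed layer width $\delta=\delta(J)$, that $u_s=O(J)\ll\delta$ so the steady state is genuinely counted in $\mathcal{R}_S^T$; and making the non‑existence argument uniform, i.e.\ choosing $J$ and the neighborhood of $x$ jointly so that the strict inequality $A_{s_0}(x)>I_{s_0}(x)$ survives the $O(J)$ perturbation of the coefficients. Everything else is the same algebra as in the proof of Theorem~\ref{thm:localss}.
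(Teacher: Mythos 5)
Your argument is correct, but it takes a genuinely different route from the paper. The paper obtains this corollary in one line, by combining Theorem~\ref{thm:localss} (the explicit solution~\eqref{E:fp_equation} of the purely algebraic corner equations of the PL reduction) with the assertion that the PL system is a valid approximation of Eq.~\eqref{eqn:ProblemEquation}; the honest work of transferring steady states to the full nonlinear system is actually deferred to Theorem~\ref{thm:one2one}, whose appendix proof solves the quadratic steady-state conditions to build the map $\Gamma^J$ in Eq.~\eqref{eqn:gamma} and then applies Brouwer's fixed point theorem plus uniform convergence of $\Gamma^J$ to $H(Wu+b)$ on compact subsets of each chamber. You instead stay at the corner, rescale by~\eqref{seven}, solve the decoupled $J=0$ layer equations (recovering exactly the condition $h(x)=x$), and continue the solution to $J>0$ by the implicit function theorem, with invertibility of the Jacobian supplied by the same negativity computation used for normal hyperbolicity in Section~\ref{sec:math_PL}; the nonexistence direction is a clean contradiction from $A_{s_0}(u)\frac{1-u_{s_0}}{J+1-u_{s_0}}\le I_{s_0}(u)$. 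What your route buys is a self-contained, local persistence argument that fills in precisely the step the paper glosses over (GSPT as invoked concerns the flow, not directly the persistence of equilibria), together with a locally unique branch and the asymptotics $u_s=J\tilde u_s^*+o(J)$, $1-u_t=J\tilde u_t^*+o(J)$; what the paper's $\Gamma^J$ route buys is a single argument that also delivers the global statement of Theorem~\ref{thm:one2one}, uniqueness in a whole chamber, and asymptotic stability. The two caveats you flag are real but benign: the branch stays in $[0,1]^N$ because $A_s(x)=0$ forces $A_s(u)=O(J)$ near the corner (so $\tilde u_s^*(J)\ge 0$), and membership in $\mathcal{R}_S^T$ requires $J\tilde u_s^*(J)<\delta(J)$, which holds for any of the scalings $\delta\gg J$ contemplated in the Discussion but is not pinned down by the paper either.
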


\subsubsection{Global equivalence of steady states}
\label{sec:math_BN_ss_global}

We proved that, for $J$ small, the steady states at the corner subdomains of the system in Eq.~(\ref{eqn:ProblemEquation}) are in a one-to-one correspondence with the steady states of the BN. However, the corner subdomains only cover a small portion of $[0,1]^N$.  We next show that the steady state correspondence is global.

Recall that $\Omega=\{\mathcal{C}: \mathcal{C} \textrm{ is a chamber}\}$ and that  each chamber is  a connected component of $[0,1]^N\setminus \cup_{i=1}^N\{u:\sum_{j=1}^N w_{ij} u_j +b_i=0\}$. The chambers are bounded convex open subsets of $[0,1]^N$ (with the topology inherited from $[0,1]^N$). Also note that $[0,1]^N\setminus \cup_{\mathcal{C} \in\Omega}\mathcal{C}$ is a finite union of hyperplanes and hence has measure zero. We denote by $\mathcal{C}(x)$ the chamber that contains the point $x \in [0,1]^N$.

\begin{theorem}\label{thm:one2one}
Let $K$ be a compact subset of $\cup_{\mathcal{C}\in\Omega}\mathcal{C}$ such that $K\cap \mathcal{C}$ is convex for any $\mathcal{C}\in\Omega$, and such that $K\cap \mathcal{C}(x)$ contains a neighborhood of $x$ for each $x\in \{0,1\}^N$. 

Then, there is an $\epsilon_K>0$ such that for all $0<J<\epsilon_K$,  
if $x^*\in\{0,1\}^N$ is a steady state of the BN in Eq.~\eqref{eqn:mainReduced_BN}, then the ODE in Eq.~\eqref{eqn:ProblemEquation} has a steady state $u^*\in \mathcal{C}(x^*)$; and if $u^*\in K$ is a steady state of the ODE, then $u^*\in \mathcal{C}(x^*)$ for some steady state $x^*$ of the BN. 

Furthermore, if $x^*$ is a steady state of the BN in Eq.~\eqref{eqn:mainReduced_BN}, then the steady state of the ODE in Eq.~\eqref{eqn:ProblemEquation} is unique in $K\cap \mathcal{C}(x^*)$, converges to $x^*$ as $J\rightarrow 0$, and is asymptotically stable.
\end{theorem}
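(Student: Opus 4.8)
The plan is to work chamber-by-chamber, reducing the question to a fixed-point problem for the full ODE inside a single chamber $\mathcal{C}(x^*)$, and then to use the PL approximation together with Theorem~\ref{thm:localss} as the bookkeeping device for which chambers actually contain steady states. First I would recall that a steady state $u^*$ of Eq.~\eqref{eqn:ProblemEquation} in the interior of a chamber $\mathcal{C}$ is, by definition of a chamber, a point where the signs of all the linear forms $\sum_j w_{ij}u_j + b_i$ are fixed; hence for $i$ with $x^*_i = 1$ (so the form is positive on $\mathcal{C}$) the balance equation $A_i \frac{1-u_i}{J+1-u_i} = I_i\frac{u_i}{J+u_i}$ forces $u_i$ close to $1$, and for $i$ with $x^*_i = 0$ it forces $u_i$ close to $0$, quantitatively with an $O(J)$ gap. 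This is exactly the content of the corner-subdomain analysis behind Eq.~\eqref{E:fp_equation}: the ODE steady-state equations, once we substitute the frozen signs, are the same algebraic equations, and they have a solution in $[0,1]^N$ iff $H(A_i(x^*)-I_i(x^*)) = x^*_i$ for all $i$, i.e.\ iff $x^*$ is a steady state of the BN. So the correspondence ``$x^*$ a BN steady state $\Longleftrightarrow$ Eq.~\eqref{eqn:ProblemEquation} has a steady state in $\mathcal{C}(x^*)$'' reduces to showing that the candidate solution produced this way is (a) genuinely inside $\mathcal{C}(x^*)$ for $J$ small, (b) inside $K$, (c) unique there, and (d) asymptotically stable.

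For existence and location, I would set up a contraction mapping: write the steady-state equations for Eq.~\eqref{eqn:ProblemEquation} restricted to $K\cap\mathcal{C}(x^*)$ as a fixed-point map $u = \Phi_J(u)$, where for each $i$ one solves the scalar balance equation for $u_i$ in terms of the other coordinates. Because $K\cap\mathcal{C}(x^*)$ is convex, compact, and (by hypothesis) contains a neighborhood of $x^*$, and because each scalar balance equation is a strictly monotone scalar equation whose unique root depends smoothly on the parameters and converges to $x^*_i \in \{0,1\}$ as $J\to 0$ uniformly on $K$, the map $\Phi_J$ maps a small closed ball around $x^*$ (inside $K\cap\mathcal{C}(x^*)$) into itself and is a contraction there once $J < \epsilon_K$; the Banach fixed point theorem then gives existence and uniqueness in that ball, and a separate monotonicity/sign argument (there can be no steady state in a region of $K\cap\mathcal{C}(x^*)$ bounded away from $x^*$, because there at least one coordinate has its balance equation violated by an $O(1)$ amount) upgrades uniqueness to all of $K\cap\mathcal{C}(x^*)$. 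The constant $\epsilon_K$ depends on $K$ through the distance from $K$ to the bounding hyperplanes and through the Lipschitz constants of the forms, which is why the theorem must localize to such a $K$. The converse direction — a steady state $u^*\in K$ of the ODE lies in $\mathcal{C}(x^*)$ for some BN steady state $x^*$ — is immediate once we know $u^*\in\mathcal{C}$ for some $\mathcal{C}\in\Omega$ (true since $u^*\in K\subseteq \cup\mathcal{C}$), because freezing the signs on $\mathcal{C}$ and running the same algebra shows the defining chamber must be one of the $\mathcal{C}_S^T$ with $H(Wx+b)=x$.

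For asymptotic stability, I would compute the Jacobian of Eq.~\eqref{eqn:ProblemEquation} at $u^*$ and show it is a small perturbation of a matrix that is ``obviously'' stable in the singular limit. The key observation is a separation of scales: the diagonal entries $\partial_{u_i}(\text{rhs}_i)$ pick up the derivative of the Hill factor that is saturating, which is of order $1/J$ and strictly negative (this is exactly the normal-hyperbolicity computation $\partial F_s/\partial\tilde u_s < 0$, $\partial G_t/\partial\tilde u_t<0$ already carried out in Section~\ref{sec:math_PL}), while the off-diagonal entries coming from $W$ remain $O(1)$. Hence after conjugating by the diagonal rescaling $\mathrm{diag}(J^{-1})$ appropriate to the saturating directions, the Jacobian becomes block-dominated by a strictly negative diagonal with $O(J)$ off-diagonal corrections, so by Gershgorin (or by continuity of eigenvalues and the fact that at $J=0$ the reduced system on the slow manifold is just the decoupled scalar sign conditions) all eigenvalues have negative real part for $J$ small. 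I expect the main obstacle to be the bookkeeping in the contraction estimate: making the dependence of the ball radius and of $\epsilon_K$ on $K$ fully explicit and uniform over all $x^*\in\{0,1\}^N$ simultaneously, and in particular checking that the ``no steady state away from $x^*$'' argument interacts correctly with the hypothesis that $K\cap\mathcal{C}(x^*)$ contains a neighborhood of $x^*$ but may otherwise be an arbitrary compact convex piece of the chamber. The stability computation, while notation-heavy, is essentially forced by the normal hyperbolicity already established.
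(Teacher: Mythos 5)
Your proposal is correct and follows essentially the same route as the paper: your map $\Phi_J$ obtained by solving each scalar balance equation for $u_i$ is exactly the paper's $\Gamma^J$ in Eq.~\eqref{eqn:gamma}, and both arguments rest on uniform convergence of this map (and its derivative) to the constant $H(Wx^*+b)$ on compact subsets of chambers, a fixed-point theorem on $K\cap\mathcal{C}(x^*)$, and a Gershgorin argument pitting the $O(1/J)$ negative diagonal of the Jacobian against its $O(1)$ off-diagonal entries. The only cosmetic difference is that you invoke Banach on a small ball plus a separate global-uniqueness step, whereas the paper applies Brouwer directly to the convex compact set $K\cap\mathcal{C}(x^*)$ and gets uniqueness from $\|D\Gamma^J\|\to 0$ there.
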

\begin{proof}
See Appendix.
\end{proof} 

We can make the set $K$ in Theorem \ref{thm:one2one} as close to $[0,1]^N$ as desired. For example, for each chamber $\mathcal{C}$ and for $r>0$, denote $K_\mathcal{C}:=\{u\in\mathcal{C}:|\sum_{j=1}^N w_{ij}u_j+b_i|\geq r, \forall i\}$. By using $r$ small, and denoting Lebesgue measure by $\mu$, we can make $\mu([0,1]^N\setminus \cup_{\mathcal{C}\in\Omega}K_{\mathcal{C}})=\mu( \cup_{\mathcal{C}\in\Omega} (\mathcal{C}\setminus K_{\mathcal{C}}))= \sum_{\mathcal{C}\in\Omega} \mu(\mathcal{C}\setminus K_{\mathcal{C}})$  as small as desired. Hence, we have the following corollary.

\begin{corollary}\label{cor:one2one}
For any $\epsilon>0$, there is a set $K\subseteq [0,1]^N$ satisfying $\mu([0,1]^N\setminus K)<\epsilon$ and a number $\epsilon_K$ such that for all $0<J<\epsilon_K$, there is a one-to-one correspondence between the steady states of the BN in Eq.~\eqref{eqn:mainReduced_BN} and the steady states in $K$ of the ODE in Eq.~\eqref{eqn:ProblemEquation}. Furthermore, the steady states of the ODE in $K$ are asymptotically stable.
\end{corollary}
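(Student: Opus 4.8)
The plan is to derive the corollary directly from Theorem~\ref{thm:one2one} by producing, for each prescribed tolerance, a set $K$ that exhausts $[0,1]^N$ in Lebesgue measure while meeting the three hypotheses of that theorem: compactness, convexity of $K\cap\mathcal{C}$ for every $\mathcal{C}\in\Omega$, and the requirement that $K\cap\mathcal{C}(x)$ contain a neighborhood of $x$ for each corner $x\in\{0,1\}^N$. The candidate is the explicit family already introduced just before the statement: for $r>0$ put $K(r):=\bigcup_{\mathcal{C}\in\Omega}K_{\mathcal{C}}$ with $K_{\mathcal{C}}=\{u\in\mathcal{C}:|\sum_{j=1}^{N}w_{ij}u_j+b_i|\ge r\ \text{for all }i\}$, and I would show that for $r$ small this $K(r)$ does the job.

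First I would verify the structural hypotheses. Writing $\ell_i(u):=\sum_{j=1}^{N}w_{ij}u_j+b_i$, on any fixed chamber $\mathcal{C}$ each $\ell_i$ has constant sign, so on $\overline{\mathcal{C}}$ the constraint $|\ell_i(u)|\ge r$ reduces to a single linear inequality; hence $K_{\mathcal{C}}=\overline{\mathcal{C}}\cap\bigcap_{i=1}^{N}\{|\ell_i|\ge r\}$ is a bounded intersection of convex sets, so it is compact and convex, and it is contained in $\mathcal{C}$ because $|\ell_i|\ge r>0$ keeps it off every hyperplane. Since $\Omega$ is finite, $K(r)$ is compact and $K(r)\cap\mathcal{C}=K_{\mathcal{C}}$ is convex for every $\mathcal{C}\in\Omega$. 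For the corner condition, set $\rho:=\min\{|\ell_i(x)|:x\in\{0,1\}^N,\ 1\le i\le N\}$, which is strictly positive by the standing nondegeneracy assumption $\ell_i(x)\ne 0$; if $r<\rho$ then every corner $x$ lies in $K_{\mathcal{C}(x)}$, and by continuity of the $\ell_i$ so does a whole neighborhood of $x$, which is exactly what Theorem~\ref{thm:one2one} requires.

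Next I would shrink the complement. For each chamber the set $\mathcal{C}\setminus K_{\mathcal{C}}=\{u\in\mathcal{C}:|\ell_i(u)|<r\ \text{for some }i\}$ decreases to the empty set as $r\to 0$ (a point surviving for all $r$ would satisfy $\ell_i(u)=0$ for some $i$, which is impossible in an open chamber), so continuity from above of Lebesgue measure gives $\mu(\mathcal{C}\setminus K_{\mathcal{C}})\to 0$. Because $[0,1]^N\setminus K(r)$ is the union of these finitely many sets together with the hyperplane arrangement $\bigcup_{i=1}^{N}\{\ell_i=0\}$, which has measure zero, we obtain $\mu([0,1]^N\setminus K(r))\to 0$ as $r\to 0$. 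Thus, given $\epsilon>0$, I would fix $r$ with $r<\rho$ and $\mu([0,1]^N\setminus K(r))<\epsilon$, set $K:=K(r)$, and invoke Theorem~\ref{thm:one2one} to obtain $\epsilon_K>0$ so that, for $0<J<\epsilon_K$: each BN steady state $x^*$ yields an asymptotically stable ODE steady state $u^*(x^*)$, the unique ODE steady state in $K\cap\mathcal{C}(x^*)$; and conversely each ODE steady state lying in $K$ belongs to $\mathcal{C}(x^*)$ for some BN steady state $x^*$.

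It remains to assemble these facts into a bijection. The assignment $x^*\mapsto u^*(x^*)$ is injective: two distinct BN steady states cannot lie in a common chamber, since by Eq.~\eqref{eqn:mainReduced_BN} the vector $H(Wu+b)$ is constant as $u$ ranges over a chamber, so two corners $x_1^*\ne x_2^*$ in one chamber would give $x_1^*=h(x_1^*)=H(Wx_1^*+b)=H(Wx_2^*+b)=h(x_2^*)=x_2^*$, a contradiction; and distinct chambers are disjoint. The assignment is onto the ODE steady states lying in $K$: any such $u^*$ belongs to $K\cap\mathcal{C}(x^*)$ for some BN steady state $x^*$, and by the uniqueness clause of Theorem~\ref{thm:one2one} it must equal $u^*(x^*)$. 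Asymptotic stability of these ODE steady states is part of the theorem's conclusion, so the corollary follows. I expect the only genuine obstacle to be Theorem~\ref{thm:one2one} itself, whose proof is deferred to the Appendix; granting that result, the remaining work is the elementary bookkeeping of choosing $r$ small enough to satisfy both the corner-neighborhood requirement and the measure bound at once, for which the formula defining $K_{\mathcal{C}}$ is tailor-made.
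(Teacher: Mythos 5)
Your proposal is correct and follows essentially the same route as the paper, which constructs exactly the sets $K_{\mathcal{C}}=\{u\in\mathcal{C}:|\sum_{j=1}^N w_{ij}u_j+b_i|\geq r,\ \forall i\}$ and lets $r\to 0$ before invoking Theorem~\ref{thm:one2one}. You supply details the paper leaves implicit (verification of compactness, convexity, the corner-neighborhood condition, and the injectivity/surjectivity bookkeeping), but the underlying argument is the same.
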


Note that the set $K$ does not include the nullclines.  Hence,  steady states  outside $K$ are possible, and they could be stable. Such steady states can be studied using the PL approximation in Eq.~\eqref{eqn:mainReduced_RSTa}.

\subsubsection{Equivalence of trajectories}
\label{sec:math_BN_trajectories}

We next examine under which conditions  the trajectories of the BN in Eq.~(\ref{eqn:mainReduced_BN}) correspond trajectories of
the ODE given in Eq.~(\ref{eqn:ProblemEquation}). The main assumption in the rest of this section is that the hyperplanes divide $[0,1]^N$ into $2^N$ chambers, and each chamber contains a corner. 

We say that the solutions of the system in Eq.~(\ref{eqn:ProblemEquation})  \emph{transition from $K\subseteq[0,1]^N$ to $K'\subseteq[0,1]^N$} if for any solution of the system, $u(t)$, with initial condition $u(0)\in K$, there exists $\hat{t}$ such that $u(\hat{t})\in K'$. The \emph{Hamming distance} between $x,y\in\{0,1\}^N$ is defined as $d(x,y):=|\{i\in\{1,\ldots,N\}:x_i\neq y_i\}|$. We denote a transition $h(x)=y$ (i.e. $H(Wx+b)=y$) with $x\rightarrow y$. A trajectory is a sequence of transitions and is denoted similarly.
We call a transition $x\rightarrow y$ \emph{regular} if either (1) $x=y$ or (2) $d(x,y)=1$ and $h_j(y)=y_j$ for the index $j$ for which $x_i\neq y_i$. 
In other words, a transition  $x\rightarrow y$ in the BN is regular if $x$ is a steady state or if $x$ transitions to $y$ by changing only one coordinate and this coordinate does not change back when transitioning from $y$ to $h(y)$. For example, if $000\rightarrow 100\rightarrow 111$, then $000\rightarrow 100$ is a regular transition ($j=1$); on the other hand, if $000\rightarrow 100\rightarrow 010$, then $000\rightarrow 100$ is not a regular transition. A trajectory is regular if each component transition is regular.

\begin{theorem}\label{thm:transition}
Suppose that the hyperplanes divide $[0,1]^N$ into $2^N$ chambers and consider a regular transition of the BN in Eq.~\eqref{eqn:mainReduced_BN}, $x\rightarrow h(x)$. Then, there is a neighborhood $K$ of $x$, and an $\epsilon_K>0$ such that for all $0<J<\epsilon_K$ the solutions of the ODE in Eq.~\eqref{eqn:ProblemEquation} transition from $K$ to $\mathcal{C}(h(x))$. Also, if $x$ is a steady state, the neighborhood $K$ can be chosen to be invariant.
\end{theorem}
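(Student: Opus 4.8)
The plan is to work in the singular limit $J\to 0$ first, using the GSPT machinery from Section~\ref{sec:math_PL}, and then transfer the conclusion to small positive $J$ by continuity. Fix the corner $x$, let $S=\{i:x_i=0\}$, $T=\{i:x_i=1\}$, so that $x$ is the corner of $\mathcal{C}(x)=\mathcal{C}_S^T$, and let $y=h(x)=H(Wx+b)$. The regularity hypothesis means either $y=x$ (the steady-state case, handled separately below), or there is a single index $j$ with $y_j\neq x_j$ and with $h_j(y)=y_j$. I would first choose coordinates so that, say, $j\in S$ and $x_j=0$, $y_j=1$ (the case $j\in T$ is symmetric). By the assumption that the $N$ hyperplanes cut $[0,1]^N$ into exactly $2^N$ chambers each containing a corner, each chamber is the set where $\mathrm{sign}(\sum_j w_{ij}u_j+b_i)$ is prescribed for all $i$, and these signs are locally constant; in particular on $\mathcal{C}_S^T$ the $i$-th component of $Wu+b$ has the sign $(-1)^{1-y_i}$, i.e. is negative for $i$ with $y_i=0$ and positive for $i$ with $y_i=1$.

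Next I would describe the dynamics near $x$ using the PL approximation $\eqref{eqn:mainReduced_RST}$, which by the results of Section~\ref{sec:math_PL} tracks $\eqref{eqn:ProblemEquation}$ to leading order in $J$ on any compact subset of a chamber. In the rescaled variables $\eqref{seven}$, the boundary coordinates $\tilde u_s$ ($s\in S$), $\tilde u_t$ ($t\in T$) are fast and, by the normal hyperbolicity established in Section~\ref{sec:math_PL}, are slaved to the stable slow manifold given by $\eqref{twelve}$–$\eqref{thirteen}$; on that manifold the slow (unscaled, in-the-interior-of-the-face) coordinates obey the decoupled affine equation $\eqref{eqn:mainReduced_RSTa}$. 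I would track the slow coordinate $u_j$: since $j\in S$ with $y_j=1$, the coefficient $A_j(x)-I_j(x)=\sum_k w_{jk}x_k+b_j>0$, so solving $\eqref{eqn:mainReduced_RSTb}$ gives a slow-manifold value $u_j = -A_j(x)J/(A_j(x)-I_j(x)) + o(J)<0$ — i.e. no steady state of the $j$-component exists inside the face, and the flow on the slow manifold pushes $u_j$ to \emph{increase} away from $0$. Quantitatively, along the slow manifold $\dot u_j$ is bounded below by a positive constant $c>0$ (depending only on $W,b$) for $J$ small, so after an $O(1)$ time $u_j$ crosses the threshold $\delta$ and the solution leaves $\mathcal{R}_S^T$ entering $\mathcal{R}_{S\setminus\{j\}}^{T}$. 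I then need to check the solution does not leave $\mathcal{C}_S^T$ through any \emph{other} face in the meantime: for $i\neq j$ in $S\cup T$ the same computation shows the relevant slow-manifold value is $O(J)$ on the correct side (because $h_i(x)=x_i$ would be needed for a steady state there, but the only index where $h$ disagrees with $x$ is $j$), so those boundary coordinates stay pinned within $O(J)$ of $0$ or $1$ and in particular inside the $\delta$-layer; meanwhile the interior coordinates $\hat u$, governed by the affine flow $\eqref{eqn:mainReduced_RSTa}$ with the sign pattern of $Wx+b$ constant on $\mathcal{C}_S^T$, cannot hit a nullcline before $u_j$ leaves, by the $2^N$-chamber hypothesis (a nullcline crossing would mean entering a different chamber, contradicting that $u_j$ is the first coordinate to move). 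Once in $\mathcal{R}_{S\setminus\{j\}}^T$, the condition $h_j(y)=y_j=1$ guarantees $u_j$ keeps increasing, so the solution moves monotonically into the interior of $\mathcal{C}(h(x))=\mathcal{C}_{S\setminus\{j\}}^T$ and reaches a fixed compact neighborhood of its corner. Taking $K$ a small enough neighborhood of $x$ (inside $\mathcal{C}_S^T$, respecting the $\delta$-layers) and $\epsilon_K$ small enough that the GSPT error is negligible on the relevant compact pieces gives the transition from $K$ to $\mathcal{C}(h(x))$.

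For the steady-state case $y=x$: by Corollary~\ref{cor:localss} the ODE has a steady state $u^*$ in $\mathcal{R}_S^T$ with $u^*\to x$, and by the stability part of Theorem~\ref{thm:one2one} (or directly, since the Jacobian inherits normal hyperbolicity from Section~\ref{sec:math_PL} and the affine interior dynamics can be checked to be attracting because the sign pattern points inward on $\mathcal{C}_S^T$) $u^*$ is asymptotically stable; its basin contains a neighborhood $K$ of $x$, which is then forward-invariant, as required.

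The main obstacle I anticipate is the bookkeeping needed to rule out premature exit through the "wrong" face: one must argue that among all the boundary coordinates only $u_j$ moves an $O(1)$ amount while all others remain trapped in their $\delta$-layers, and that the interior coordinates do not cross a nullcline first. Both facts ultimately rest on the $2^N$-chamber assumption together with the explicit $O(J)$ slow-manifold formulas from $\eqref{E:fp_equation}$, but making the "first to leave" argument rigorous — choosing $\delta=\delta(J)$ and $K$ consistently, and controlling the GSPT approximation uniformly over the $O(1)$ time interval and over the two (or more) consecutive subdomains the trajectory visits — is where the real work lies. The rest is linear ODE bookkeeping on the decoupled system $\eqref{eqn:mainReduced_RSTa}$.
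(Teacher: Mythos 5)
Your proposal correctly identifies the relevant sign structure (regularity forces $(Wx+b)_j$ and $(Wh(x)+b)_j$ to have the same sign, so the transitioning coordinate moves monotonically toward $h_j(x)$ while every other coordinate is pushed back toward $x_i$), but routing the argument through the GSPT/slow-manifold machinery of Section~\ref{sec:math_PL} creates two genuine gaps. First, the theorem requires a \emph{fixed} neighborhood $K$ of $x$ valid for all $0<J<\epsilon_K$, whereas the corner-subdomain reduction \eqref{eqn:mainReduced_RST} only describes the dynamics inside boundary layers of thickness $\d=\d(J)\to 0$; a $K$ chosen ``inside the $\d$-layers'' necessarily shrinks with $J$, and for a fixed $K$ most initial conditions lie in subdomains where your fast/slow decomposition does not apply (moreover, in a corner subdomain $S\cup T=\{1,\dots,N\}$ there are no interior coordinates $\hat u$, so that part of your argument is vacuous, and $u_j$ with $j\in S$ is a boundary coordinate, not a slow one). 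Second --- and this is the crux --- your exclusion of a premature exit through a ``wrong'' face rests on the assertion that a nullcline crossing ``would contradict that $u_j$ is the first coordinate to move''; this is circular. Nothing in the $2^N$-chamber hypothesis orders the times at which the quantities $\sum_k w_{ik}u_k(t)+b_i$ reach zero, and you yourself flag this step as ``where the real work lies'' without carrying it out. A related issue affects the steady-state case: the basin of attraction of $u^*$ depends on $J$, so asymptotic stability alone does not yield a $J$-uniform invariant neighborhood.

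The paper's proof resolves exactly these points and uses no GSPT. It works directly with the nonlinear field $f^J$, using only the one-sided bound $f_i^J(u)\le A_i(u)-I_i(u)\frac{u_i}{J+u_i}$ and uniform convergence of $f^J$ to $Wu+b$ on compact sets where $|\sum_k w_{ik}u_k+b_i|\ge r>0$. For $x=h(x)$ it takes $K=[0,\d]^N$ (a fixed cube at the corner) and checks the field points inward on each face for all small $J$. For $d(x,h(x))=1$ it builds $K$ explicitly as a tube: take the common face of $\mathcal{C}(x)$ and $\mathcal{C}(h(x))$ shrunk by a margin $r$ (the set $L$), extend it into a box parallel to the $u_j$-axis, and set $K=B\cap\mathcal{C}(x)$. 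The margin $r$ guarantees the field points strictly inward on every face of $K$ other than $L$ uniformly for small $J$, while the $j$-th component of the field is uniformly signed on $K$, so every solution must exit through $L$ into $\mathcal{C}(h(x))$. To salvage your approach you would need to import precisely this kind of explicit polytope with margins; the slow-manifold formulas \eqref{E:fp_equation} are not needed for this theorem.
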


\begin{proof}
See Appendix.
\end{proof}

Next for a steady state $x$ define $B_1(x)=\{y\in\{0,1\}^N:h(y)=x\textrm{ and } d(x,y)\leq 1\}$; that is, $B_1(x)$ is the set of states in the basin of attraction of $x$ with Hamming distance at most 1 from $x$. The following corollary of Theorem \ref{thm:transition} states that this part of the basin of attraction of $x$ corresponds to part of the basin of attraction of the steady state in the ODE in Eq.~(\ref{eqn:ProblemEquation})  corresponding to $x$.
\begin{corollary}
Suppose that $x$ is a steady state of the BN in Eq.~\eqref{eqn:mainReduced_BN}. Then, for every $y\in B_1(x)$, there is a neighborhood $K$ of $y$  and $\epsilon_K>0$, such that for all $0<J<\epsilon_K$  the solutions of the ODE in Eq.~\eqref{eqn:ProblemEquation} transition from $K$ to  $\mathcal{C}(x)$.
\end{corollary}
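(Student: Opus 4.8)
The plan is to deduce this corollary directly from Theorem~\ref{thm:transition} together with the definition of a regular transition. Fix a steady state $x$ of the BN and let $y\in B_1(x)$. By definition $h(y)=x$ and $d(x,y)\le 1$. I would split into the two cases allowed by the definition of regularity. If $y=x$, then $y$ is itself a steady state and the claim is exactly the last sentence of Theorem~\ref{thm:transition}: there is an invariant (in particular, neighborhood) set $K$ of $y=x$ from which solutions remain in $\mathcal{C}(x)=\mathcal{C}(h(x))$, so trivially solutions transition from $K$ to $\mathcal{C}(x)$ (take $\hat t=0$). If $y\neq x$, then $d(x,y)=1$; let $j$ be the unique index with $y_j\neq x_j$. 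Since $h(y)=x$, we have $h_j(y)=x_j=y_j$ for this index $j$ (indeed $h(y)=x$ forces equality in \emph{every} coordinate, in particular the $j$-th). Hence by the definition of a regular transition, $y\rightarrow h(y)=x$ is a regular transition of the BN.

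Having established that $y\rightarrow h(y)$ is regular, I would apply Theorem~\ref{thm:transition} with the role of ``$x$'' played by $y$. Under the standing assumption of this subsection---that the hyperplanes divide $[0,1]^N$ into $2^N$ chambers, each containing a corner---the theorem yields a neighborhood $K$ of $y$ and an $\epsilon_K>0$ such that for all $0<J<\epsilon_K$ the solutions of Eq.~\eqref{eqn:ProblemEquation} transition from $K$ to $\mathcal{C}(h(y))$. Since $h(y)=x$, this says precisely that solutions transition from $K$ to $\mathcal{C}(x)$, which is the assertion of the corollary. The two cases together cover all $y\in B_1(x)$, completing the argument.

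There is essentially no hard part here: the corollary is a packaging of Theorem~\ref{thm:transition} in the language of basins of attraction, and the only thing to check is that membership in $B_1(x)$ with $d(x,y)=1$ forces the transition $y\rightarrow h(y)$ to be regular, which is immediate from $h(y)=x$. The only mild subtlety worth a sentence is the degenerate case $y=x$, where one invokes the ``invariant neighborhood'' clause of the theorem rather than a genuine transition; one should note that an invariant neighborhood $K$ of $x$ automatically ``transitions from $K$ to $\mathcal{C}(x)$'' because $K\subseteq\mathcal{C}(x)$ (a neighborhood of $x$ inside its own chamber) and one may take $\hat t=0$. I would also remark that $K$ can be shrunk if necessary so that $K\subseteq\mathcal{C}(y)$, which is harmless since $\mathcal{C}(y)$ is open and contains $y$; this keeps the statement ``$K$ is a neighborhood of $y$'' consistent with the chamber structure, though it is not strictly needed for the conclusion.
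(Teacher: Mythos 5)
Your overall route is exactly the intended one: the corollary is just Theorem~\ref{thm:transition} applied to the transition $y\rightarrow h(y)=x$, and the only substantive point is to verify that this transition is regular. However, your verification of regularity is garbled. For a transition $a\rightarrow b$ with $d(a,b)=1$, the definition requires $h_j(b)=b_j$ for the index $j$ at which $a_j\neq b_j$ --- i.e.\ the condition is on $h$ evaluated at the \emph{target} $b$, expressing that the flipped coordinate does not flip back at the next step. For the transition $y\rightarrow x$ you therefore need $h_j(x)=x_j$, not anything about $h_j(y)$. Moreover, your chain of equalities ``$h_j(y)=x_j=y_j$'' is self-contradictory: $j$ is by definition the index at which $x_j\neq y_j$, so $x_j=y_j$ cannot hold. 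The correct one-line argument is simply that $x$ is a steady state of the BN, so $h(x)=x$ and in particular $h_j(x)=x_j$ for every $j$; hence $y\rightarrow x$ is regular whenever $d(x,y)=1$ and $h(y)=x$. (Note this is where the hypothesis that $x$ is a steady state enters; your version never actually uses it in the $y\neq x$ case.)

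With that correction the rest is fine and matches the paper, which states the corollary without proof as an immediate consequence of Theorem~\ref{thm:transition}: the case $y=x$ is handled by the steady-state clause of the theorem (an invariant neighborhood $K\subseteq\mathcal{C}(x)$, so one may take $\hat t=0$), and the case $d(x,y)=1$ by the regular-transition clause, yielding a neighborhood $K$ of $y$ and $\epsilon_K>0$ such that for $0<J<\epsilon_K$ solutions transition from $K$ to $\mathcal{C}(h(y))=\mathcal{C}(x)$.
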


Note that Theorem \ref{thm:transition} implies that for a regular trajectory of the BN in Eq.~(\ref{eqn:mainReduced_BN}),  $x\rightarrow h(x)\rightarrow h^2(x)\rightarrow \ldots \rightarrow h^m(x)$, the solutions of the ODE in Eq.~(\ref{eqn:ProblemEquation}) will transition from a neighborhood of $x$ to $\mathcal{C}(h(x))$, from a neighborhood of $h(x)$ to $\mathcal{C}(h^2(x))$ and so on. 
To guarantee that a neighborhood of $x$ will reach a neighborhood of $h^m(x)$ (that is, to guarantee that the result is ``transitive''), we need the additional assumption that each hyperplane is orthogonal to some coordinate axis. Note that the example given in Section \ref{sec:3nodes} satisfies this condition.

\begin{theorem}\label{thm:trajectory}
Suppose that  each hyperplane is orthogonal to some coordinate axis and let  $x\rightarrow h(x)\rightarrow \ldots\rightarrow h^m(x)$ be a regular trajectory of the BN in Eq.~(\ref{eqn:mainReduced_BN}). Then, for any compact set $K\subset \mathcal{C}(x)$ there is $\epsilon_{K}>0$ such that for all $0<J<\epsilon_{K}$, the solutions of the ODE in Eq.~\eqref{eqn:ProblemEquation} transition from $K$ to $\mathcal{C}(h^m(x))$ following the order of the regular trajectory.
\end{theorem}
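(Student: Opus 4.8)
The plan is to bootstrap Theorem \ref{thm:transition} along the trajectory, using the extra hypothesis (each hyperplane orthogonal to a coordinate axis) to make the one-step transition results compose. The obstacle that Theorem \ref{thm:transition} alone does not resolve is that it only guarantees that solutions starting in \emph{some} neighborhood $K$ of $x$ reach $\mathcal{C}(h(x))$, but gives no control on \emph{which} subset of $\mathcal{C}(h(x))$ they land in; to iterate, we need the image to sit inside the particular neighborhood of $h(x)$ that the next application of the theorem requires. The orthogonality assumption is exactly what fixes this: when every hyperplane is of the form $\{u_i = c_i\}$, the chambers are axis-aligned boxes, and a regular transition $x \to h(x)$ changes exactly one coordinate, say coordinate $j$, from $x_j$ to $1-x_j$. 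Because the transition is regular, $h_j(h(x)) = h_j(x) = 1-x_j$ as well, so in the chamber $\mathcal{C}(h(x))$ the $j$-th component of the vector field $Wu+b$ keeps the sign that drove $u_j$ across the threshold; meanwhile each other coordinate $i \neq j$ has $h_i(h(x))$ possibly differing from $h_i(x)$ only if $u_i$ crossed its threshold $c_i$, but by the box structure $u_i$ stays on the correct side throughout. This lets us show that, starting from a compact $K \subset \mathcal{C}(x)$, the solution enters $\mathcal{C}(h(x))$ and, after a further short time, enters \emph{any prescribed} compact subset $K' \subset \mathcal{C}(h(x))$ — because in the box $\mathcal{C}(h(x))$ the linear system Eq.~\eqref{eqn:mainReduced_RSTa} (with the boundary coordinates slaved) drives $u_j$ monotonically away from the face it just crossed, uniformly in $J$ for $J$ small.

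Concretely, I would proceed by induction on $m$. The base case $m=0$ is trivial. For the inductive step, suppose the result holds for the trajectory $h(x) \to \cdots \to h^m(x)$: there is a compact $K_1 \subset \mathcal{C}(h(x))$ and $\epsilon_1 > 0$ so that for $0 < J < \epsilon_1$ solutions from $K_1$ transition through $\mathcal{C}(h^2(x)), \ldots, \mathcal{C}(h^m(x))$ in order. It then suffices to show that for the given compact $K \subset \mathcal{C}(x)$ there is $\epsilon_K \le \epsilon_1$ such that for $0<J<\epsilon_K$ every solution starting in $K$ enters $K_1$ (after possibly passing through $\mathcal{C}(h(x))$). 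This is the heart of the argument. Write $j$ for the unique index with $x_j \ne h(x)_j$; WLOG $x_j = 0$, $h(x)_j = 1$, so the $j$-th hyperplane is $\{u_j = c_j\}$ with $c_j \in (0,1)$, and within $\mathcal{C}(x)$ we have $\frac{du_j}{dt}$ bounded below by a positive constant (using the $0$-th order approximation of Eq.~\eqref{equationInRST} with the slaved coordinates, valid up to $O(J)$ by the GSPT estimates of Section \ref{sec:math_PL}). Hence $u_j$ increases at a uniformly positive rate, the solution crosses $\{u_j = c_j\}$ in finite time bounded uniformly in $J$, and enters $\mathcal{C}(h(x))$; once inside, regularity ($h_j(h(x)) = 1$) gives $\frac{du_j}{dt}$ again bounded below by a positive constant, so $u_j$ continues to increase and the solution penetrates to depth $\ge r$ past the face, landing inside $K_1 := \{u \in \mathcal{C}(h(x)) : \operatorname{dist}(u, \partial \mathcal{C}(h(x))) \ge r\}$ for an appropriately chosen $r$ — and by shrinking $r$ at the start we may assume $K_1$ is also contained in whatever compact set the inductive hypothesis demanded. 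Throughout, the other coordinates $u_i$, $i \ne j$, either evolve by the same linear ODE (if $i \notin S \cup T$) or are slaved near $0$ or $1$; since the chambers are boxes and no other hyperplane is crossed along this leg of the trajectory (the chamber $\mathcal{C}(x)$ and $\mathcal{C}(h(x))$ share only the face $\{u_j = c_j\}$), these coordinates stay in the correct $\mathcal{C}(h(x))$-box. Combining the finite uniform crossing time with continuous dependence on initial conditions and parameters handles the compact initial set $K$ uniformly.

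The main obstacle, as indicated, is the uniform-in-$J$ control: both the lower bound on $\frac{du_j}{dt}$ inside each chamber and the finiteness of the crossing time must hold with constants independent of $J$ for all small $J$. This is where the orthogonality hypothesis is genuinely used — it guarantees that ``crossing a hyperplane'' means ``one scaled boundary coordinate $\tilde u_j$ turning on or off,'' so the slow manifold of Section \ref{sec:math_PL} attached to the face is normally hyperbolic and attracting, and the Fenichel estimates give the $O(J)$ closeness of the true flow to the PL flow uniformly on compact subsets of each chamber. Without orthogonality a hyperplane crossing could involve a genuinely two-dimensional boundary-layer geometry with no such clean slaving, and the composition of one-step results would break down; this is precisely why the weaker Theorem \ref{thm:transition} does not already give transitivity. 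A secondary technical point is that the ``neighborhood $K$ of $x$'' in Theorem \ref{thm:transition} must be enlarged to an arbitrary compact $K \subset \mathcal{C}(x)$; this is handled by noting that in the box $\mathcal{C}(x)$ the decoupled linear system Eq.~\eqref{eqn:mainReduced_RSTa} has all trajectories eventually crossing the face $\{u_j = c_j\}$ (since $u_j$ is monotone with uniformly positive rate), so a single uniform crossing time $T_K$ works for all initial conditions in $K$, and then Theorem \ref{thm:transition}'s neighborhood is reached.
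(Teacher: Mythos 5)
Your proposal is correct and follows essentially the same route as the paper: both arguments enclose $K$ in an axis-aligned box $K^0$ sharing a face with $\mathcal{C}(h(x))$, use the uniform (in small $J$) sign control of the vector field to drive the single changing coordinate across that shared face while the remaining faces repel, conclude that solutions land in a compact subset $K^1$ of the next chamber, and then induct along the regular trajectory. The only cosmetic difference is that you attribute the uniform estimates to Fenichel/slow-manifold slaving, whereas the hyperplanes being crossed here are interior nullclines and the paper needs only the compact (uniform on compacta) convergence of the vector field to $Wu+b$; this does not affect the validity of the argument.
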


\begin{proof}
See Appendix.
\end{proof}

For a steady state of the BN define $B(x)=\{y:\textrm{ there is a regular trajectory from $y$ to $x$}\}$. The following corollary of Theorem~\ref{thm:trajectory} implies that some states in the basin of attraction of a steady state of the BN in Eq.~(\ref{eqn:mainReduced_BN}) correspond to chambers in the basin of attraction of a steady state of the ODE in Eq.~(\ref{eqn:ProblemEquation}).

\begin{corollary} 
Suppose that  each hyperplane is orthogonal to some coordinate axis and let $x$ be a steady state of the BN in Eq.~\eqref{eqn:mainReduced_BN}. Consider $y\in B(x)$. Then, for any compact set $K\subseteq\mathcal{C}(y)$, there exists $\epsilon_{K}>0$ such that for all $0<J<\epsilon_{K}$, the solutions of the ODE in Eq.~\eqref{eqn:ProblemEquation} transition from $K$ to $\mathcal{C}(x)$.
\end{corollary}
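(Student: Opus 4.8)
The plan is to obtain this as an immediate consequence of Theorem~\ref{thm:trajectory}: the set $B(x)$ was defined precisely so that membership $y\in B(x)$ supplies exactly the kind of regular trajectory that Theorem~\ref{thm:trajectory} consumes. So there is really only one idea here, namely ``apply the preceding theorem with the base point moved from $x$ to $y$.''

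Concretely, I would proceed in three short steps. First, unpack the hypothesis: since $y\in B(x)$, by definition there is a regular trajectory of the BN in Eq.~\eqref{eqn:mainReduced_BN} starting at $y$, say $y\to h(y)\to h^2(y)\to\cdots\to h^m(y)$, whose terminal state is the steady state $x$; because $h(x)=x$, we may take $h^m(y)=x$ for some $m\ge 0$ without loss. Second, invoke Theorem~\ref{thm:trajectory}, whose orthogonality hypothesis is assumed here, applied to this regular trajectory (playing the role of ``$x\to h(x)\to\cdots\to h^m(x)$'' in the theorem, but based at $y$): for any compact $K\subseteq\mathcal{C}(y)$ it produces an $\epsilon_K>0$ such that for all $0<J<\epsilon_K$ the solutions of Eq.~\eqref{eqn:ProblemEquation} transition from $K$ to $\mathcal{C}(h^m(y))$, in fact following the order of the trajectory, which is more than the corollary asks. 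Third, use the standing assumption of this subsection that the $N$ hyperplanes cut $[0,1]^N$ into exactly $2^N$ chambers each containing a corner: since $h^m(y)=x$ as a corner of $[0,1]^N$, we get $\mathcal{C}(h^m(y))=\mathcal{C}(x)$, and the stated conclusion follows verbatim.

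I do not expect a genuine obstacle; the work is bookkeeping. The one point that deserves a sentence of care is the identification in Step~1--2: one must check that a ``regular trajectory from $y$ to $x$'' in the sense of the definition of $B(x)$ is literally a legitimate input to Theorem~\ref{thm:trajectory} (same notion of regularity, same orbit structure, only the base point relabeled), and that the constant $\epsilon_K$ delivered depends on nothing beyond $K$ and this fixed finite trajectory, so no uniformity issue arises. Once that is noted, the corollary is immediate.
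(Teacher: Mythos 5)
Your proposal is correct and is exactly the argument the paper intends: the corollary is stated as an immediate consequence of Theorem~\ref{thm:trajectory}, obtained by unpacking the definition of $B(x)$ to produce a regular trajectory from $y$ terminating at $x$ and then applying the theorem with $y$ as the base point. The only remark is that the identification $\mathcal{C}(h^m(y))=\mathcal{C}(x)$ in your third step needs no appeal to the chamber-counting assumption, since $h^m(y)$ and $x$ are literally the same point.
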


Similarly, we obtain the following corollary for oscillatory behavior. 
\begin{corollary}\label{cor:oscilation}
Suppose that  each hyperplane is orthogonal to some coordinate axis and let  $x^1\rightarrow x^2\rightarrow \ldots\rightarrow x^p\rightarrow x^1$ be a regular periodic orbit of the BN in Eq.~(\ref{eqn:mainReduced_BN}). Then, for any compact set $K\subseteq\mathcal{C}(x^1)$ and any positive integer $m$, there exists $\epsilon_{K,m}>0$ such that for all $0<J<\epsilon_{K,m}$, the solutions of the ODE in Eq.~\eqref{eqn:ProblemEquation} transition between the chambers (starting at $K$) in the order $\mathcal{C}(x^1), \mathcal{C}(x^2),\ldots,\mathcal{C}(x^p),\mathcal{C}(x^1)$, $m$ times.
\end{corollary}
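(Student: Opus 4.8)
The plan is to obtain Corollary~\ref{cor:oscilation} as a direct consequence of Theorem~\ref{thm:trajectory}, applied not to the periodic orbit itself but to the single (finite) regular trajectory obtained by concatenating the orbit with itself $m$ times. The first step is to identify this concatenation with an iterate of $h$. Since $x^{i+1}=h(x^i)$ for $1\le i\le p-1$ and $x^1=h(x^p)$, one has $h^k(x^1)=x^{1+(k\bmod p)}$ for every $k\ge 0$, and in particular $h^{p}(x^1)=x^1$, hence $h^{pm}(x^1)=x^1$. Therefore the sequence
\[
x^1\to x^2\to\cdots\to x^p\to x^1\to x^2\to\cdots\to x^p\to x^1
\]
obtained by going around the orbit $m$ times (so $pm$ transitions, ending back at $x^1$) is exactly the BN trajectory $x^1\to h(x^1)\to h^2(x^1)\to\cdots\to h^{pm}(x^1)$.

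Next I would check that this length-$pm$ trajectory is regular in the sense of Theorem~\ref{thm:transition}. Each of its transitions is one of the transitions $x^i\to x^{i+1}$ ($1\le i\le p-1$) or the wrap-around transition $x^p\to x^1$ of the given periodic orbit, all of which are regular by hypothesis; moreover the regularity of a transition $y\to h(y)$ depends only on $y$, $h(y)$ and $h^2(y)$, and at every junction of the concatenation the consecutive pair of transitions is precisely a consecutive pair from the periodic orbit (for instance $x^p\to x^1$ is immediately followed by $x^1\to x^2$, just as in the orbit). Hence no new conditions are imposed and the full length-$pm$ trajectory is regular. Since by assumption each hyperplane is orthogonal to a coordinate axis, Theorem~\ref{thm:trajectory} applies to this trajectory, with its trajectory-length parameter taken equal to $pm$: for the compact set $K\subset\mathcal{C}(x^1)$ it produces $\epsilon_{K,m}>0$ such that for all $0<J<\epsilon_{K,m}$ the solutions of Eq.~\eqref{eqn:ProblemEquation} transition from $K$ to $\mathcal{C}(h^{pm}(x^1))=\mathcal{C}(x^1)$ following the order of the trajectory, i.e.\ they pass successively through $\mathcal{C}(x^1),\mathcal{C}(x^2),\ldots,\mathcal{C}(x^p),\mathcal{C}(x^1)$, $m$ times. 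This is exactly the assertion of the corollary, so I would set $\epsilon_{K,m}$ to be the constant delivered by Theorem~\ref{thm:trajectory}.

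All of the analytic content is inherited from Theorem~\ref{thm:trajectory}, so I do not expect a serious obstacle; the only delicate points are bookkeeping. The main one is to verify that regularity of the periodic orbit — including the wrap-around transition $x^p\to x^1$ — is precisely what makes the concatenated length-$pm$ trajectory regular, so that Theorem~\ref{thm:trajectory} may legitimately be invoked in one shot. (Applying the theorem $m$ times in succession instead would force one to control a compact subset of $\mathcal{C}(x^1)$ in which the solution re-enters after each lap, which the theorem as stated does not directly supply; the one-shot application sidesteps this.) A secondary point is to confirm, against the proof of Theorem~\ref{thm:trajectory}, that its conclusion ``following the order of the regular trajectory'' does not tacitly assume the visited chambers are distinct; this is harmless, since the three-node example in Section~\ref{sec:3nodes} already exhibits a regular periodic orbit visiting six distinct chambers, so revisiting a chamber is explicitly intended.
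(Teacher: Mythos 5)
Your proof is correct and is essentially the argument the paper intends: the paper states this result as an immediate corollary of Theorem~\ref{thm:trajectory}, obtained exactly by unrolling the periodic orbit into the length-$pm$ regular trajectory $x^1\to h(x^1)\to\cdots\to h^{pm}(x^1)$ and applying the theorem once. Your checks that regularity is preserved under concatenation (since regularity of a transition $y\to h(y)$ is determined by $h$ alone) and your remark on why the one-shot application is preferable to iterating the theorem $m$ times are both accurate and consistent with the paper's inductive proof of Theorem~\ref{thm:trajectory}.
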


Note that the example in Section \ref{sec:3nodes} satisfies the hypothesis of this last corollary. In general, Corollary \ref{cor:oscilation} does not guarantee that the solution is periodic.

Finally, we note that the requirement that there are $2^N$ chambers, each containing a corner is necessary.  Even if we have a transition where only one variable changes (e.g. $h(1,0)=(0,0)$), having an intermediate chamber can change the behavior of the solutions before they reach the chamber predicted by the BN. In the example shown in Fig.~\ref{fig:twoDimension_counterexample_regular} the signs of the vector field of the approximating linear system imply that the BN transitions from $(1,0)$ to $(0,0)$.  However, solutions can transition from the chamber that contains $(1,0)$ to the bottom middle chamber and never reach the chamber that contains $(0,0)$. In summary, even having a transition where only one variable changes may not be sufficient to guarantee that the Boolean transition corresponds to a similar transition in the original system.

\begin{figure}[h]
\begin{center}
\includegraphics[scale = .35]{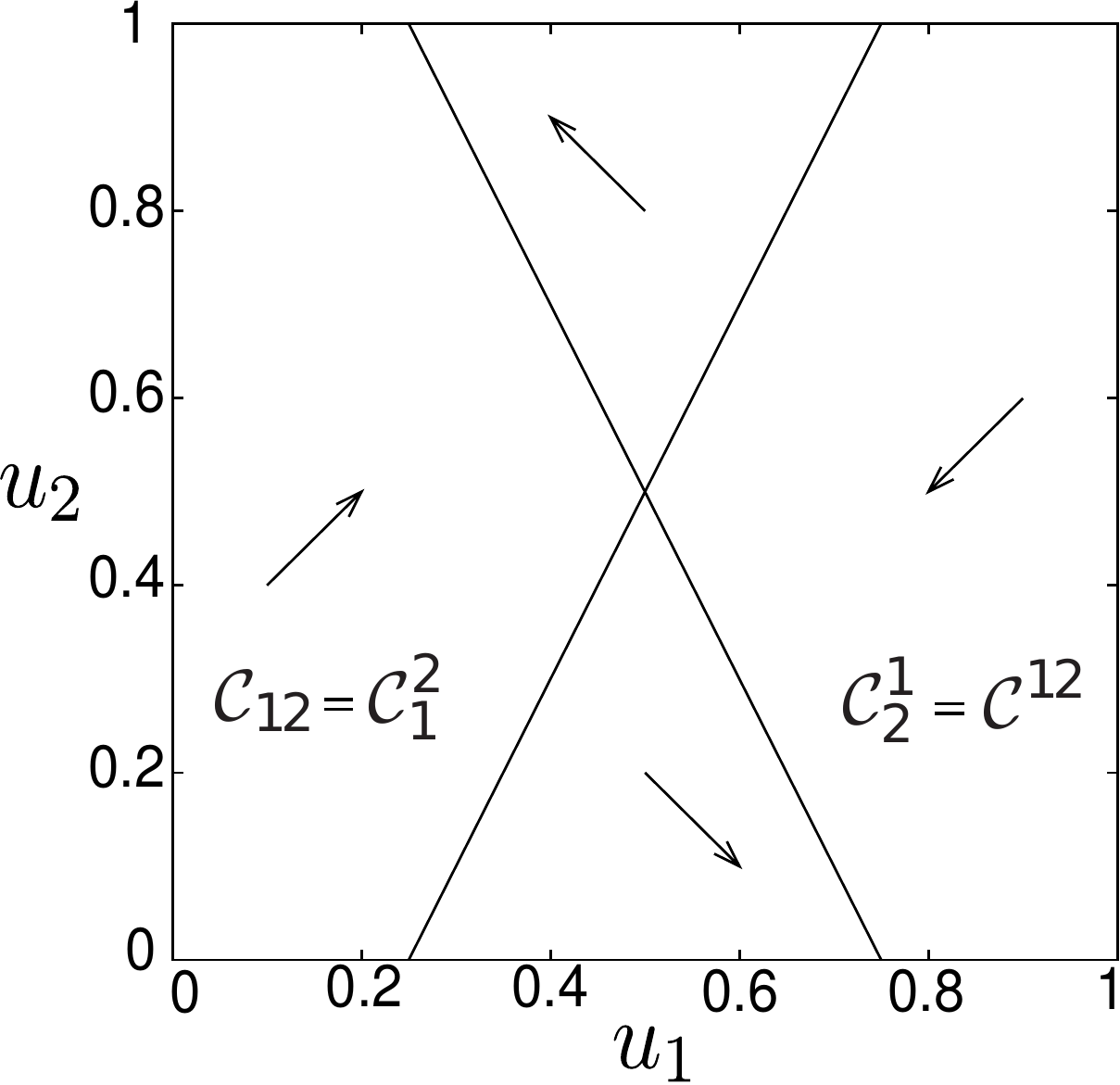}
\parbox{.9\textwidth}{
\caption[]{\footnotesize
Chambers and signs of vector field for the linear system given by $W=\left[\begin{array}{rr} -2 & -1 \\ -2 & 1 \end{array}\right]$ and $b=\left[\begin{array}{r} 1.5 \\ 0.5 \end{array}\right]$.
\label{fig:twoDimension_counterexample_regular}}
}
\end{center}
\end{figure}

\section{Discussion}\label{Discussion}

Models of biological systems are frequently nonlinear and difficult to analyze mathematically. In addition, accurate models 
frequently contain numerous parameters whose exact values are not known. Thus, studying which parameters have a large impact on dynamics, and how a  model can be simplified, is crucial in finding the features of biological systems that determine their behavior and responses. Reduction techniques that preserve key dynamical properties are essential in this endeavor.

We studied a special class of non-linear differential equation  models of biological networks where interactions between nodes are described using Hill functions. When the Michaelis-Menten constants are sufficiently small, the behavior of the  system is captured by an approximate piecewise linear system and a Boolean Network. 
In this case the domain of the full system naturally decomposes into nested hypercubes.  These hypercubes define subdomains within which a solvable linear--algebraic system approximates the original system. The Boolean Network is obtained from a decomposition of the domain into chambers and describes how solutions evolve between them.

The proposed reductions have a  number of advantages: The piecewise linear approximation is not only  easier to solve than the original system analytically, but also numerically (the original system becomes stiff for small $J$). When  one is interested in qualitative behavior such as steady state analysis, the Boolean approximation can be very useful, especially when the dimension of the system is large. Also, the Boolean framework has been used to model many biological systems where it is assumed that interactions are switch-like and variables can be discretized. It is therefore important to know when, and in what sense such reduced systems can be justified, and in particular, what dynamical properties of the full system are capture by a reduction. 

Although the case of large exponent $n$ in the Hill function  has been studied in the past [\cite{GlassKauffman1973,Snoussi1989, thomasbook,mendoza2006, DavitichBornholdt2008, Wittmann2009, Franke2010, Veliz-CubaArthurHochstetlerKlompsKorpi2012}], the case of small $J$ has been studied only recently and heuristically [\cite{DavitichBornholdt2008}]. In this manuscript we have shown that the PL and the BN approximations are also valid for the case of small $J$, and have given explicit formulas for their computation. The BN approximation preserves steady state behavior and under further restrictions, it can also be used to  infer the basins of attractions and oscillations in the original system.  Note that the Boolean functions in the Boolean approximation are threshold functions, as used in earlier models [\cite{Li_cc_2004,DavitichBornholdt2008b}].  Our results show that such BN scan indeed appear when  approximating more detailed models, such as those described by Eq.~\eqref{eqn:ProblemEquation}. 
In summary, our results for the limit $J\rightarrow 0$ complement previous results for the limit $n\rightarrow \infty$,  providing a useful framework for reducing nonlinear systems to PL systems and BNs.

A potential limitation in our arguments is that we have an approximation 
valid only in an asymptotic limit.  It is unknown when and how the approximation breaks down.
However, the approximation is still valid as $J$ increases until we reach a bifurcation point, which can happen for relative large $J$ or for values of $J$ that are biologically relevant. Also, we have not provided a systematic relationship between the thickness of the boundary, $\d$, and the Michaelis-Menten constant, $J$. Numerical tests suggest that the relationship is between $J = \mathcal{O}(\d)$ and $J = \mathcal{O}(\d^2)$. Another limitation of our analysis is that, in the general case, it is not known when the transitions or cycles in a BN will correspond to similar transitions or cycles in the original ODE.

\section*{Acknowledgment}

We thank Matthew Bennett for help in preparing the manuscript. This work was funded by the NIH, through the joint NSF/NIGMS Mathematical Biology Program Grant No. R01GM104974.


\bibliographystyle{plainnat}

\section{Appendix}

\subsection{Motivation of Eq.~\eqref{eqn:ProblemEquation}}
Here we present a heuristic justification of the use of Eq.~\eqref{eqn:ProblemEquation}. The ideas follow those presented in [\cite{GoldbeterKoshland1981, Goldbeter1991, Novak1998, NovakPatakiCilibertoTyson2001, NovakPatakiCilibertoTyson2003,  Aguda2006}]. As mentioned in the Introduction, this is only heuristic in general.

Consider a protein that can exist in an unmodified form, $W,$ and a modified form, $W^*,$ where the conversion between the two forms is catalyzed by two enzymes, $E_1$ and $E_2$. That is, consider the reactions
\[
W+E_1 \mathop{\rightleftarrows}_{k_{-1}}^{k_1} WE_1 \mathop{\rightarrow}^{p_1} W^*+E_1,
\]
\[
W^*+E_2 \mathop{\rightleftarrows}_{k_{-2}}^{k_2} W^*E_2 \mathop{\rightarrow}^{p_2} W+E_2.
\]
Then, using quasi-steady-state assumptions  one can obtain the equation

\[
\frac{dW^*}{dt}=A\frac{L -W^*}{K_1+L-W^*}-I\frac{W^*}{K_2+W*},
\]
where $A$, $I$, $L$, $K_1$, $K_2$ depend on $k_1$, $k_{-1}$, $p_1$, $k_2$, $k_{-2}$, $p_2$, $[E_1]$, $[WE_1]$, $[E_2]$, and  $[W^*E_2]$ [\cite{GoldbeterKoshland1981}]. After rescaling by $L$ we obtain Eq.~\eqref{eqn:ProblemEquation}.

Now, consider a system with $N$ species (e.g. proteins) and assume that $u_i(t)$ and $v_i(t)$ represent the  concentration of species $i$ at time $t$ in its active and inactive form, respectively. Furthermore, suppose that the total concentration of each species is constant and that the difference between decay and production is negligible (so that $u_i(t)+v_i(t)$ is constant). That is, \[u_i(t)+v_i(t)=L_i,\]
 where $L_i$ does not depend on time, and 
 \[\frac{du_i}{dt}=\textrm{rate of activation} - \textrm{rate of inhibition}  .\]
  Then, using Michaelis-Menten kinetics, the rate of activation of this species can be modeled by 
  \[\textrm{rate of activation}= A_i\frac{v_i}{K_{i}^A+v_i}=A_i\frac{L_i -u_i}{K_{i}^A+L_i-u_i},\]
 where the maximal rate, $A_i=A_i(u)$, is a function of the different  species in the network. Similarly, modeling the inhibition of the species using Michaelis-Menten kinetics, we obtain
  \[\textrm{rate of inhibition}= I_i\frac{u_i}{K_{i}^A+u_i}.\]
  
  Thus, we obtain 
  \[\frac{du_i}{dt}=A_i\frac{L_i -u_i}{K_{i}^A+L_i-u_i}-I_i\frac{u_i}{K_{i}^A+u_i}.\]

Now, we rescale $u_i\rightarrow L_iu_i$, $A_i\rightarrow L_iA_i$, $I_i\rightarrow L_iI_i$ and obtain 
  \[\frac{du_i}{dt}=A_i\frac{1 -u_i}{\frac{K_{i}^A}{L_i}+1-u_i}-I_i\frac{u_i}{\frac{K_{i}^A}{L_i}+u_i}.\]
  
  Hence, by denoting $J_i^A:=\frac{K_{i}^A}{L_i}$ and $J_i^A:=\frac{K_{i}^A}{L_i}$, we obtain the system given in Eq.~\eqref{eqn:ProblemEquation}. Also, $J_i^A$ and $J_i^I$ small means that the  dissociation constants ($K_i^A, K_i^I$) are much smaller than the total concentration of species $i$; that is, $J_i^A, J_i^I\ll 1$ if and only if $K_i^A, K_i^I\ll L_i$. Note that the initial conditions now satisfy $u_i(0)
\in [0,1]$ for all $i$.

\subsection{Behavior of $A\frac{1-x}{J+1-x}-I\frac{x}{J+x}$ as $J\rightarrow 0$}

Consider the one-dimensional system 
\begin{equation}\label{MMintro}
 \frac{dx}{dt} = A \frac{1-x}{J+1-x} - I \frac{x}{J+x}.
\end{equation}

Fig.~\ref{fig:MMintro} shows the graph of the right hand side of Eq.~(\ref{MMintro}) for the fixed values  $A = 1, I = 0.5$ and three different values of $J$. Note that as $J$ becomes smaller, the graph gets flatter in (0,1). Then, for $J$ small, we can approximate Eq.~(\ref{MMintro}) in the interior of the region $[0,1]$ by the linear ODE 
$$\frac{dx}{dt} = A-I.$$
For $x\sim J$, we can approximate Eq.~(\ref{MMintro})  by the ODE 
$$\frac{dx}{dt} = A-I\frac{x}{J+x}.$$
And for $x\sim 1-J$, we can approximate Eq.~(\ref{MMintro})  by the ODE 
$$\frac{dx}{dt} = A\frac{1-x}{J+1-x}-I.$$
\begin{figure}[h]
\begin{center}
\includegraphics[scale = .5]{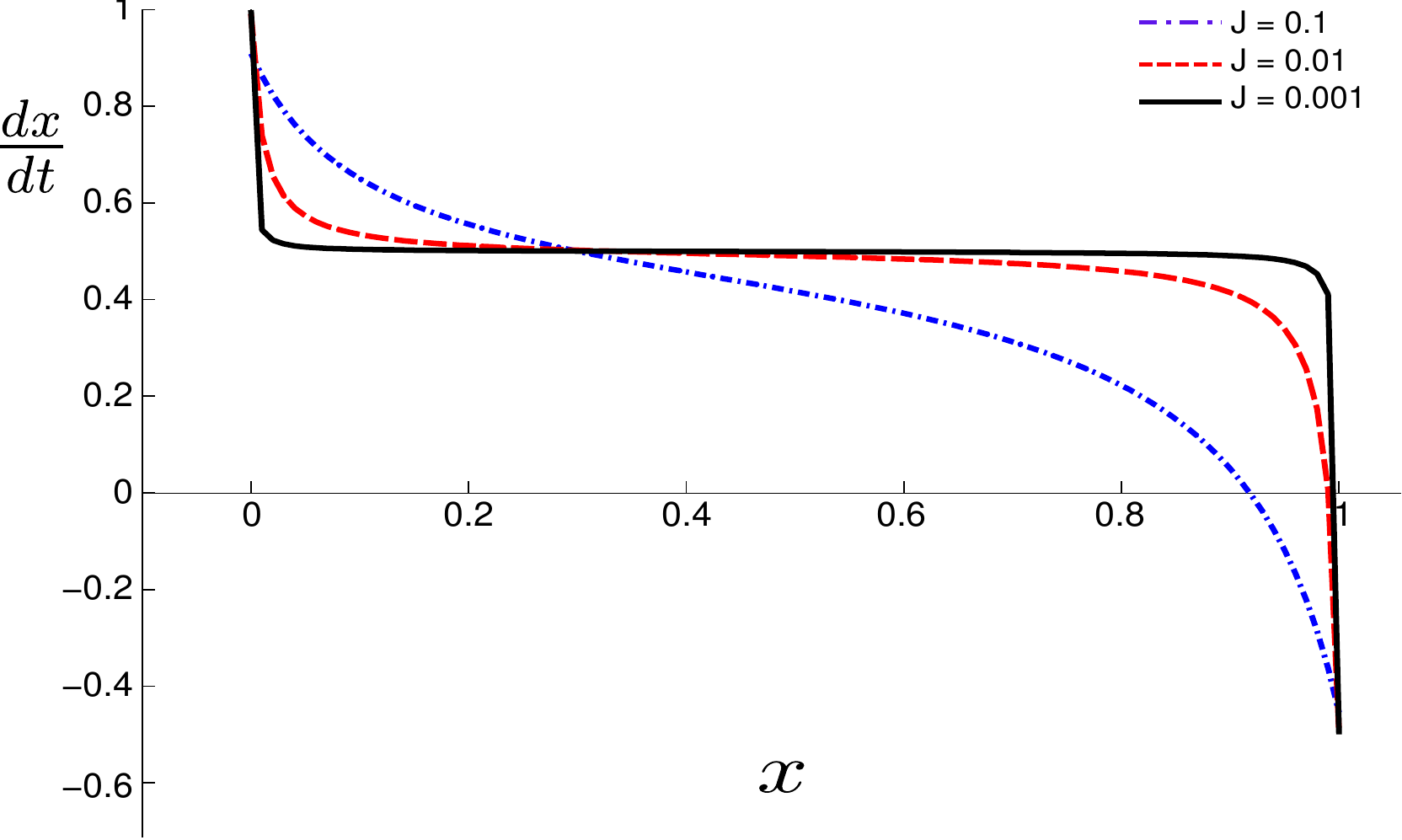}
\parbox{.9\textwidth}{
\caption{\footnotesize {Plots of right hand side of Eq.~(\ref{MMintro})} for three different values of $J$, as functions of $x$. Other parameters: $A = 1, I = .5$. This figure suggests that differential equations of the form Eq.~(\ref{MMintro}) can be approximated by linear ODEs in the interior of the domain.
}\label{fig:MMintro}}
\end{center}
\end{figure}

For the values $A=1$, $I=0.5$ we obtain the following approximations.\\
$\frac{dx}{dt} = 0.5$, if $J\ll x \ll 1-J$\\
$\frac{dx}{dt} = 1-0.5\frac{x}{J+x}$, if $x\sim J$\\
$\frac{dx}{dt} = \frac{1-x}{J+1-x}-0.5$, if $x\sim 1-J$

Note that there is an asymptotically stable steady state close to 1. Intuitively, for $J$ small, solutions that start in the region $x\sim J$ quickly reach the region $J\ll x \ll 1$, which behaves like a linear system. Then, solutions increase almost linearly (with slope 0.5) until they enter the region $x\sim 1-J$ where they will approach the steady state (see Fig.~\ref{fig:1D}).

\begin{figure}[h]
\begin{center}
\includegraphics[scale = .8]{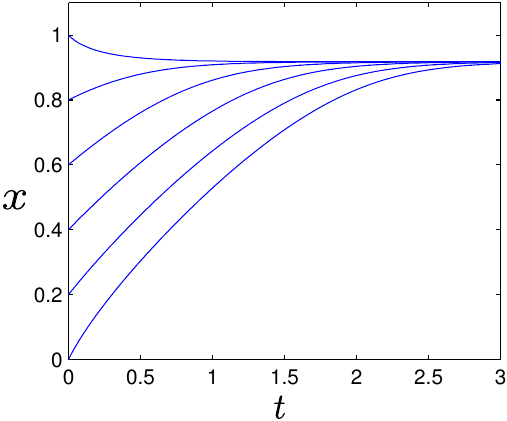}\quad
\includegraphics[scale = .8]{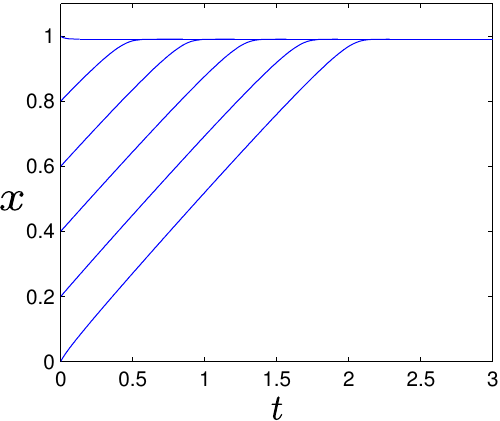}\quad
\includegraphics[scale = .8]{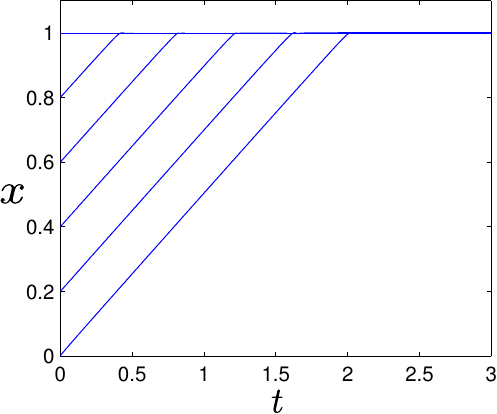}
\parbox{.9\textwidth}{
\caption{\footnotesize Solutions of  Eq.~(\ref{MMintro}) for three different values of $J$ (left: $J=0.1$, center: $J=0.01$, right: $J=0.001$). Other parameters: $A = 1, I = .5$. 
}\label{fig:1D}}
\end{center}
\end{figure}

We see that in the limit $J\rightarrow 0$ we obtain the solutions\\
$$x(t)=x(0)+0.5t \textrm{ for } t\in\left[0,\frac{1-x(0)}{0.5}\right]  \textrm{ and } x(t)=1 \textrm{ for } t\in\left[\frac{1-x(0)}{0.5},\infty\right),$$
where $x(0)\in[0,1]$.

\subsection{Proof of Theorem \ref{thm:one2one}}
The main idea in the proof is to use the fact that for $u_i\neq 0, 1$, the right-hand side of Eq.~\ref{eqn:ProblemEquation} converges to $Wu+b$. More precisely, the convergence is  uniform on compact subsets of $(0,1)^N$; so that  we have compact convergence.

Also, given a steady state of  Eq.~\eqref{eqn:ProblemEquation}, we can solve for $u_i$ and obtain $u_i=\Gamma^J_i(u)$ ($\Gamma^J_i$ will be defined later). The proofs also use the fact that as $J\rightarrow 0$, $\Gamma^J=(\Gamma^J_1,\ldots,\Gamma^J_N)$ converges uniformly to the function $u\mapsto H(Wu+b)$ on compact subsets of each chamber. That is, we also have compact convergence of $\Gamma^J$.

To prove Theorem \ref{thm:one2one} we need the following definitions and lemmas.

A point $u^*\in[0,1]^N$ will be a steady state of the ODE in Eq.~(\ref{eqn:ProblemEquation}) if and only if 
\begin{equation}\label{eqn:ss_mainODE}
A_i(u^*) \frac{1-u^*_i}{J+1-u^*_i}-I_i(u^*) \frac{u^*_i}{J+u^*_i}=0
\end{equation}
 for all $i$. Solving the corresponding quadratic equation for $u^*_i$ we obtain the solutions $u^*_i=1/2$ if $A_i(u^*)= I_i(u^*)$; and 
 
\begin{equation}\label{eqn:ss_quadformula}
u^*_i=\frac{(A_i(u^*)-I_i(u^*)-A_i(u^*)J-I_i(u^*)J)\pm\sqrt{\Delta_i(u^*)}}{2(A_i(u^*)-I_i(u^*))}
\end{equation}
if $A_i(u^*)\neq I_i(u^*)$, where $\Delta_i(u^*)$ is the discriminant of the quadratic equation, given by
\[\Delta_i(u):=(A_i(u)-I_i(u)-A_i(u)J-I_i(u)J)^2+4A_i(u)J(A_i(u)-I_i(u)).\]

The following lemma states that up to a set of small measure and for $J$ small all steady states of the ODE in Eq.~(\ref{eqn:ProblemEquation}) are given by the fixed points of the function $\Gamma^J=(\Gamma^J_1,\ldots,\Gamma^J_N)$ defined by

\begin{equation}\label{eqn:gamma}
\Gamma_i^J(u):=
\frac{(A_i(u)-I_i(u)-A_i(u)J-I_i(u)J)+\sqrt{\Delta_i(u)}}{2(A_i(u)-I_i(u))}.
\end{equation}

\begin{lemma}\label{lemma:well_defined}
For any compact subset $K$ of $\cup_{\mathcal{C}\in\Omega}\mathcal{C}$, there is an $\epsilon_K>0$ such that for all  $0<J<\epsilon_K$ the function $\Gamma^J$ is well-defined (as a real-valued function) on $K$, and
$u^*$ is a steady state in $K$ of the ODE in Eq.~(\ref{eqn:ProblemEquation}) if and only if $\Gamma^J(u^*)=u^*$.
\end{lemma}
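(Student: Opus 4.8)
The plan is to reduce the steady-state problem for Eq.~\eqref{eqn:ProblemEquation} to a coordinatewise system of quadratic equations and then show that on the compact set $K$ exactly one root of each quadratic can lie in $[0,1]$, namely the one selected by $\Gamma^J_i$. First I would record that, since $K$ is compact and disjoint from $\bigcup_i\{u:\sum_j w_{ij}u_j+b_i=0\}$, and since $A_i(u)-I_i(u)=\sum_j w_{ij}u_j+b_i$, there is a constant $\rho_K>0$ with $|A_i(u)-I_i(u)|\ge\rho_K$ for every $u\in K$ and every $i$; moreover $A_i$ and $I_i$ are bounded on $K$. A short computation rewrites the discriminant as $\Delta_i(u)=(A_i(u)-I_i(u))^2(1+2J)+(A_i(u)+I_i(u))^2J^2$, so $\Delta_i(u)\ge\rho_K^2>0$ on $K$; since the denominator $2(A_i(u)-I_i(u))$ is also bounded away from $0$ there, this shows $\Gamma^J=(\Gamma^J_1,\dots,\Gamma^J_N)$ is well-defined, real-valued and continuous on $K$. (In fact this holds for every $J>0$, so any $\epsilon_K$ works for the well-definedness clause, and the smallness of $J$ will not be needed for the equivalence either.)

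Next I would make the quadratic reduction precise. For $u^*\in[0,1]^N$ and $J>0$ the denominators $J+u^*_i$ and $J+1-u^*_i$ are strictly positive, so Eq.~\eqref{eqn:ss_mainODE} at coordinate $i$ is equivalent, after clearing denominators, to $q_i(u^*_i;u^*)=0$, where $q_i(t;u):=(A_i(u)-I_i(u))t^2-(A_i(u)-I_i(u)-A_i(u)J-I_i(u)J)t-A_i(u)J$. By the quadratic formula the two roots of $q_i(\cdot;u)$ are exactly the two expressions in Eq.~\eqref{eqn:ss_quadformula}, the one with the $+$ sign being $\Gamma^J_i(u)$. The ``if'' direction is then immediate: if $u^*\in K$ and $\Gamma^J(u^*)=u^*$, then each $u^*_i=\Gamma^J_i(u^*)$ is a root of $q_i(\cdot;u^*)$ and lies in $[0,1]$, so running the equivalence backwards shows $u^*$ satisfies Eq.~\eqref{eqn:ss_mainODE}, i.e. is a steady state.

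For the ``only if'' direction — the one delicate point — I would show that for $u^*\in K$ the root $r^-_i(u^*)$ carrying the $-$ sign in Eq.~\eqref{eqn:ss_quadformula} never lies in $[0,1]$, so a steady state $u^*\in K$ is forced to satisfy $u^*_i=\Gamma^J_i(u^*)$ for all $i$. Write $c=A_i(u^*)-I_i(u^*)\ne0$. Using $A_i,I_i\ge0$: when $c>0$ one has $A_i(u^*)>0$, the product of the two roots equals $-A_i(u^*)J/c<0$, and since $r^+_i(u^*)$ is the larger root this forces $r^-_i(u^*)<0<r^+_i(u^*)$; when $c<0$ one has $I_i(u^*)>0$, and since $q_i(1;u^*)=I_i(u^*)J>0$ while the leading coefficient $c$ is negative, the larger root $r^-_i(u^*)$ must exceed $1$. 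Either way $r^-_i(u^*)\notin[0,1]$, and the equivalence follows. The main obstacle is precisely this last step: deciding which of the two algebraic roots is physically admissible. Everything hinges on the sign of $A_i-I_i$ (constant on each chamber, hence of one sign throughout $K\cap\mathcal{C}$) together with the non-negativity of $A_i$ and $I_i$; one must also keep an eye on the degenerate cases $A_i(u^*)=0$ and $I_i(u^*)=0$, but these are automatically ruled out in the relevant sign regime (e.g.\ $c>0$ forces $A_i(u^*)>0$), so they cause no real trouble. The remaining ingredients — the discriminant identity, reversibility of clearing denominators, and continuity of $\Gamma^J$ on $K$ — are routine.
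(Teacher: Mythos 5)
Your proof is correct, and its overall strategy coincides with the paper's: reduce the steady-state condition to the coordinatewise quadratic, observe that on $K$ the denominator $2(A_i-I_i)$ and the discriminant are bounded away from zero, and then argue that only the root selected by $\Gamma^J_i$ can lie in $[0,1]$. Two points of your write-up genuinely improve on the paper. First, for well-definedness the paper argues asymptotically: $\Delta_i$ converges uniformly on $K$ to $(A_i-I_i)^2\ge r^2$ as $J\to 0$, so $\Delta_i>0$ for $J$ small; your exact identity $\Delta_i=(A_i-I_i)^2(1+2J)+(A_i+I_i)^2J^2$ (which I checked) gives $\Delta_i\ge\rho_K^2$ for \emph{every} $J>0$, so the smallness of $J$ is not needed for this clause at all. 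Second, the paper disposes of the root selection with the bare assertion ``only $\Gamma^J_i(u^*)$ is in $[0,1]$''; your case analysis --- product of roots $-A_i J/(A_i-I_i)<0$ when $A_i-I_i>0$, and $q_i(1;u^*)=I_iJ>0$ with negative leading coefficient forcing the other root above $1$ when $A_i-I_i<0$ --- is exactly the verification that assertion requires, including the observation that the sign condition automatically rules out the degenerate subcases $A_i=0$ and $I_i=0$. So your argument is a tightened version of the paper's proof rather than a different route; what it buys is a uniform-in-$J$ well-definedness statement and an explicit justification of the one step the paper leaves to the reader.
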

\begin{proof}
Since the denominator is not 0 on $K$, we need to show that $\Delta_i(u)$ is non negative on $K$ for $J$ small.

Since $K$ is compact and $A_i(u)-I_i(u)=\sum_{j=1}^N w_{ij} u_j +b_i\neq 0$ for all $i=1,\ldots,N$ and for all $u\in K$, there is $r>0$ such that $|A_i(u)-I_i(u)|=|\sum_{j=1}^N w_{ij} u_j +b_i|\geq r$ on $K$. Then, since $(A_i(u)-I_i(u)-A_i(u)J-I_i(u)J)^2+4A_i(u)J(A_i(u)-I_i(u))$ converges uniformly as $J \rightarrow 0$ to $(A_i(u)-I_i(u))^2\geq r^2$ on $K$, there is $\epsilon_K>0$ such that for all $0<J<\epsilon_K$ the function $(A_i(u)-I_i(u)-A_i(u)J-I_i(u)J)^2+4A_i(u)J(A_i(u)-I_i(u))$ is positive on $K$ for all $i$. Thus, $\Gamma^J$ is well-defined for all $0<J<\epsilon_K$. 

If $u^*\in K$ and $\Gamma^J(u^*)=u^*$, then $u^*$ satisfies Eq.~(\ref{eqn:ss_quadformula}), and hence it is a steady state of the ODE in Eq.~(\ref{eqn:ProblemEquation}). Also, if $u^*$ is a steady state of the ODE in Eq.~(\ref{eqn:ProblemEquation}), then $u^*$ satisfies Eq.~(\ref{eqn:ss_quadformula}). However, only $\Gamma_i^J(u^*)$ is in $[0,1]$ and hence $u^*=\Gamma^J(u^*)$.
\end{proof}

It is important to notice that if $A_i(u)-I_i(u)>0$ then $\Gamma_i^J(u)$ (which is well-defined for $J$ small) converges to 1 and if $A_i(u)-I_i(u)<0$ then $\Gamma_i^J(u)$ converges to 0  as $J\rightarrow 0$. Hence, $\Gamma^J(u)$ converges pointwise to $H(Wu+b)$ on $\cup_{\mathcal{C}\in\Omega}\mathcal{C}$ as $J\rightarrow 0$. The next lemma states that for any compact subset of $\cup_{\mathcal{C}\in\Omega}\mathcal{C}$ we have uniform convergence and that the derivative of this function converges uniformly to zero.

\begin{lemma}\label{lemma:conv2H}
If $K$ is a compact subset of $\cup_{\mathcal{C}\in\Omega}\mathcal{C}$, then
\begin{itemize}
\item The function $\Gamma^J$ converges uniformly to the function $u\mapsto H(Wu+b)$ on $K$ as $J\rightarrow 0$. In particular, since $H(Wu+b)$ is constant in each chamber, we get that for any chamber $\mathcal{C}\in\cup_{\mathcal{C}\in\Omega}\mathcal{C}$, $\Gamma^J$ converges uniformly to the constant function $H(Wv+b)$ on $K\cap \mathcal{C}$ for any fixed $v\in C$.  Also, $\Gamma^J$ converges uniformly to the constant function $H(Wx+b)$ on $K\cap \mathcal{C}(x)$ for any $x\in \{0,1\}^N$ 
\item The Jacobian matrix $D \Gamma^J$ converges uniformly to zero on $K$ as $J\rightarrow 0$.  
\end{itemize} 
\end{lemma}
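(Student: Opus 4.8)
The plan is to drive everything from the closed form of $\Gamma_i^J$ together with a uniform lower bound on $|A_i(u)-I_i(u)|$ that comes from compactness of $K$. Since $K$ is a compact subset of the open set $\cup_{\mathcal{C}\in\Omega}\mathcal{C}=[0,1]^N\setminus\cup_{i=1}^N\{u:\sum_j w_{ij}u_j+b_i=0\}$, there is an $r>0$ with $|A_i(u)-I_i(u)|=|\sum_j w_{ij}u_j+b_i|\ge r$ on $K$ for every $i$. Moreover $A_i,I_i$ are affine, so they and their (constant) gradients are bounded on $[0,1]^N$, say by a single constant $M$; in particular $|A_i(u)-I_i(u)|\le 2M$ and $|A_i(u)+I_i(u)|\le 2M$ on $K$. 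A one-line computation collapses the discriminant to $\Delta_i(u)=(A_i(u)-I_i(u))^2(1+2J)+(A_i(u)+I_i(u))^2J^2$, so $\Delta_i(u)\ge r^2$ on $K$ for every $J\ge 0$; hence $\Gamma^J$ is well defined and (real-)analytic there (which in particular re-establishes the well-definedness asserted in Lemma~\ref{lemma:well_defined} on $K$).

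For the first bullet I would estimate the defect $\Gamma_i^J(u)-H(A_i(u)-I_i(u))$ directly. Writing $d=A_i(u)-I_i(u)$ and $s=A_i(u)+I_i(u)$ and using $2d\,H(d)=d+|d|$ (valid since $d\ne 0$ on $K$), the formula $\Gamma_i^J=\dfrac{d-sJ+\sqrt{\Delta_i}}{2d}$ gives
\[
\Gamma_i^J(u)-H\bigl(A_i(u)-I_i(u)\bigr)=\frac{\sqrt{\Delta_i(u)}-|d|-sJ}{2d}.
\]
Now $\big|\sqrt{\Delta_i(u)}-|d|\big|=\dfrac{|\Delta_i(u)-d^2|}{\sqrt{\Delta_i(u)}+|d|}\le\dfrac{2d^2J+s^2J^2}{r}$, which is $O(J)$ uniformly on $K$, while $|sJ|\le 2MJ$ and $|2d|\ge 2r$; hence $|\Gamma_i^J(u)-H(A_i(u)-I_i(u))|\le CJ$ on $K$ for a constant $C=C(W,b,K)$ and all $0<J<1$. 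Since $u\mapsto H(Wu+b)$ is constant on each chamber, this is exactly uniform convergence of $\Gamma^J$ to $u\mapsto H(Wu+b)$ on $K$; restricting to $K\cap\mathcal{C}$ (resp.\ $K\cap\mathcal{C}(x)$) and using that there $H(Wu+b)\equiv H(Wv+b)$ (resp.\ $\equiv H(Wx+b)$) yields the two ``in particular'' statements for free.

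For the Jacobian the key point is not to differentiate the bound above but to use the structure $\Gamma_i^J(u)=g_J(A_i(u),I_i(u))$, where
\[
g_J(a,c)=\tfrac12-\frac{(a+c)J}{2(a-c)}+\frac{\operatorname{sgn}(a-c)}{2}\sqrt{1+2J+\frac{(a+c)^2J^2}{(a-c)^2}}.
\]
On the region $\{|a-c|\ge r\}$ the radicand is $\ge 1+2J$, so $g_J$ is jointly smooth in $(a,c,J)$ on each of the two slabs $\{a-c\ge r\}$ and $\{a-c\le -r\}$, and at $J=0$ it reduces to the constant $H(a-c)$ on each slab, so $\nabla_{(a,c)}g_0\equiv 0$ there. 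The image $\hat K_i=\{(A_i(u),I_i(u)):u\in K\}$ is a compact subset of $\{|a-c|\ge r\}$, so joint continuity of $\nabla_{(a,c)}g_J$ on the compact set $\hat K_i\times[0,1]$ together with $\nabla_{(a,c)}g_0=0$ gives $\sup_{\hat K_i}\|\nabla_{(a,c)}g_J\|\to 0$ as $J\to 0$ (alternatively one computes $\partial_a g_J,\partial_c g_J$ and reads off that each is $O(J)$ on $\hat K_i$). By the chain rule $\nabla\Gamma_i^J(u)=\partial_a g_J\,\nabla A_i+\partial_c g_J\,\nabla I_i$, and since $\nabla A_i,\nabla I_i$ are bounded this gives $\|D\Gamma^J\|\to 0$ uniformly on $K$ after taking the maximum over the finitely many $i$. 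The only genuinely delicate point in the whole argument is making sure the radicand (equivalently $\Delta_i$) stays bounded away from $0$ uniformly on $K$ including at $J=0$ — this is exactly what the lower bound $r$ provides — after which both bullets are routine.
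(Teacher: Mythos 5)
Your proof is correct and takes essentially the same approach as the paper, whose entire argument is the observation that the compactness-derived lower bound $|A_i(u)-I_i(u)|\ge r$ on $K$ ``is enough to guarantee uniform convergence''; you simply carry out the estimates the paper leaves implicit, obtaining an explicit $O(J)$ rate for the first bullet and a clean chain-rule argument for the second. The identity $\Delta_i(u)=(A_i(u)-I_i(u))^2(1+2J)+(A_i(u)+I_i(u))^2J^2$ is a nice touch that makes the uniform positivity of the discriminant on $K$ (for all $J\ge 0$, not just small $J$) and both convergence claims transparent.
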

\begin{proof}
Similar to the proof of Lemma \ref{lemma:well_defined}, there is a number $r>0$ such that $|A_i(u)-I_i(u)|\geq r$  for all $i$ and for all $u\in K$, which is enough to guarantee uniform convergence on $K$.
\end{proof}

We now prove Theorem \ref{thm:one2one}
\begin{proof}
In this proof, ``ODE'' will refer to the ODE in Eq.~(\ref{eqn:ProblemEquation}) and ``BN'' will refer to the BN in Eq.~(\ref{eqn:mainReduced_BN}). Even though the steady states of this ODE depend on $J$, for simplicity we will denote them by $u^*$ instead of $u^{*J}$. 

First, from Lemma \ref{lemma:well_defined} we consider $\epsilon_K>0$ such that for all $0<J<\epsilon_K$ the function $\Gamma^J$ is well-defined on $K$ and such that $u^*\in K$ is a steady state of the ODE if and only if $u^*$ is a fixed point of $\Gamma^J$. Second, from Lemma \ref{lemma:conv2H} we have that $\Gamma^J$ converges uniformly to the constant vector $h(x)=H(Wx+b)$ on $K\cap \mathcal{C}(x)$. Since $h(x)$ is in $\{0,1\}^N$, we have that $K\cap \mathcal{C}(h(x))$ contains a neighborhood of $h(x)$. It follows from uniform convergence  that for all $0<J<\epsilon_K$ (taking a smaller $\epsilon_K$ if necessary) , $\Gamma^J(K\cap \mathcal{C}(x))\subseteq K\cap \mathcal{C}(h(x))$. Also, on any chamber $\mathcal{C}$ that does not contain an element of $\{0,1\}^N$, $\Gamma^J$ converges uniformly to the constant vector $x:=H(Wv+b)$ for any fixed $v\in \mathcal{C}$ ($H(Wu+b)$ is constant on $C$); then, for all $0<J<\epsilon_K$ (taking a smaller $\epsilon_K$ if necessary)  $\Gamma^J(K\cap \mathcal{C})\subseteq K\cap \mathcal{C}(x)$. 
Also note that $K\cap \mathcal{C}$ is compact for any chamber $\mathcal{C}$.

Now, suppose $x^*$ is a steady state of the BN, that is, $h(x^*)=x^*$. Then, for all $0<J<\epsilon_K$ we obtain that $\Gamma^J(K\cap \mathcal{C}(x^*))\subseteq K\cap \mathcal{C}(h(x^*))=K\cap \mathcal{C}(x^*)$. Since we have a continuous function from a convex compact set to itself, $\Gamma^J$ has a fixed point  $\Gamma^J(u^*)=u^*\in K\cap \mathcal{C}(x^*)$. Then, $u^*\in K\cap \mathcal{C}(x^*)$ is a steady state of the ODE. Now suppose that the ODE has a steady state $u^*\in K$, and let $\mathcal{C}$ be the chamber that contains $u^*$ and  $x^*:=H(Wu^*+b)$. Since $u^*=\Gamma^J(u^*)\in\Gamma^J(K\cap \mathcal{C})\subseteq K\cap \mathcal{C}(x^*)$, we have that $u^*\in \mathcal{C}(x^*)$. Since $u^*$ and $x^*$ belong to the same chamber we also have that $H(Wx^*+b)=H(Wu^*+b)=x^*$; thus, $x^*$ is a steady state of the BN.

From Lemma \ref{lemma:conv2H} we can make the norm of $D\Gamma^J$  small so that $u^*$ is the unique fixed point of $\Gamma^J$ in $K\cap \mathcal{C}(x^*)$. Since $\Gamma^J$ converges uniformly to $H(Wx^*+b)=h(x^*)=x^*$ on $K\cap \mathcal{C}(x^*)$, we have that $u^*=\Gamma^J(u^*)$ converges to $x^*$. Finally, to prove that the steady state of the ODE is asymptotically stable, we will show that the Jacobian matrix of the ODE can be seen as a small perturbation of a matrix that has negative eigenvalues. We will use the alternative form of $\Gamma^J_i$:
\[\Gamma^J_i(u)=\frac{2A_i(u)J}{-A_i(u)+I_i(u)+A_i(u)J+I_i(u)J+\sqrt{\Delta_i(u)}}.\] 

Denote $f^J=(f_1^J,\ldots,f_N^J)$ where $f_i^J(u)=A_i(u) \frac{1-u_i}{J+1-u_i}-I_i(u) \frac{u_i}{J+u_i}$. We now compute $Df(u)$. For $i\neq j$ we have 
\[\frac{\partial f_i^J}{\partial u_j}=w_{ij}^+\frac{1-u_i}{J+1-u_i}-w_{ij}^- \frac{u_i}{J+u_i}.\] Also, 
\[\frac{\partial f_i^J}{\partial u_i}=w_{ii}^+\frac{1-u_i}{J+1-u_i}-w_{ii}^- \frac{u_i}{J+u_i}-(w_{ii}^+ u_i +b_i^+)\frac{J}{(J+1-u_i)^2}-(w_{ii}^- u_i +b_i^-)\frac{J}{(J+u_i)^2}.\]

Let $Z^J$ be the matrix given by $Z^J_{ij}=w_{ij}^+\frac{1-u_i}{J+1-u_i}-w_{ij}^- \frac{u_i}{J+u_i}$ and denote with $E^J$ the diagonal matrix with entries $E_{ii}^J=-(w_{ii}^+ u_i +b_i^+)\frac{J}{(J+1-u_i)^2}-(w_{ii}^- u_i +b_i^-)\frac{J}{(J+u_i)^2}$. Then,
\[Df^J(u)=Z^J+E^J\]
 where the entries of $Z^J$ are bounded  and $E^J$ is a diagonal matrix. We now will show that for any steady state of the ODE in $K$, $\lim_{J\rightarrow 0} E^J_{ii}=-\infty$ as $J\rightarrow 0$. After showing this, we can see $Df^J(u^*)$ as a small perturbation of a matrix that has negative eigenvalues. It follows (e.g. using the Gershgorin circle theorem) that the eigenvalues of $Df^J(u^*)$ have negative real part, and hence, $u^*$ is asymptotically stable.

We now show that $\lim_{J\rightarrow 0} E^J_{ii}=-\infty$ as $J\rightarrow 0$.
By computing $\frac{(\Gamma^J_i(u))^2}{J}$ and setting $J=0$, it follows that $\lim_{J\rightarrow 0}\frac{(\Gamma^J_i(u))^2}{J}=0$ when $A_i(u)-I_i(u)$ is negative. Similarly, we obtain that  $\lim_{J\rightarrow 0}\frac{(1-\Gamma^J_i(u))^2}{J}=0$ when $A_i(u)-I_i(u)$ is positive. From these two limits, it follows that if $|A_i(u)-I_i(u)|\neq 0$, then $\lim_{J\rightarrow 0}\frac{(J+\Gamma^J_i(u))^2}{J}=0$ or $\lim_{J\rightarrow 0}\frac{(J+1-\Gamma^J_i(u))^2}{J}=0$. Furthermore, since $A_i(u)-I_i(u)$ is uniformly nonzero on $K$, the convergence is uniform. 

If $u^*\in K$ is a steady state of the ODE, then  $u_i^*=\Gamma^J(u^*)$ and\\
 $\lim_{J\rightarrow 0}\left( -(w_{ii}^+ u^*_i +b_i^+)\frac{J}{(J+1-u^*_i)^2}-(w_{ii}^- u^*_i +b_i^-)\frac{J}{(J+u^*_i)^2}\right)=$\\
$\lim_{J\rightarrow 0} \left( -(w_{ii}^+ u^*_i +b_i^+)\frac{J}{(J+1-\Gamma^J_i(u^*))^2}-(w_{ii}^- u^*_i +b_i^-)\frac{J}{(J+\Gamma^J_i(u^*))^2}\right)=-\infty$. Note that uniform convergence is needed in the last step because $u^*$ depends on $J$.

\end{proof}

\subsection{Proof of Theorems \ref{thm:transition} and \ref{thm:trajectory} }

In the rest of this section, ``ODE'' will refer to the ODE in Eq.~(\ref{eqn:ProblemEquation}) and ``BN'' will refer to the BN in Eq.~(\ref{eqn:mainReduced_BN}). 

Notice that for any $x\in\{0,1\}^N$, $\mathcal{C}(x)= \{u\in[0,1]^N:H(Wu+b)=H(Wx+b)\}$. We now prove Theorem \ref{thm:transition}.

\begin{proof}
Let $y=h(x)$ and for simplicity in the notation, assume that $y=(0,\ldots,0)$. 

In the case $x=y$, we will show that $(0,\ldots,0)$ contains an invariant set for the original ODE. Since $x=(0,\ldots,0)$ and $h(x)=x$, we have that $\mathcal{C}(x)=\cap_{i=1}^N\{u\in[0,1]^N:\sum_{j=1}^N w_{ij} u_j +b_i<0\}$. We now consider a small hypercube of the form $K=[0,\delta]^N$ with $\delta$ small  so that $K\subseteq \mathcal{C}(x)$. We claim that for $J$ small  $K$ is invariant. Since we already showed that $[0,1]^N$ is invariant, it is enough to check that  if $u\in K$ with $u_i=\delta$, then  $f_i^J(u)\leq 0$. Since $K$ is compact and $\sum_{j=1}^N w_{ij} u_j +b_i<0$ for all $i$ and for all $u\in K$, there is $r>0$ such that $\sum_{j=1}^N w_{ij} u_j +b_i\leq -r$ for all $u\in K$. It follows that $A_i(u)-I_i(u) \frac{u_i}{J+u_i}$ converges uniformly to $A_i(u)-I_i(u)=\sum_{j=1}^N w_{ij} u_j +b_i\leq -r$ on  $\{u\in K:u_i=\delta\}$ as $J\rightarrow 0$. Also, $f_i^J(u)=A_i(u) \frac{1-u_i}{J+1-u_i}-I_i(u) \frac{u_i}{J+u_i}\leq A_i(u)-I_i(u) \frac{u_i}{J+u_i}$ on  $\{u\in K:u_i=\delta\}$. Thus, on  $\{u\in K:u_i=\delta\}$, $f_i^J$ is bounded above by a function that converges uniformly to a negative function.  Then, there is $\epsilon_K>0$ such that for all $0<J<\epsilon_K$, $f_i^J$ is negative on  $\{u\in K:u_i=\delta\}$. Then, $K$ is invariant.

In the case $x\neq y$, we assume for simplicity that $x=(1,0,\ldots,0)$ and $y=(0,0,\ldots,0)$. Then, since $h(x)=y$ and $h_1(y)=0$, we have the following
\[\sum_{j=1}^N w_{ij}u_j+b_i<0, \textrm{ for all } u\in \mathcal{C}(x), \textrm{ and } \sum_{j=1}^N w_{1j}u_j+b_i<0, \textrm{ for all } u\in \mathcal{C}(y).\]

In particular, $\sum_{j=1}^N w_{1j}u_j+b_i<0$ for all $u\in \mathcal{C}(x)\cup \mathcal{C}(y)$. This also means that the hyperplane that separates $\mathcal{C}(x)$ and $\mathcal{C}(y)$ is $\{u:\sum_{j=1}^N w_{kj}u_j+b_i=0\}$ for some  $k\neq 1$; then, the common face of $\mathcal{C}(x)$ and $\mathcal{C}(y)$ is given by 
\[\{u\in[0,1]^N: \sum_{j=1}^N w_{ij}u_j+b_i<0 \textrm{ for $i\neq k$ and $\sum_{j=1}^N w_{kj}u_j+b_k=0$}\}.\] 
 Now, for $r>0$ small, we define the set
\[L:=\{u\in[0,1]^N: \sum_{j=1}^N w_{ij}u_j+b_i<-r \textrm{ for $i\neq k$ and $\sum_{j=1}^N w_{kj}u_j+b_k=0$}\}\]
which will be a face of the neighborhood of $x$ that we are looking for (see Fig.~\ref{fig:transition_proof}).
We now project $L$ onto the $u_1=0$ plane (see Fig.~\ref{fig:transition_proof}); that is, define 
\[L_1:=\{(0,u_2,u_3,\ldots,u_N):(u_1,\ldots,u_N\in L \textrm{ for some } u\in L)\}\]
We use $L_1$ to ``generate'' a box parallel to the $u_1$ axis (see Fig.~\ref{fig:transition_proof}); namely, consider
\[B:=\{u\in[0,1]^N: (0,u_2,\ldots,u_N)\in L_1 \}.\]
Now, consider the neighborhood of $x$ given by \[K:=B\cap \mathcal{C}(x).\]
$K$ is a polytope such that $L$ is one of its faces.  Similar to the case $x=y$, there is $\epsilon_K>0$ such that for all $0<J<\epsilon_K$ we have that for any face of $K$ other than $L_1$ the vector field points inward. Also, the first coordinate of the vector field is negative on $K$. Thus, any solution with initial condition  in $K$, must exit $K$ through its face $L_1$ and then enter  $\mathcal{C}(y)$. That is, the ODE transitions from $K$ to $\mathcal{C}(y)$.
\end{proof}

\begin{figure}[h]
\begin{center}
\includegraphics[scale = .6]{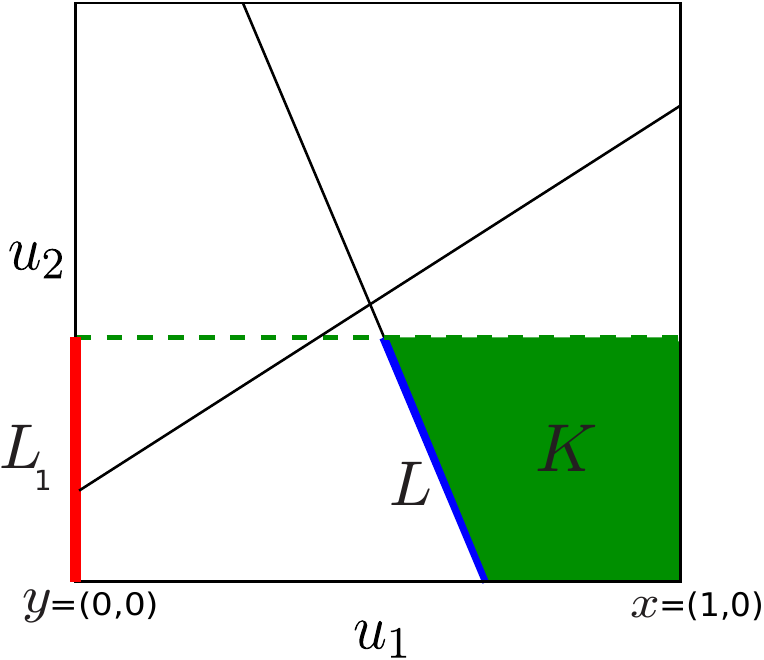}
\parbox{.9\textwidth}{
\caption[]{\footnotesize Sets $L$ (blue), $L_1$ (red), and $K$ (green) in the proof of Theorem \ref{thm:transition} for $N=2$.
\label{fig:transition_proof}}
}
\end{center}
\end{figure}

We prove Theorem \ref{thm:trajectory}.

\begin{proof}
We proceed as in the proof of Theorem \ref{thm:transition} and for simplicity we assume that $x=(1,0,\ldots,0)$ and $y=h(x)=(0,\ldots,0)$. Since $h(x)=y$ and $h_1(y)=0$, we have the following
\[\sum_{j=1}^N w_{ij}u_j+b_i<0, \textrm{ for all } u\in \mathcal{C}(x), \textrm{ and } \sum_{j=1}^N w_{1j}u_j+b_i<0, \textrm{ for all } u\in \mathcal{C}(y).\]

In particular, $\sum_{j=1}^N w_{1j}u_j+b_i<0$ for all $u\in \mathcal{C}(x)\cup \mathcal{C}(y)$. This also means that the hyperplane that separates $\mathcal{C}(x)$ and $\mathcal{C}(y)$ is $\{u:\sum_{j=1}^N w_{kj}u_j+b_i=0\}$ for some  $k\neq 1$; furthermore, since this hyperplane is parallel to the axes, $(w_{k1},w_{k2},\ldots,w_{kN})=(w_{k1},0,\ldots,0)$. Then, the hyperplane that separates $\mathcal{C}(x)$ and $\mathcal{C}(y)$ is $\{u: u_1=\frac{b_i}{-w_{k1}}\}$ and the common face of $\mathcal{C}(x)$ and $\mathcal{C}(y)$ is given by 
\[\{u\in[0,1]^N: \sum_{j=1}^N w_{ij}u_j+b_i<0 \textrm{ for $i\neq k$ and $u_1=\frac{b_i}{-w_{k1}}$}\}.\]

Now, let $K$ be a compact subset of $\mathcal{C}(x)$ and consider $r>0$ small such that\\ $K\subseteq K^0:=\{u\in [0,1]^N:\sum_{j=1}^N w_{ij}u_j+b_i\leq -r \textrm{ for $i\neq k$ and $u_1\geq\frac{b_i}{-w_{k1}}$} \}$ (see Fig.~\ref{fig:trajectory_proof}). Since the hyperplanes are parallel to the axis, $K^0$ is a box with faces parallel to the axes and $K^0$ also shares a face with $\mathcal{C}(y)$. Then, similar to the proof of Theorem \ref{thm:transition}, there is $\epsilon_K>0$ such that for all $0<J<\epsilon_K$ we have that at the faces of $K^0$ other than the shared with $\mathcal{C}(y)$ the vector field of the ODE points inward, and the first entry of the vector field is negative. Then, the ODE will transition from $K^0$ to $\mathcal{C}(y)=\mathcal{C}(h(x))$.

Now, let $K^1$ be a compact subset of $\mathcal{C}(h(x))$ such that $K^1$ intersects all solutions that start in $K$ (see Fig.~\ref{fig:trajectory_proof}). Then, for all $0<J<\epsilon_K$ (making $\epsilon_K$ smaller if necessary) the ODE transitions from $K^1$ to $\mathcal{C}(h^2(x))$. This also means that the ODE transitions from $K^0$ to $\mathcal{C}(h^2(x))$. The proof follows by induction.
\end{proof}

\begin{figure}[h]
\begin{center}
\includegraphics[scale = .6]{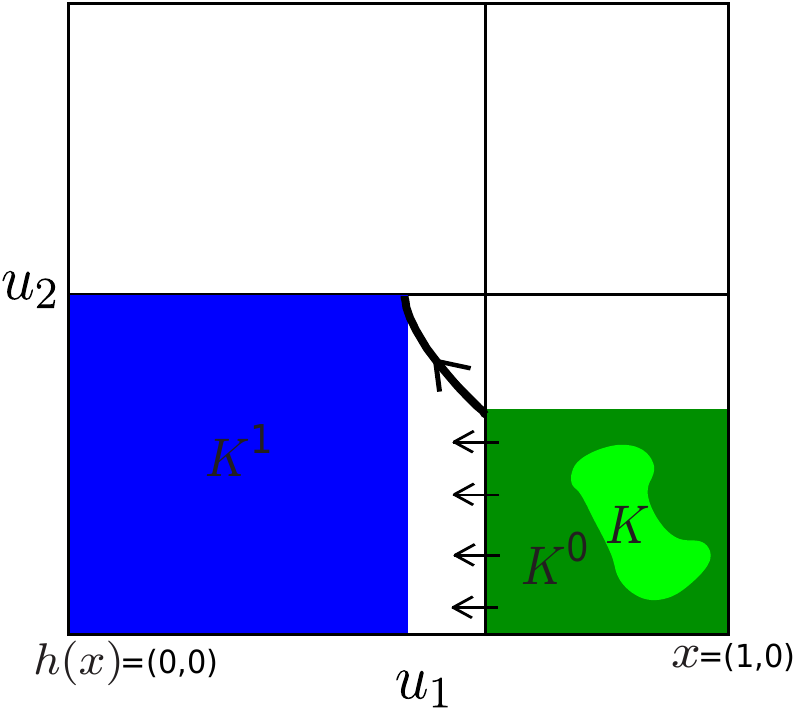}
\parbox{.9\textwidth}{
\caption[]{\footnotesize Sets $K$ (light green), $K^0$ (dark green and light green), and $K^1$ (blue) in the proof of Theorem \ref{thm:trajectory} for $N=2$.
\label{fig:trajectory_proof}}
}
\end{center}
\end{figure}

\end{document}